
\documentclass{article}


\usepackage{graphicx}
\usepackage{color}
\usepackage{amssymb}
\usepackage{amsthm}
\usepackage{amsmath}
\usepackage{mathrsfs}
\usepackage{mathtools}	
\usepackage{wasysym}
\usepackage{braket}		
\usepackage{enumerate}
\usepackage{calc}
\usepackage{ifthen}
\usepackage[normalem]{ulem}
\usepackage{geometry}
\usepackage{tikz}
\usepgflibrary{arrows}
\usepackage{graphicx}

\theoremstyle{plain}
\newtheorem*{theorem}{Theorem}	
\newtheorem{lemma}{Lemma} 
\newtheorem{corollary}{Corollary}[lemma]		
\newtheorem{proposition}{Proposition} 
\newtheorem*{claim}{Claim}

\newcommand{\xZ}{\mathbb{Z}}			
\newcommand{\xR}{\mathbb{R}}			
\newcommand{\xL}{\mathbb{L}}			
\newcommand{\xG}{\mathbb{G}}			
\newcommand{\xB}{\mathbb{B}}			
\newcommand{\xS}{\mathbb{S}}			
\newcommand{\IsDef}{\triangleq}			
\newcommand{\abs}[1]{
	\left\lvert#1\right\rvert%
}

\newcommand{\xd}{\mathrm{d}}			
\newcommand{\xe}{\mathrm{e}}			
\newcommand{\interior}{
	\operatorname{\mathrm{int}}%
}
\newcommand{\exterior}{
	\operatorname{\mathrm{ext}}%
}
\newcommand{\diam}{
	\operatorname{\mathrm{diam}}%
}
\newcommand{\xpartial}{\overline{\partial}}

\newcommand{\free}{{\rm free}}

\newcommand{\order}{{\rm ord}}
\newcommand{\disorder}{{\rm disord}}
\newcommand{\xorder}{{\rm o}}
\newcommand{\xdisorder}{{\rm d}}
\newcommand{\RC}{{\sf RC}}

\newcommand{\extern}{{\rm ext}}

\newcommand{\SMALL}{{\rm small}}
\newcommand{\critical}{{\rm c}}
\newcommand{\operc}{\xleftrightarrow{\,\xorder\,}}		


\setlength{\parskip}{.3ex plus 0.2ex minus 0.1ex}
\allowdisplaybreaks[3] 			

\makeatletter
\renewcommand\maketag@@@[1]{%
	\,\rlap{\hspace{\marginparsep}\m@th\normalfont\footnotesize#1}%
}
\makeatother

\makeatletter
\renewcommand{\@seccntformat}[1]{%
	\llap{\csname the#1\endcsname\hspace{\marginparsep}}%
}
\makeatother

\makeatletter
\renewcommand{\@biblabel}[1]{%
	\llap{[#1]\hspace{\marginparsep}}\!\!\!%
}
\makeatother

\begin{document}

\title{%
	\huge
	Potts model with invisible colours \smallskip\\
	\large Random-cluster representation and {P}irogov-{S}inai analysis\medskip
	\footnotetext{Last update:~\today}
}

\author{%
	Aernout~C.~D. van~Enter$^{1,}$\footnote{%
		Email: \texttt{A.C.D.van.Enter@rug.nl}
	}
	\and
	Giulio Iacobelli$^{1,}$\footnote{%
		Email: \texttt{dajegiu@gmail.com}
	}
	\and
	Siamak Taati$^{1,2,}$\footnote{%
		Email: \texttt{siamak.taati@gmail.com}
	}
	\bigskip\\
	\small $^1$Johann Bernoulli Institute, University of Groningen, The Netherlands\\
	\small $^2$Mathematical Institute, Utrecht University, The Netherlands\\
}

\date{}

\maketitle

\begin{abstract}
	We study a variant of the ferromagnetic Potts model,
	recently introduced by Tamura, Tanaka and Kawashima,
	consisting of a ferromagnetic interaction
	among $q$ ``visible'' colours along with
	the presence of $r$ non-interacting ``invisible'' colours.
	We introduce a random-cluster representation for the model,
	for which we prove the existence of a first-order transition
    for any $q>0$, as long as $r$ is large enough.	
	When $q>1$, the low-temperature regime
	displays a $q$-fold symmetry breaking.  	
	The proof involves a Pirogov-Sinai analysis
	applied to this random-cluster representation of the model.
	
	\medskip
	
	\noindent\textbf{Keywords}:
		{P}otts model with invisible colours, biased random-cluster model, phase transition, symmetry breaking
\end{abstract}

\tableofcontents

\section{Introduction}

Recently, in a series of papers~\cite{TamTanKaw10,TanTam10,TanTamKaw11},
Tamura, Tanaka and Kawashima introduced a variant of
the ferromagnetic Potts model to study the relation between
symmetry breaking and the order of the phase transition.
The model consists of a ferromagnetic Potts interaction
taking place between $q$ ``visible'' colours along with
the presence of $r$ ``invisible'' colours without any interaction.
They observed, through numerical simulations,
that in two dimensions with $q=2,3,4$
and $r$ large, the model undergoes a first-order phase transition
with $q$-fold symmetry breaking.
This is in contrast with the ordinary
two-dimensional $q$-colour Potts model with $q=2,3,4$,
in which the transition accompanied by the $q$-fold symmetry breaking
is known to be of second order~\cite{Bax82}.
This provides a simple example that a $q$-fold symmetry breaking in two dimensions
does not universally identify the order of the transition.

The transition in this model (as well as in the standard $q$-colour Potts model)
occurs from an \emph{ordered} state (which has a favoured direction among $q$ possibilities)
to a \emph{disordered} state (which has no favoured direction)
when the temperature is increased.
In the standard $q$-colour Potts model with $q$ small, this transition
is of \emph{second} order
(no latent heat at the transition point)
whereas in the Tamura-Tanaka-Kawashima version of the Potts model,
for the same values of $q$ but $r$ chosen sufficiently large,
the transition is of \emph{first} order
(the system absorbs heat during the transition, without changing its temperature).

For the standard $q$-colour Potts model, when $q$ is large enough,
there is a variety of different rigorous proofs
that the transition is of first order~\cite{KotShl82,BriKurLeb85,Mar86,LaaMesMirRuiShl91}
(see also~\cite{Kot94,Gri06}).
As announced	 in an earlier communication~\cite{EntIacTaa11},
in the present paper, we prove,
by minor adaptations of the proofs in~\cite{LaaMesMirRuiShl91,Kot94},
that when $q+r$ is large enough,
the Potts model with $q$ visible colours and $r$ invisible colours
undergoes a first-order phase transition.
The proof is based on an application of
the Pirogov-Sinai method to a random-cluster
representation of the model.


The phase transition could be better understood
if one thinks of a state of the system as
a possible resolution of the conflict between order and disorder.
The conflict should be resolved locally and in every region.
In an ordered region, the neighbouring sites tend to
take the same colour so as to minimize the energy, while
in a disordered region, the neighbouring sites take
their colours independently to maximize the entropy.
To establish the resolution of the order-disorder conflict,
one needs to take into account the disturbance
present at the interface between ordered and disordered regions (the \emph{contours}).


For the standard $q$-colour Potts model, the order-disorder
conflict is niftily depicted in the
Fortuin-Kasteleyn (a.k.a.~random-cluster)
representation of the model~\cite{ForKas72,Gri06,GeoHagMae00,Hag98}.
In this representation,
order is associated with the presence of bonds between
neighbouring sites and
disorder with the absence of bonds.
In the same spirit, we introduce a variant of
the Fortuin-Kasteleyn representation for the
Potts model with invisible colours.
The advantage of this new formulation is
that it admits a neat definition of
the interface between ordered and disordered regions.
Now, having two reference configurations
describing complete order and complete disorder ---
the one with every bond present, 
and the one with every bond absent 
---
as in~\cite{LaaMesMirRuiShl91},
we can apply the Pirogov-Sinai method~\cite{Kot94,Sin82,Zah84,BorImb89}.


In Section~\ref{sec:model}, we describe the model
and recall the formulation of first-order phase transition
in the Gibbsian setup.
Section~\ref{sec:rc} is dedicated to the introduction of a variant of
the random-cluster model and its connection
with the Potts model with invisible colours.
In Section~\ref{sec:contour:representation},
formal definitions for contours are provided, and it is shown
how to rewrite the partition functions of the model
in terms of contours.
These contour representations 
are then reduced, in Section~\ref{sec:contour:abstract},
to two abstract contour models, on which
standard techniques can be applied.
In Section~\ref{sec:damping},
starting from the two contour models,
we obtain two approximations
for the free energy of the Potts model with invisible colours.
If $q+r$ is large,
each of these approximations turns out to be accurate
in an interval of temperatures,
one whenever order prevails 
and the other when disorder is dominant.
The two intervals exhaust all the temperatures
and have a unique common point, which is
the transition point of the system.
Finally, the above two approximations
are used in Section~\ref{sec:main}
to prove a first-order transition
at the transition point.
The occurrence of the symmetry breaking at the same
transition point then follows, using
standard properties of the random-cluster representation,
which are reviewed in Appendix~\ref{apx:rc:properties}.

\paragraph{Acknowledgments.}
G.I. and S.T. thank NWO for support. We thank Roberto Fern\'andez for helpful conversations.



\section{Potts Model with Invisible Colours}
\label{sec:model}
\subsection{The Model}

Let $\xL$ denote the two-dimensional square \emph{lattice}, which
we think of as a graph $(\xS,\xB)$,
where $\xS$ denotes the set of sites (identified by $\xZ^2$)
and $\xB$ the set of nearest neighbour bonds.
In the \emph{$(q,r)$-Potts model},
each site $i\in \xS$ is in one of $(q+r)$ colours
$1,2,\ldots,q,q+1,\ldots,q+r$.
Therefore, a \emph{configuration} of the model is an assignment of
values from the set $\{1,2,\ldots,q,q+1,\ldots,q+r\}$ to
the sites in $\xS$.
The $(q,r)$-Potts model~\cite{TamTanKaw10,TanTam10,TanTamKaw11}
is expressed by the formal Hamiltonian
\begin{align}
\label{eq:hamiltonian:def}
	H(\sigma) &=
		-\sum_{\{i,j\}\in \xB}
			\delta(\sigma_i=\sigma_j\leq q) \;,
\end{align}
where $\delta(\sigma_i=\sigma_j\leq q)$ is $1$ if $\sigma_i=\sigma_j\leq q$ and $0$ otherwise.
Each pair of neighbouring sites that have the same colour $\alpha\leq q$
contributes with energy $-1$, while sites with colours $\alpha>q$
do not contribute to the energy.
The first $q$ colours are hence called the \emph{visible} colours,
and the rest the \emph{invisible} colours.
If there are no invisible colours (i.e, if $r=0$),
the model reduces to the ordinary Potts model with $q$ colours.
As in the ordinary $q$-colour Potts model, the $(q,r)$-Potts model
has precisely $q$ periodic ground-state configurations,
in which every site has the same visible colour.

Following the usual approach,
we describe the system in thermal equilibrium
via probability distributions on the space of all
possible configurations of the model.
The \emph{Boltzmann distribution} on a finite volume
$\Lambda\subseteq\xL$ with boundary condition $\omega$
at inverse temperature $\beta$ is defined by
\begin{align}
	\mu_{\beta,\Lambda}^\omega(\sigma_\Lambda) &=
		\frac{1}{Z_\beta^\omega(\Lambda)}
			\xe^{-\beta H_\Lambda(\sigma_\Lambda\omega_{\Lambda^\complement})} \;,
\end{align}
where $H_\Lambda(\sigma_\Lambda\omega_{\Lambda^\complement})$
consists of a finite number of terms in the formal Hamiltonian~(\ref{eq:hamiltonian:def})
corresponding to the energy of $\sigma_\Lambda$ and its interaction
with the boundary condition $\omega$.
Namely, 
\begin{align}
	H_\Lambda(\sigma) &=
		-\sum_{\substack{\{i,j\}\in \xB \\ \makebox[3em]{\scriptsize $i\in S(\Lambda)$ or $j\in S(\Lambda)$}}}
			\delta(\sigma_i=\sigma_j\leq q) \;,
\end{align}
which can be decomposed as a sum
\begin{align}
	H_\Lambda(\sigma_\Lambda\omega_{\Lambda^\complement}) &=
		H_\Lambda^{\mathrm{int}}(\sigma_\Lambda) +
		H_\Lambda^{\mathrm{bound}}(\sigma_\Lambda\omega_{\Lambda^\complement}) \;,
\end{align}
where $H_\Lambda^{\mathrm{int}}(\sigma_\Lambda)$ involves
the interaction terms within $\Lambda$, and
$H_\Lambda^{\mathrm{bound}}(\sigma_\Lambda\omega_{\Lambda^\complement})$
represents the terms corresponding to the interaction
of $\Lambda$ with its boundary.
%
%
%
%
%
The factor $Z_\beta^\omega(\Lambda)$ is a normalizing constant
--- the \emph{partition function} ---
making $\mu_{\beta,\Lambda}^\omega$ a probability distribution.
More specifically,
the partition function of volume $\Lambda$ with boundary condition $\omega$
is given by
\begin{align}
\label{eq:potts:partition:boundary}
	Z_\beta^\omega(\Lambda) &=
		\sum_{\sigma_\Lambda}
		\xe^{
		-\beta H_\Lambda^{\mathrm{int}}(\sigma_\Lambda) -
		\beta H_\Lambda^{\mathrm{bound}}(\sigma_\Lambda\omega_{\Lambda^\complement})
		} \;.
\end{align}
If we ignore the boundary term, we obtain 
the \emph{free-boundary} partition function of volume $\Lambda$:
\begin{align}
	Z_\beta^\free(\Lambda) &=
		\sum_{\sigma_\Lambda}
		\xe^{-\beta H_\Lambda^{\mathrm{int}}(\sigma_\Lambda)} \;.
\end{align}
This is the normalizing factor for the
free-boundary Boltzmann distribution on $\Lambda$.
A \emph{Gibbs measure} on the space of all configurations of
the infinite lattice system, at inverse temperature~$\beta$,
is a probability measure $\mu$ whose conditional probabilities
for every finite volume $\Lambda$,
given the configuration~$\omega$ outside $\Lambda$,
are given by the Boltzmann distribution $\mu_{\beta,\Lambda}^\omega$.
More specifically,
\begin{align}
	\mu\left(A \text{ and } B\right) &=
		\int_B \mu_{\beta,\Lambda}^\omega(A)\mu(\xd\omega) \;,
\end{align}
for every event $A$ not depending on the colours of
the sites outside $\Lambda$ and
every event $B$ not depending on the colours of
the sites in $\Lambda$ (see e.g.~\cite{Geo88}).
It follows from a compactness argument that
such measures exist at every temperature.
However, when the temperature is sufficiently low,
it is possible to have several distinct Gibbs measures.
The multiplicity of Gibbs measures is then interpreted
as the possibility of co-existence of distinguishable phases
of the physical system (in this case,
the possibility of spontaneous magnetization
in $q$ different directions).  We refer to~\cite{Geo88}
for details.

One way to obtain Gibbs measures consists of
taking the thermodynamic limit of the Boltzmann distribution
with or without a fixed boundary condition.  
For a visible colour $k$, let $\omega^k$ denote
the configuration of the lattice in which every site
has colour $k$.  Let $\mu^k_\beta$ denote a Gibbs measure
obtained by taking a weak limit of finite-volume
Boltzmann distributions with boundary condition~$\omega^k$
at inverse temperature $\beta$,
when the finite-volume grows to the whole lattice.
Similarly, we obtain a Gibbs measure $\mu^\free_\beta$
by taking a weak limit of free-boundary Boltzmann distributions.


For every $n>0$, let $\Lambda_n$ denote the
$(2n+1)\times(2n+1)$ central square in the lattice, which we see
as the subgraph of the lattice induced by the sites in $[-n,n]^2$.
The \emph{pressure} of the model is defined by
\begin{align}
\label{eq:potts:pressure:def}
	f(\beta) &=
		\lim_{n\to\infty} 
		\frac{1}{\abs{S(\Lambda_n)}}\log Z_\beta^\omega(\Lambda_n) \;,
\end{align}
in which $S(\Lambda_n)$ denotes the set of sites in $\Lambda_n$.
The function $-\frac{1}{\beta}f(\beta)$ is the \emph{free energy} per site.
The limit exists and is independent of the boundary condition $\omega$
(see e.g.~\cite{Isr79,Rue04}).  We would also get the same limit
as in~(\ref{eq:potts:pressure:def})
if we used the free boundary partition function.
The particular choice of volumes used above is not crucial,
and can be replaced by any sequence
satisfying the van Hove property~(see~\cite{Isr79}).


\subsection{First-Order Phase Transition}

A \emph{first-order phase transition} in temperature is characterized
by the presence of \emph{latent heat} at the transition point~\cite{Dor99}.
This means that at the transition point, the system
absorbs or gives out heat without a change in temperature.
The presence of latent heat, therefore, corresponds to a jump
in the internal energy.

In the Gibbsian setup, the state of a system in thermal equilibrium
is represented by a Gibbs measure (see~\cite{Geo88}).
If the Gibbs measure is translation-invariant,
the \emph{internal energy} density of the system is described by
the expected value of energy per site.  The presence of
latent heat at a temperature means that the limits of the internal energy
from above and below the transition temperature are different.
This implies, by continuity, the existence of
two translation-invariant Gibbs measures at that temperature
having different internal energy.

If the pressure function $f(\beta)$ is differentiable at
a point $\beta$, its derivative at $\beta$ coincides
with the internal energy
with respect to every translation-invariant Gibbs measure
at $\beta$ (see~\cite{Isr79,Rue04}).
(This, however, does not rule out the possibility
of the existence of several translation-invariant Gibbs measures
at~$\beta$.)
If, on the other hand, the pressure function $f(\beta)$ is
non-differentiable at a point $\beta$,
its left and right derivatives at $\beta$
(which exist due to convexity)
are different and coincide with the internal energy
with respect to two different translation-invariant
Gibbs measures at $\beta$.
The difference between these two derivatives
corresponds to a latent heat at $\beta$,
implying that the system undergoes a first-order
phase transition at $\beta$.

In this paper, we show the existence of
a first-order phase transition in the $(q,r)$-Potts model
by proving that the pressure function $f(\beta)$
has a unique non-differentiable point $\beta_\critical$.
Above~$\beta_\critical$, the system admits
$q$ ``ordered'' translation-invariant Gibbs measures.
Each ordered measure can be thought of as a perturbation
of one of the $q$ ground-state configurations, in the sense that
with probability~$1$, the configuration of the model consists of
a unique infinite sea of one of the visible colours with
finite islands of disturbance.
Below~$\beta_\critical$, the system has a
``disordered'' translation-invariant Gibbs measure,
which can be seen as a perturbation of the
uniform Bernoulli measure: there is a unique infinite
sea of independent colours with finite islands of disturbance.
At~$\beta_\critical$, the $q$~ordered measures co-exist
with the disordered one.

\section{Biased Random-Cluster Representation}
\label{sec:rc}


In analogy with the standard Potts model~\cite{ForKas72,Gri06,GeoHagMae00},
it is possible to rewrite the partition function for
the $(q,r)$-Potts model in terms of the partition function
for a variant of the random-cluster model.
While the former is a model defined on sites,
the latter will be a model defined on bonds.
The random-cluster representation of the Potts model
allows for an elegant formulation of the intuitive
concepts of ``order'' and ``disorder'':
the presence of a bond in the random-cluster representation
is interpreted as ``order'', while
the absence of a bond as ``disorder''~\cite{LaaMesMirRuiShl91}.

Although for the purpose of our problem, it suffices
to present the connection for squares $\Lambda_n$ in the lattice,
we elucidate the connection for an arbitrary finite graph,
where there is no boundary condition.
Later, we explain how the boundary conditions affect this connection.

Let $\xG=(S,B)$ be a finite graph.
The \emph{$r$-biased random-cluster} model on $\xG$ is given by a
probability distribution on the sets $X\subseteq B$.
The distribution 
has three parameters $0\leq p\leq 1$, $q>0$ and $r\geq 0$,
and is defined by
\begin{align}
	\phi_{p,q,r}(X) &=
		\frac{1}{Z^{\RC}_{p,q,r}(\xG)}
		\left[\prod_{b\in B}
			p^{\delta(b\in X)}(1-p)^{\delta(b\notin X)}\right]
			(q+r)^{\kappa_0(S,X)}q^{\kappa_1(S,X)} \;,
\end{align}
in which $\kappa_0(S,X)$ denotes the number of
isolated sites of the graph $(S,X)$
and $\kappa_1(S,X)$ the number of
non-singleton connected components of $(S,X)$ and
$Z^{\RC}_{p,q,r}(\xG)$ the partition function.
Notice that for $r=0$, the model reduces to the standard
random-cluster model, in which both singleton and non-singleton
connected components have weight $q$.
For $r>0$, the above model induces a bias
towards singleton connected components.
Namely, the singleton connected components have weight $(q+r)$
whereas the non-singleton connected components have weight $q$.

Let us now see how the $(q,r)$-Potts model is related
to the $r$-biased random-cluster model.
This is a mere generalization of the standard relation
between the Potts and random-cluster models~\cite{Gri06,GeoHagMae00}.
Let $\Omega$ be the set of $(q,r)$-Potts configurations on $\xG$.
The partition function of this model can be rewritten as
\begin{align}
	Z_\beta(\xG) &=
		\sum_{\sigma\in\Omega}\xe^{
			\beta\sum_{\{i,j\}\in B} \delta(\sigma_i=\sigma_j\leq q)
		} \\
	&=
		\sum_{\sigma\in\Omega} \prod_{\{i,j\}\in B}
			\xe^{\beta\delta(\sigma_i=\sigma_j\leq q)} \\
	&=
		\sum_{\sigma\in\Omega} \prod_{\{i,j\}\in B}
			\left[1+\delta(\sigma_i=\sigma_j\leq q)(\xe^\beta-1)\right] \\
	&=
		\sum_{\sigma\in\Omega}\sum_{X\subseteq B}
			(\xe^\beta-1)^{\abs{X}}
			\prod_{\{i,j\}\in X}\delta(\sigma_i=\sigma_j\leq q)
			 \\
	\label{eq:coupling}
	&=
		\sum_{\sigma\in\Omega}\sum_{X\subseteq B}
			\pi(\sigma,X) \;,
\end{align}
where
\begin{align}
	\pi(\sigma,X) &=
		\xe^{\beta\abs{B}}
		\prod_{\{i,j\}\in B}
		\left[
			\delta(\{i,j\}\in X)\delta(\sigma_i=\sigma_j\leq q)(1-\xe^{-\beta}) +
			\delta(\{i,j\}\notin X)\xe^{-\beta}		
		\right] \;.
\end{align}
The latter expression can be seen as a coupling of the
$(q,r)$-Potts distribution on $\Omega$
and a probability distribution on the space $\{0,1\}^{B}$.
The marginal of this coupling on $\{0,1\}^{B}$
is simply the $r$-biased random-cluster distribution
$\phi_{p_\beta,q,r}$ with $p_\beta=1-\xe^{-\beta}$.
In particular, the weight $\pi(\sigma,X)$ can also be expressed as
\begin{align}
	\pi(\sigma,X) &=
		\xe^{\beta\abs{B}}\cdot 1_{F_r}(\sigma,X)\cdot
		\prod_{\{i,j\}\in B}
		\left[
			p_\beta\,\delta(\{i,j\}\in X) +
			(1-p_\beta)\,\delta(\{i,j\}\notin X)
		\right] \;,
\end{align}
where
\begin{align}
	F_r &\IsDef\left\{
		(\sigma,X): \sigma_i=\sigma_j\leq q
		\text{ for all } \{i,j\}\in X
	\right\} \;.
\end{align}
The effect of the bias in the $r$-biased random-cluster model
reduces to an increase in the number of compatible
configurations with a given $X$, which is driven
by a larger number of choices for the colour of those sites constituting the
singleton connected components.
In short, for each $X\subseteq B$, we have
\begin{align}
	\sum_{\sigma\in\Omega} 1_{F_r}(\sigma,X) &=
		q^{\kappa_1(S,X)}(q+r)^{\kappa_0(S,X)} \;.
\end{align}

The above coupling could be interpreted in either of the following ways~\cite{Gri06,GeoHagMae00}:
\begin{enumerate}[ I.]
	\item We first sample $\sigma$ according to the $(q,r)$-Potts distribution.
		Then, we choose the elements of $X$ from $B$, randomly and independently, as follows:
		for each bond $\{i,j\}\in B$ with $\sigma_i=\sigma_j$, we put $\{i,j\}$ in $X$
		with probability $p_\beta$;  for each bond $\{i,j\}\in B$ with $\sigma_i\neq\sigma_j$,
		we do not put $\{i,j\}$ in $X$.
	\item We first sample $X$ according to the $r$-biased random-cluster distribution
		$\phi_{p_\beta,q,r}$.  Then, for each non-singleton connected component of $(S,X)$,
		we pick a random colour uniformly among the visible colours, and colour
		every site in the component with that colour.
		Last, for every isolated site in $(S,X)$, we choose a random colour uniformly
		among all the possible colours.
		(The choices of colours ought to be independent of each other.)
\end{enumerate}

We can now use~(\ref{eq:coupling}) to obtain
\begin{align}
\label{eq:potts-rc:finite}
	Z_\beta(\xG) &= \xe^{\beta\abs{B}} Z^{\RC}_{p_\beta,q,r}(\xG)
\end{align}
with $p_\beta=1-\xe^{-\beta}$.


For finite subgraphs of the infinite lattice,
we will be using only two types of partition functions for
the $(q,r)$-Potts model, namely
the one with free boundary and the ones with homogenous boundary conditions.
In the following, we see how the above two types of boundary conditions
translate into the so-called \emph{disordered} and \emph{ordered} boundary conditions
for the $r$-biased random-cluster model.
Although, setting $\xG=\Lambda_n$,
equation~(\ref{eq:potts-rc:finite}) already provides a relation between
the free-boundary partition functions of
the two models,
we will work with a slightly different relation,
connecting the free-boundary partition function of the $(q,r)$-Potts model
to a partition function for the $r$-biased random-cluster model
that involves a boundary condition.
This new relation will turn out to be more convenient in the sequel.

The free-boundary partition function for the $(q,r)$-Potts model can be written as
\begin{align}
\label{eq:potts-rc:disordered}
	Z_\beta(\Lambda_n) &=
		(q+r)^{-\abs{S(\Lambda_{n+1})\setminus S(\Lambda_n))}}
		\cdot\xe^{\beta\abs{B(\Lambda_{n+1})}}\cdot
		Z^{\RC.\disorder}_{p_\beta,q,r}(\Lambda_{n+1}) \;,
\end{align}
where $Z^{\RC.\disorder}_{p_\beta,q,r}(\Lambda_{n+1})$
is the partition function with \emph{disordered boundary condition}
for the $r$-biased random-cluster model.
The latter is defined by
\begin{align}
\label{eq:rc-partition-function:disordered}
	Z^{\RC.\disorder}_{p_\beta,q,r}(\Lambda_{n+1}) &=
		\sum_{X\in \mathcal{X}^\disorder_{\Lambda_{n+1}}}
			p_\beta^{\abs{X}}
			(1-p_\beta)^{\abs{B(\Lambda_{n+1})\setminus X}}
			(q+r)^{\kappa_0(S(\Lambda_{n+1}),X)}
			q^{\kappa_1(S(\Lambda_{n+1}),X)} \;,
\end{align}
where
\begin{align}
	\mathcal{X}^\disorder_{\Lambda_{n+1}} &= \left\{
		X\subseteq B(\Lambda_{n+1}):
			X\cap\left(B(\Lambda_{n+1})\setminus B(\Lambda_n)\right) =\varnothing
	\right\} \;.
\end{align}
Similarly, for the boundary condition $\omega^k$ we get
\begin{align}
\label{eq:potts-rc:ordered}
	Z_\beta^{\omega^k}(\Lambda_n) &=
		q^{-1}\cdot(\xe^\beta-1)^{-\abs{B(\Lambda_{n+1}\setminus\Lambda_n)}}		
		\cdot\xe^{\beta\abs{B(\Lambda_{n+1})}}\cdot
		Z^{\RC.\order}_{p_\beta,q,r}(\Lambda_{n+1}) \;,
\end{align}
where $Z^{\RC.\order}_{p_\beta,q,r}(\Lambda_{n+1})$
is the partition function with \emph{ordered boundary condition}
for the $r$-biased random-cluster model,
which is defined by
\begin{align}
\label{eq:rc-partition-function:ordered}
	Z^{\RC.\order}_{p_\beta,q,r}(\Lambda_{n+1}) &=
		\sum_{X\in \mathcal{X}^\order_{\Lambda_{n+1}}}
			p_\beta^{\abs{X}}
			(1-p_\beta)^{\abs{B(\Lambda_{n+1})\setminus X}}
			(q+r)^{\kappa_0(S(\Lambda_{n+1}),X)}
			q^{\kappa_1(S(\Lambda_{n+1}),X)} \;,
\end{align}
where
\begin{align}
	\mathcal{X}^\order_{\Lambda_{n+1}} &= \left\{
		X\subseteq B(\Lambda_{n+1}):
			X\supseteq B(\Lambda_{n+1}\setminus\Lambda_n)		
	\right\} \;.
\end{align}
By $\Lambda_{n+1}\setminus\Lambda_n$ we mean the graph obtained from $\Lambda_{n+1}$
by removing all the sites in $\Lambda_n$ and the bonds attached to them.
Let us remark that although mathematically
$\mathcal{X}^\disorder_{\Lambda_{n+1}}$
is simply the collection of all subsets $X\subseteq B(\Lambda_n)$,
we wrote it as above to emphasize that the elements of
$\mathcal{X}^\disorder_{\Lambda_{n+1}}$
are configurations of $B(\Lambda_{n+1})$.
See Figure~\ref{fig:typical-rc-configs} for typical examples of elements in
$\mathcal{X}^\disorder_{\Lambda_{n+1}}$ and $\mathcal{X}^\order_{\Lambda_{n+1}}$.

\begin{figure}
	\begin{center}
		\begin{tabular}{c@{\qquad\qquad\qquad}c}
			\begin{tikzpicture}[scale=.4,line cap=round]
				\newdimen\noderadius
				\noderadius=1.1mm
				
				\def\volsize{10}
				\def\db{0.25}
				
				\draw (5.3,11) node {\small $\Lambda_{n+1}$};
				
				\draw[dotted] (-\db,-\db) rectangle ({\volsize+\db},{\volsize+\db});
				\draw[dotted] ({1-\db},{1-\db}) rectangle ({\volsize+\db-1},{\volsize+\db-1});

				\draw[gray,very thin] (-\db,-\db) grid ({\volsize+\db},{\volsize+\db});

				\foreach \i in {1,...,9} {
					\foreach \j in {1,...,8} {
						\pgfmathrandominteger{\a}{1}{2}
						\ifnum \a<2
							\draw[very thick] (\i,\j) -- (\i,\j+1);
						\fi
					}
				}
				\foreach \j in {1,...,9} {
					\foreach \i in {1,...,8} {
						\pgfmathrandominteger{\a}{1}{2}
						\ifnum \a<2
							\draw[very thick] (\i,\j) -- (\i+1,\j);
						\fi
					}
				}

			\end{tikzpicture}
		
		&

			\begin{tikzpicture}[scale=.4,line cap=round]
				\newdimen\noderadius
				\noderadius=1.1mm
				
				\def\volsize{10}
				\def\db{0.25}
				
				\draw (5.3,11) node {\small $\Lambda_{n+1}$};
				
				\draw[dotted] (-\db,-\db) rectangle ({\volsize+\db},{\volsize+\db});
				\draw[dotted] ({1-\db},{1-\db}) rectangle ({\volsize+\db-1},{\volsize+\db-1});

				\draw[gray,very thin] (-\db,-\db) grid ({\volsize+\db},{\volsize+\db});
				
				\draw[very thick] (0,0) rectangle (\volsize,\volsize);

				\foreach \i in {1,...,9} {
					\foreach \j in {0,...,9} {
						\pgfmathrandominteger{\a}{1}{2}
						\ifnum \a<2
							\draw[very thick] (\i,\j) -- (\i,\j+1);
						\fi
					}
				}
				\foreach \j in {1,...,9} {
					\foreach \i in {0,...,9} {
						\pgfmathrandominteger{\a}{1}{2}
						\ifnum \a<2
							\draw[very thick] (\i,\j) -- (\i+1,\j);
						\fi
					}
				}

			\end{tikzpicture}

		\medskip \\
		
		(a) & (b)
		\end{tabular}
		\end{center}
	\caption{(a) A configuration in $\mathcal{X}^\disorder_{\Lambda_{n+1}}$.
		(b) A configuration in $\mathcal{X}^\order_{\Lambda_{n+1}}$.}
	\label{fig:typical-rc-configs}
\end{figure}

In the following section, we will extend the definition of
$Z^{\RC.\disorder}_{p_\beta,q,r}$ and $Z^{\RC.\order}_{p_\beta,q,r}$ to
arbitrary subgraphs of the lattice.


Using the above relationships, we obtain that the pressure
of the $(q,r)$-Potts model can be written as
\begin{align}
	f(\beta) &=
		2\left(\beta + \lim_{n\to\infty}\frac{\log Z^{\RC.\disorder}_{p_\beta,q,r}(\Lambda_n)}{\abs{B(\Lambda_n)}}\right) =
		2\left(\beta + \lim_{n\to\infty}\frac{\log Z^{\RC.\order}_{p_\beta,q,r}(\Lambda_n)}{\abs{B(\Lambda_n)}}\right) \;.
\end{align}
We call the limit
\begin{align}
	f^\RC(\beta) &=
		\lim_{n\to\infty}\frac{\log Z^{\RC.\disorder}_{p_\beta,q,r}(\Lambda_n)}{\abs{B(\Lambda_n)}} =
		\lim_{n\to\infty}\frac{\log Z^{\RC.\order}_{p_\beta,q,r}(\Lambda_n)}{\abs{B(\Lambda_n)}}
\end{align}
the \emph{pressure} (per bond) of the $r$-biased random-cluster representation.
Note that the singularities of the $(q,r)$-Potts pressure function $f(\beta)$
can be detected by studying the pressure function $f^\RC(\beta)$.
One advantage of this random-cluster representation is that it has a
more transparent expression in terms of ``contours'', which helps us
study the function $f^\RC(\beta)$.

\section{Reduction to Contour Model}
\label{sec:contour}
\subsection{Contour Representation}
\label{sec:contour:representation}

Any configuration of the $r$-biased random-cluster model
in a volume is a subset $X$ of bonds in the volume.
We interpret each bond in $X$ as ``ordered''
and each bond outside $X$ as ``disordered''.
Any configuration $X$ can then be seen as clusters of ordered
and disordered bonds.
Whether an equilibrium state is ordered or disordered
can be seen as the result of a competition between
ordered and disordered regions.
The selection criterion for this competition is ``energy''.
The term ``energy'' refers to an abstract notion of energy for
the $r$-biased random-cluster model, which in analogy with
the Boltzmann distribution, corresponds to minus logarithm of probability.

Let us define the ``energy'' of an ordered bond
as the ``energy'' per bond of the fully ordered configuration; that is,
\begin{align}
	e(\xB) &\IsDef
		\lim_{n\to\infty}\frac{%
			-\log\left[q(1-\xe^{-\beta})^{\abs{B(\Lambda_n)}}\right]
		}{\abs{B(\Lambda_n)}} =
		-\log(1-\xe^{-\beta}) \;.
\end{align}
Similarly, define the ``energy'' of a disordered bond as
the ``energy'' per bond of the fully disordered configuration:
\begin{align}
	e(\varnothing) &\IsDef
		\lim_{n\to\infty}\frac{%
			-\log\left[\xe^{-\beta\abs{B(\Lambda_n)}}(q+r)^{\abs{S(\Lambda_n)}}\right]
		}{\abs{B(\Lambda_n)}} =
		-\log\left[\xe^{-\beta}\sqrt{q+r}\,\right] \;.
\end{align}
The ``energy'' of the ordered and the disordered regions
can now be expressed as
$\abs{R^\xorder}\cdot e(\xB)$ and $\abs{R^\xdisorder}\cdot e(\varnothing)$,
respectively,
where $\abs{R^\xorder}$ and $\abs{R^\xdisorder}$ denote the
size of the ordered and disordered regions.
The ``energy'' of $X$, in turn, can be written in terms of
the ``energy'' of ordered and disordered regions
plus a correction term due to
the effect of the boundaries separating them.
If the effect of these boundaries is negligible
(which will turn out to be the case whenever $q+r$ is large),
the selection criterion for the competition between
order and disorder boils down to determining which
of~$e(\xB)$ and~$e(\varnothing)$ is minimal.
This is the starting point of the Pirogov-Sinai approach
to study phase transitions (see e.g.~\cite{Kot94}).

The presence of the correction term at the boundaries
can be explained as follows:
In the probability weight of a configuration~$X$,
each isolated site contributes with a factor~$(q+r)$.
To express the ``energy'' of the disordered regions
purely in terms of bonds, we evenly distribute the contribution
of the isolated sites among the~$4$
incident bonds.  Doing so, every disorder bond acquires
zero, one or two ``energy''-shares,
depending on the number of isolated sites
it is incident to.
Since in the fully disordered configuration $\varnothing$
there is no ordered region, every bond is incident to
precisely two isolated sites
and receives
two ``energy''-shares, leading to the factor $(q+r)^{\frac{2}{4}}$
in the expression of~$e(\varnothing)$.
In an arbitrary configuration, however,
the disordered bonds on the borderline between the ordered
and disordered regions, receive one or no ``energy''-share,
hence the need for a correction term.

It is possible to define a suitable notion of boundary
between ordered and disordered regions,
so that each configuration $X$ is uniquely identified
by its boundary (see below).
We could then rewrite the partition functions as sums
running over ``admissible'' boundaries, that is,
those corresponding to configurations of bonds.
Each admissible boundary is split into ``primary'' objects
termed contours whose ``energy'' add up to
the corresponding boundary effect.

In the following, we specify rigorously
the above heuristic notions of ``boundary'' and ``contours''.
We define the \emph{boundary} of a configuration $X\subseteq \xB$
as the set
\begin{align}
	\xpartial X &\IsDef
		\{(i,b)\in\xS\times\xB: i\sim b \text{ and } i\in S(X) \text{ and } b\notin X\} \;,
\end{align}
where $i\sim b$ means site $i$ and bond $b$ are incident,
and $S(X)$ is the set of sites incident to bonds in~$X$.
The set $\xpartial X$ uniquely determines $X$.
We say that two bonds $b$ and $b'$ in the lattice are \emph{co-adjacent}
if they belong to the same unit square.
More intuitively, co-adjacency is equivalent to adjacency in the dual lattice.
%
A set of bonds $X$ is \emph{co-connected} if for every
two bonds $b,b'\in X$, there is a sequence
$b=b_1,b_2,\ldots,b_n=b'$ of bonds in $X$ such that $b_i$ and $b_{i+1}$
are co-adjacent.
A \emph{contour} is a set $\gamma\subseteq \xS\times \xB$ such that
\begin{enumerate}[ i)]
	\item the set of bonds appearing in $\gamma$,
		denoted by $B(\gamma)$, is co-connected, and
	\item there exists a configuration $X$ such that $(S(X),X)$ is connected
		and $\gamma=\xpartial X$.
\end{enumerate}
We shall denote by $\Gamma$ the set of all finite contours in $\xL$.
If $\gamma$ is a contour, then removing the bonds $B(\gamma)$
breaks the lattice $\xL$ into connected subgraphs.
If $\gamma$ is finite, the graph 
$\xL\setminus B(\gamma)$
has a unique infinite connected component, which we call
the \emph{exterior} of $\gamma$ and denote by $\exterior\gamma$.
The subgraph $\xL\setminus B(\gamma)\setminus\exterior\gamma$
(which could be empty or disconnected) is called
the \emph{interior} of $\gamma$ and is denoted by $\interior\gamma$.
By $V(\gamma)$ we will mean the union of $\interior\gamma$
and the subgraph induced by $B(\gamma)$.\footnote{%
	By the subgraph induced by a set of bonds we mean
	the graph obtained by those bonds and their endpoints.
}
Let $\gamma$ be a finite contour. The configuration
$X$ such that $(S(X),X)$ is connected and $\gamma=\xpartial X$
(which exists by definition) is either finite or co-finite.
If $X$ is finite, we call $\gamma$ a \emph{disorder} contour,
and if $X$ is co-finite, we call $\gamma$ an \emph{order} contour.
Note that, if $\gamma$ is a disorder contour, all the sites
appearing in $\gamma$ are in the interior of $\gamma$,
whereas if $\gamma$ is an order contour, all the sites
appearing in $\gamma$ are in the exterior of $\gamma$.
As a result, we can safely represent a finite contour $\gamma$
by the pair $(B(\gamma),x)$ where
$x$ is a label specifying the type of the contour (disorder or order).
This also means that the set of all finite contours $\Gamma$ can be partitioned
into two subsets: the set of disorder contours, which we denote
by~$\Gamma^\xdisorder$, and the set of order contours, which we
denote by~$\Gamma^\xorder$.

\begin{figure}
	\begin{center}
		\begin{tabular}{c@{\qquad\qquad\qquad}c}
			\begin{tikzpicture}[scale=.4,line cap=round]
				\newdimen\noderadius
				\noderadius=1.5mm
				\def\nodecol{white}
				\def\bondcol{black}
				
				\def\volsize{8}
				\def\db{0.25}

				\def\xboundtail{0.9}
				\def\xbound[#1]#2{%
					\ifthenelse{\equal{#1}{l}}{
						\coordinate (A) at #2;
						\foreach \r in {0.2,0.5,...,\xboundtail}
							\draw[\bondcol] (A)+(-\r,0) node {\tiny $\star$};
					}{}
					\ifthenelse{\equal{#1}{u}}{
						\coordinate (A) at #2;
						\foreach \r in {0.2,0.5,...,\xboundtail}
							\draw[\bondcol] (A)+(0,\r) node {\tiny $\star$};
					}{}
					\ifthenelse{\equal{#1}{r}}{
						\coordinate (A) at #2;
						\foreach \r in {0.2,0.5,...,\xboundtail}
							\draw[\bondcol] (A)+(\r,0) node {\tiny $\star$};
					}{}
					\ifthenelse{\equal{#1}{d}}{
						\coordinate (A) at #2;
						\foreach \r in {0.2,0.5,...,\xboundtail}
							\draw[\bondcol] (A)+(0,-\r) node {\tiny $\star$};
					}{}
					\fill[\nodecol] #2 circle (\noderadius);
					\draw[\bondcol] #2 circle (\noderadius);
				}
								
				\draw[dotted] (-\db,-\db) rectangle ({\volsize+\db},{\volsize+\db});
				\clip (-\db,-\db) rectangle ({\volsize+\db},{\volsize+\db});

				\draw[gray,very thin] (-1,-1) grid ({\volsize+1},{\volsize+1});

				\draw[very thick] (2,3) -- (2,5);
				\draw[very thick] (3,2) -- (3,5);
				\draw[very thick] (4,2) -- (4,5);
				\draw[very thick] (5,2) -- (5,6);
				\draw[very thick] (6,4) -- (6,6);
				\draw[very thick] (3,2) -- (6,2);
				\draw[very thick] (2,3) -- (5,3);
				\draw[very thick] (2,4) -- (6,4);
				\draw[very thick] (2,5) -- (3,5);
				\draw[very thick] (4,5) -- (6,5);
				\draw[very thick] (5,6) -- (6,6);
				
				\xbound[l]{(2,3)};
				\xbound[l]{(2,4)};
				\xbound[l]{(2,5)};
				\xbound[u]{(2,5)};
				\xbound[u]{(3,5)};
				\xbound[r]{(3,5)};
				\xbound[l]{(4,5)};
				\xbound[u]{(4,5)};
				\xbound[l]{(5,6)};
				\xbound[u]{(5,6)};
				\xbound[u]{(6,6)};
				\xbound[r]{(6,6)};
				\xbound[r]{(6,5)};
				\xbound[r]{(6,4)};
				\xbound[d]{(6,4)};
				\xbound[r]{(5,3)};
				\xbound[u]{(6,2)};
				\xbound[r]{(6,2)};
				\xbound[d]{(6,2)};
				\xbound[d]{(5,2)};
				\xbound[d]{(4,2)};
				\xbound[d]{(3,2)};
				\xbound[l]{(3,2)};
				\xbound[d]{(2,3)};
				
			\end{tikzpicture}
		
		&

			\begin{tikzpicture}[scale=.4,line cap=round]
				\newdimen\noderadius
				\noderadius=1.5mm
				\def\nodecol{white}
				\def\bondcol{black}
				
				\def\volsize{8}
				\def\db{0.25}

				\def\xboundtail{0.9}
				\def\xbound[#1]#2{%
					\ifthenelse{\equal{#1}{l}}{
						\coordinate (A) at #2;
						\foreach \r in {0.2,0.5,...,\xboundtail}
							\draw[\bondcol] (A)+(-\r,0) node {\tiny $\star$};
					}{}
					\ifthenelse{\equal{#1}{u}}{
						\coordinate (A) at #2;
						\foreach \r in {0.2,0.5,...,\xboundtail}
							\draw[\bondcol] (A)+(0,\r) node {\tiny $\star$};
					}{}
					\ifthenelse{\equal{#1}{r}}{
						\coordinate (A) at #2;
						\foreach \r in {0.2,0.5,...,\xboundtail}
							\draw[\bondcol] (A)+(\r,0) node {\tiny $\star$};
					}{}
					\ifthenelse{\equal{#1}{d}}{
						\coordinate (A) at #2;
						\foreach \r in {0.2,0.5,...,\xboundtail}
							\draw[\bondcol] (A)+(0,-\r) node {\tiny $\star$};
					}{}
					\fill[\nodecol] #2 circle (\noderadius);
					\draw[\bondcol] #2 circle (\noderadius);
				}
				
				\draw[dotted] (-\db,-\db) rectangle ({\volsize+\db},{\volsize+\db});
				\clip (-\db,-\db) rectangle ({\volsize+\db},{\volsize+\db});

				\draw[gray,very thin] (-1,-1) grid ({\volsize+1},{\volsize+1});

				\draw[very thick] (-1,0) -- (9,0);
				\draw[very thick] (-1,1) -- (9,1);
				\draw[very thick] (-1,2) -- (2,2);
				\draw[very thick] (7,2) -- (8,2);
				\draw[very thick] (-1,3) -- (1,3);
				\draw[very thick] (6,3) -- (8,3);
				\draw[very thick] (-1,4) -- (1,4);
				\draw[very thick] (7,4) -- (8,4);
				\draw[very thick] (-1,5) -- (1,5);
				\draw[very thick] (7,5) -- (8,5);
				\draw[very thick] (-1,6) -- (4,6);
				\draw[very thick] (7,6) -- (8,6);
				\draw[very thick] (-1,7) -- (9,7);
				\draw[very thick] (-1,8) -- (9,8);
				
				\draw[very thick] (0,-1) -- (0,9);
				\draw[very thick] (1,-1) -- (1,9);
				\draw[very thick] (2,-1) -- (2,2);
				\draw[very thick] (2,6) -- (2,9);
				\draw[very thick] (3,-1) -- (3,1);
				\draw[very thick] (3,6) -- (3,9);
				\draw[very thick] (4,-1) -- (4,1);
				\draw[very thick] (4,6) -- (4,9);
				\draw[very thick] (5,-1) -- (5,1);
				\draw[very thick] (5,7) -- (5,9);
				\draw[very thick] (6,-1) -- (6,1);
				\draw[very thick] (6,7) -- (6,9);
				\draw[very thick] (7,-1) -- (7,9);
				\draw[very thick] (8,-1) -- (8,9);				
				
				\xbound[r]{(2,2)};
				\xbound[u]{(2,2)};
				\xbound[r]{(1,3)};
				\xbound[r]{(1,3)};
				\xbound[r]{(1,4)};
				\xbound[r]{(1,5)};
				\xbound[d]{(2,6)};
				\xbound[d]{(3,6)};
				\xbound[d]{(4,6)};
				\xbound[r]{(4,6)};
				\xbound[d]{(5,7)};
				\xbound[d]{(6,7)};
				\xbound[l]{(7,6)};
				\xbound[l]{(7,5)};
				\xbound[l]{(7,4)};
				\xbound[u]{(6,3)};
				\xbound[l]{(6,3)};
				\xbound[d]{(6,3)};
				\xbound[l]{(7,2)};
				\xbound[u]{(6,1)};
				\xbound[u]{(6,1)};
				\xbound[u]{(5,1)};
				\xbound[u]{(4,1)};
				\xbound[u]{(3,1)};
				
			\end{tikzpicture}

		\medskip \\
		
		(a) & (b)
		
		\bigskip \\
		
		\begin{tikzpicture}[scale=.4,line cap=round]
				\def\volsize{8}
				\def\db{0.25}
								
				\draw[dotted] (-\db,-\db) rectangle ({\volsize+\db},{\volsize+\db});
				\clip (-\db,-\db) rectangle ({\volsize+\db},{\volsize+\db});

				\draw[gray,very thin] (-1,-1) grid ({\volsize+1},{\volsize+1});

				\draw[very thick] (2,3) -- (2,5);
				\draw[very thick] (3,2) -- (3,5);
				\draw[very thick] (4,2) -- (4,5);
				\draw[very thick] (5,2) -- (5,6);
				\draw[very thick] (6,4) -- (6,6);
				\draw[very thick] (3,2) -- (6,2);
				\draw[very thick] (2,3) -- (5,3);
				\draw[very thick] (2,4) -- (6,4);
				\draw[very thick] (2,5) -- (3,5);
				\draw[very thick] (4,5) -- (6,5);
				\draw[very thick] (5,6) -- (6,6);

				\def\dist{0.27}
				\draw[dashed] (2-\dist,3-\dist) --
					(2-\dist,5+\dist) --
					(3+\dist,5+\dist) --
					(3+\dist,4+\dist) --
					(4-\dist,4+\dist) --
					(4-\dist,5+\dist) --
					(5-\dist,5+\dist) --
					(5-\dist,6+\dist) --
					(6+\dist,6+\dist) --
					(6+\dist,4-\dist) --
					(5+\dist,4-\dist) --
					(5+\dist,2+\dist) --
					(6+\dist,2+\dist) --
					(6+\dist,2-\dist) --
					(3-\dist,2-\dist) --
					(3-\dist,3-\dist) --
					cycle;
				
			\end{tikzpicture}
		
		&

			\begin{tikzpicture}[scale=.4,line cap=round]
				\def\volsize{8}
				\def\db{0.25}
				
				\draw[dotted] (-\db,-\db) rectangle ({\volsize+\db},{\volsize+\db});
				\clip (-\db,-\db) rectangle ({\volsize+\db},{\volsize+\db});

				\draw[gray,very thin] (-1,-1) grid ({\volsize+1},{\volsize+1});

				\draw[very thick] (-1,0) -- (9,0);
				\draw[very thick] (-1,1) -- (9,1);
				\draw[very thick] (-1,2) -- (2,2);
				\draw[very thick] (7,2) -- (8,2);
				\draw[very thick] (-1,3) -- (1,3);
				\draw[very thick] (6,3) -- (8,3);
				\draw[very thick] (-1,4) -- (1,4);
				\draw[very thick] (7,4) -- (8,4);
				\draw[very thick] (-1,5) -- (1,5);
				\draw[very thick] (7,5) -- (8,5);
				\draw[very thick] (-1,6) -- (4,6);
				\draw[very thick] (7,6) -- (8,6);
				\draw[very thick] (-1,7) -- (9,7);
				\draw[very thick] (-1,8) -- (9,8);
				
				\draw[very thick] (0,-1) -- (0,9);
				\draw[very thick] (1,-1) -- (1,9);
				\draw[very thick] (2,-1) -- (2,2);
				\draw[very thick] (2,6) -- (2,9);
				\draw[very thick] (3,-1) -- (3,1);
				\draw[very thick] (3,6) -- (3,9);
				\draw[very thick] (4,-1) -- (4,1);
				\draw[very thick] (4,6) -- (4,9);
				\draw[very thick] (5,-1) -- (5,1);
				\draw[very thick] (5,7) -- (5,9);
				\draw[very thick] (6,-1) -- (6,1);
				\draw[very thick] (6,7) -- (6,9);
				\draw[very thick] (7,-1) -- (7,9);
				\draw[very thick] (8,-1) -- (8,9);				
				
				\def\dist{0.27}
				\draw[dashed] (1+\dist,2+\dist) --
					(1+\dist,6-\dist) --
					(4+\dist,6-\dist) --
					(4+\dist,7-\dist) --
					(7-\dist,7-\dist) --
					(7-\dist,3+\dist) --
					(6-\dist,3+\dist) --
					(6-\dist,3-\dist) --
					(7-\dist,3-\dist) --
					(7-\dist,1+\dist) --
					(2+\dist,1+\dist) --
					(2+\dist,2+\dist) --
					cycle;
				
			\end{tikzpicture}

		\medskip \\
		
		(a') & (b')
		
		\end{tabular}
		\end{center}
	\caption{%
		(a) A disorder contour and its defining configuration.
		(b) An order contour and its defining configuration.
		(a') and (b') Geometric illustrations of (a) and (b).
	}
	\label{fig:contour:int-ext}
\end{figure}

Two contours are said to be \emph{mutually compatible}
if they are disjoint (as subsets of $\xS\times \xB$).
Let us emphasize that two mutually compatible contours are allowed
to share either sites or bonds, but not pairs. 

If $X\subseteq \xB$ is an arbitrary configuration,
there could be several ways to partition its boundary $\xpartial X$
into mutually compatible contours. 
One way to construct such decomposition in an unambiguous way
is as follows: first, we partition $(S(X),X)$ into
its maximal connected components $(S(C_i),C_i)$.
Then $\xpartial C_i$ form a partitioning of $\xpartial X$.
Now, the maximal co-connected components of every $C_i$
are contours that we identify as
the \emph{contours of} $X$.

The above decomposition allows us to think of $\xpartial X$
as a family of mutually compatible contours,
which we call the \emph{contour family} of $X$.
Let us recall that the contour family of a configuration $X$
uniquely determines $X$.
However, note that not every family of mutually compatible
contours corresponds to a configuration.
In particular, in a contour family of a configuration~$X$,
between every two nested finite contours of the same type,
there necessarily lies a contour of the other type.
This requirement induces a long-range constraint among contours,
which raises some difficulties in dealing with the contours.
We will see later how to get rid of such a constraint.
Let us call a family $\partial$ of contours \emph{admissible}
if it is the contour family of a configuration $X\subseteq \xB$.
We shall denote by $\Delta$ the set of all
admissible contour families.
A contour $\gamma$ in a mutually compatible family $\partial$
of contours is said to be \emph{external}
if it is not in the interior of any other contour in $\partial$.
Note that if $\partial$ is an admissible contour family
with no infinite contours,
all the external contours in $\partial$ are necessarily
of the same type.


Having formalized the notions of boundary and contours,
we can now express
the weight of a configuration of the $r$-biased random-cluster model
in terms of the ``energy'' of its ordered and disordered regions
and the correction term due to the contours separating them.
The one-to-one correspondence between the configurations
and the admissible families of contours allows us to
write the partition functions as a sum over contour families.
The ordered/disordered boundary conditions on
the configurations translate into the constraints for
the corresponding contour family that
the outermost contours in the volume be of the order/disorder type.

Let $\Lambda$ be a \emph{volume} in the lattice, by which, from now on,
we shall mean a finite subgraph of $\xL$ without ``holes''.
More precisely, we assume that if we remove the subgraph $\Lambda$ from $\xL$,
the remaining subgraph is connected.
Let us denote by $\Delta_\Lambda^\disorder$
the set of all
admissible contour families whose contours are in $\Lambda$
(i.e., their bonds are chosen from the bonds of $\Lambda$)
and whose external contours are all of the disorder type.
Similarly, let $\Delta_\Lambda^\order$ denote
the set of admissible contour families in $\Lambda$
whose external contours are all of the order type.
The partition function for the $r$-biased random-cluster model
in a volume $\Lambda$ with \emph{disordered}
(resp., \emph{ordered}) \emph{boundary conditions}
can be defined as
\begin{align}
\label{eq:rc-disordered-def:old}
	Z^{\RC.\disorder}_{p_\beta,q,r}(\Lambda) &=
		(q+r)^{\frac{\abs{\xpartial B(\Lambda)}}{4}}
		\sum_{\partial\in \Delta_\Lambda^\disorder}
		\xe^{
			-\abs{R_\Lambda^\xorder(\partial)}\cdot e(\xB)
			-\abs{R_\Lambda^\xdisorder(\partial)}\cdot e(\varnothing)
		}
		\prod_{\gamma\in\partial}\rho(\gamma) \;, \\
\label{eq:rc-ordered-def:old}
	Z^{\RC.\order}_{p_\beta,q,r}(\Lambda) &=
		q
		\sum_{\partial\in \Delta_\Lambda^\order}
		\xe^{
			-\abs{R_\Lambda^\xorder(\partial)}\cdot e(\xB)
			-\abs{R_\Lambda^\xdisorder(\partial)}\cdot e(\varnothing)
		}
		\prod_{\gamma\in\partial}\rho(\gamma) \;,
\end{align}
where $\rho(\gamma)$ is the \emph{weight} of a contour $\gamma$
and is given by
\begin{align}
	\rho(\gamma) &\IsDef \begin{cases}
		(q+r)^{-\frac{1}{4}\abs{\gamma}}\;,
			&\text{if $\gamma$ order,} \\
		q\,(q+r)^{-\frac{1}{4}\abs{\gamma}} \;,
			&\text{if $\gamma$ disorder,}
	\end{cases}		
\end{align}
and $R_\Lambda^\xorder(\partial)$ and $R_\Lambda^\xdisorder(\partial)$
denote, respectively, the sets of ordered and disordered bonds in $\Lambda$
of the configuration corresponding to $\partial$.

The above definitions are consistent with
the definitions given in~(\ref{eq:rc-partition-function:disordered})
and~(\ref{eq:rc-partition-function:ordered})
when $\Lambda=\Lambda_{n+1}$ is a square.
Namely, for $\Lambda=\Lambda_{n+1}$,
if $X$ is the corresponding configuration
of a family $\partial\in\Delta_{\Lambda_{n+1}}^\disorder$,
the restriction of $X$ to $B(\Lambda_{n+1})$
is an element of $\mathcal{X}_{\Lambda_{n+1}}^\disorder$.
Conversely, every element of $\mathcal{X}_{\Lambda_{n+1}}^\disorder$
has a unique infinite-volume extension whose corresponding
contour family is in $\Delta_{\Lambda_{n+1}}^\disorder$.
A similar correspondence holds between
$\mathcal{X}_{\Lambda_{n+1}}^\order$ and $\Delta_{\Lambda_{n+1}}^\order$.
For the proof of the equivalence of the two definitions
see Appendix~\ref{apx:contour}.

We emphasize that the factors $(q+r)^{\frac{\abs{\xpartial B(\Lambda)}}{4}}$
and $q$ in front of the partition functions~(\ref{eq:rc-disordered-def:old})
and~(\ref{eq:rc-ordered-def:old}) do not contribute to
the pressure function~$f^\RC(\beta)$:
in the thermodynamic limit, they are swallowed by the size of the volume.
Hence, to avoid heavy notation
--- with all due apologies to the reader ---
we \emph{re-define} the partition functions of the $r$-biased random-cluster model
with disordered/ordered boundary conditions as
\begin{align}
\label{eq:rc-disordered-def}
	Z^{\RC.\disorder}(\Lambda) &=
		\sum_{\partial\in \Delta_\Lambda^\disorder}
		\xe^{
			-\abs{R_\Lambda^\xorder(\partial)}\cdot e(\xB)
			-\abs{R_\Lambda^\xdisorder(\partial)}\cdot e(\varnothing)
		}
		\prod_{\gamma\in\partial}\rho(\gamma) \;, \\
\label{eq:rc-ordered-def}
	Z^{\RC.\order}(\Lambda) &=
		\sum_{\partial\in \Delta_\Lambda^\order}
		\xe^{
			-\abs{R_\Lambda^\xorder(\partial)}\cdot e(\xB)
			-\abs{R_\Lambda^\xdisorder(\partial)}\cdot e(\varnothing)
		}
		\prod_{\gamma\in\partial}\rho(\gamma) \;.
\end{align}
From now on, every time we talk about the partition function
of the $r$-biased random-cluster model, we will be referring to
the latter definitions.

As was mentioned in the introduction,
we would like to express the two partition functions
in terms of two (standard) contour models.
The purpose of this is to make use of
the machinery available for contour models; namely,
a result providing an estimate on the convergence of
the corresponding free energy functions (Proposition~\ref{prop:damped}),
and the Peierls estimate for the probability of
the appearance of a contour.
The main features of the contour models
that are required in the above tools are (see~\cite{Kot94})
\begin{enumerate}[ i)]
	\item independence, and
	\item damping.
\end{enumerate}
Unfortunately, the contours of the contour representation
of the $r$-biased random-cluster partition functions are
not independent (due to the long-range constraint).
In the following section, we will see how to
achieve the independence among contours,
by rewriting the partition functions
in terms of abstract contour models.
As in the standard random-cluster model (see~\cite{LaaMesMirRuiShl91}),
we need two different such contour models,
one for each of the two boundary conditions.


\subsection{Contour Models}
\label{sec:contour:abstract}
In this section, we want to resolve the issue of
long-range constraints between contours.
Recall that the admissibility condition requires
the contours of a family to be alternating between disorder and order contours,
and this imposes a long-range constraint between contours.
For example, two nested contours of the disorder type
(no matter how far from each other) are ``aware''
of the presence of an order contour separating them.
As a result, if we remove a contour
from an admissible family, the admissibility could be lost.

In order to get rid of this constraint,
we use two abstract contour models
in which the contours are all of the same type
and the admissibility condition is replaced by
mere mutual compatibility.
The weights of the contours in each of the abstract models
will be chosen in such a way to guarantee
that the ensuing partition functions are equal (up to a factor)
to each of the partition functions for the $r$-biased
random-cluster model.

A \emph{contour model} is specified by a function
$\chi:\Gamma\to\xR$, assigning a weight $\chi(\gamma)$
to each contour $\gamma\in\Gamma$.
The configurations of the model are
families of mutually compatible (i.e., disjoint) contours in $\xL$.
Let us denote the set of all such families
by $\mathcal{M}$, and the set of all elements of $\mathcal{M}$
whose contours are in a volume $\Lambda$ by $\mathcal{M}_\Lambda$.
The partition function of the model in $\Lambda$
is given by
\begin{align}
	\mathscr{Z}(\Lambda\,|\,\chi) &=
		\sum_{\partial\in\mathcal{M}_\Lambda}\prod_{\gamma\in\partial}
			\chi(\gamma) \;.
\end{align}
In the following lemma, we will see how to represent
the partition functions of the $r$-biased random-cluster model
with disordered and ordered boundary conditions, each in terms of
of the partition function of a contour model,
with a particular choice of the weight function.
In fact, the contour model associated to the disordered boundary condition
will not involve order contours.  This is reflected by
the fact that in this model each order contour has weight zero.
Similarly, the contour model for the ordered boundary condition
involves only order contours.

To set the stage for the lemma, we rewrite the partition functions
$Z^{\RC.\disorder}(\Lambda)$ and $Z^{\RC.\order}(\Lambda)$
in a form resembling more the contour model
partition function
$\mathscr{Z}(\Lambda\,|\,\chi)$.
That is,
\begin{align}
	Z^{\RC.\disorder}(\Lambda) &=
		\xe^{-\abs{B(\Lambda)}\cdot e(\varnothing)}
		\sum_{\partial\in\Delta_\Lambda^\disorder}
		\prod_{\gamma\in\partial}
		\widetilde{\rho}(\gamma) \;, \\
	Z^{\RC.\order}(\Lambda) &=
		\xe^{-\abs{B(\Lambda)}\cdot e(\xB)}
		\sum_{\partial\in\Delta_\Lambda^\order}
		\prod_{\gamma\in\partial}
		\widetilde{\rho}(\gamma) \;,
\end{align}
where
\begin{align}
	\widetilde{\rho}(\gamma) &= \begin{cases}
		\rho(\gamma)\cdot\xe^{
			-\abs{B(\interior\gamma)}\cdot
			\left(e(\xB)-e(\varnothing)\right)
		}\;,\quad & \text{if $\gamma$ is disorder,} \\
		\rho(\gamma)\cdot\xe^{
			-\abs{B(V(\gamma))}\cdot
			\left(e(\varnothing)-e(\xB)\right)
		}\;, & \text{if $\gamma$ is order.}
	\end{cases}
\end{align}
Let us recall that the set $\Delta_\Lambda^\disorder$
(resp., $\Delta_\Lambda^\order$)
does not contain only families of disorder (resp., order) contours,
but all families compatible with the disordered (resp., ordered) boundary condition.
To make the proof more transparent, let us define
\begin{align}
\label{eq:Z-tilde-f}
	Y^\xdisorder(\Lambda) &=
		\sum_{\partial\in\Delta_\Lambda^\disorder}
		\prod_{\gamma\in\partial}
		\widetilde{\rho}(\gamma) \;, \\
\label{eq:Z-tilde-w}
	Y^\xorder(\Lambda) &=
		\sum_{\partial\in\Delta_\Lambda^\order}
		\prod_{\gamma\in\partial}
		\widetilde{\rho}(\gamma) \;,
\end{align}
so that
\begin{align}
	Z^{\RC.\disorder}(\Lambda) &=
		\xe^{-\abs{B(\Lambda)}e(\varnothing)}\cdot
		Y^\xdisorder(\Lambda) \;, \\
	Z^{\RC.\order}(\Lambda) &=
		\xe^{-\abs{B(\Lambda)}e(\xB)}\cdot
		Y^\xorder(\Lambda) \;.
\end{align}
Notice that the above contour representation for the partition functions
$Z^{\RC.\disorder}(\Lambda)$ and $Z^{\RC.\order}(\Lambda)$
lacks the condition of independence between compatible contours.

The following lemma is similar to Lemma~1 of~\cite{Kot94}.
\begin{lemma}
\label{lem:contour-model}
	The partition functions for the $r$-biased random-cluster model
	on volume $\Lambda$ with the disordered and ordered boundary conditions
	can be written as
	\begin{align}
		Z^{\RC.\disorder}(\Lambda) &=
			\xe^{-\abs{B(\Lambda)}e(\varnothing)}
			\mathscr{Z}(\Lambda\,|\,\xi^\xdisorder) \;, \\
		Z^{\RC.\order}(\Lambda) &=
			\xe^{-\abs{B(\Lambda)}e(\xB)}
			\mathscr{Z}(\Lambda\,|\,\xi^\xorder) \;,
	\end{align}
	where the weights $\xi^\xdisorder$ and $\xi^\xorder$
	are defined by
	\begin{align}
		\xi^\xdisorder(\gamma) &=
			\begin{cases}
				\displaystyle{%
				\rho(\gamma)\frac{
					Z^{\RC.\order}(\interior\gamma)
				}{
					Z^{\RC.\disorder}(\interior\gamma)
				}\;,
				}\qquad	& \text{if $\gamma$ disorder,} \\
				0\;, 	& \text{otherwise,}
			\end{cases}
	\end{align}
	and
	\begin{align}
	\label{eq:xi-order}
		\xi^\xorder(\gamma) &=
			\begin{cases}
				\displaystyle{%
				\rho(\gamma)\xe^{\abs{B(\gamma)}e(\xB)}\frac{
					Z^{\RC.\disorder}(V(\gamma))
				}{
					Z^{\RC.\order}(\interior\gamma)
				}\;,
				}\qquad	& \text{if $\gamma$ is order,} \\
				0\;,		& \text{otherwise.}
			\end{cases}
	\end{align}
\end{lemma}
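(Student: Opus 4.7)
The plan is to prove both identities simultaneously by strong induction on the bond count $|B(\Lambda)|$; the base case $|B(\Lambda)|=0$ is trivial since both sides reduce to the empty product $1$. For the inductive step, the key observation is to decompose any admissible contour family $\partial$ according to its sub-family $\Upsilon$ of external contours. The boundary condition forces every contour in $\Upsilon$ to be of disorder (resp.\ order) type when starting from $\Delta_\Lambda^\disorder$ (resp.\ $\Delta_\Lambda^\order$), and admissibility then forces the sub-family of $\partial$ inside each $\interior\gamma$ (for $\gamma\in\Upsilon$) to be itself an admissible family with the \emph{opposite} type on its outermost layer. Since the interiors of distinct external contours are pairwise disjoint, this factorises the sums as
\begin{align*}
Y^\xdisorder(\Lambda) &= \sum_\Upsilon \prod_{\gamma \in \Upsilon} \widetilde\rho(\gamma)\,Y^\xorder(\interior\gamma), & Y^\xorder(\Lambda) &= \sum_\Upsilon \prod_{\gamma \in \Upsilon} \widetilde\rho(\gamma)\,Y^\xdisorder(\interior\gamma),
\end{align*}
where $\Upsilon$ ranges over all pairwise external families of disorder (resp.\ order) contours in $\Lambda$, the empty family contributing the term~$1$.

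The second step is algebraic. Combine the above recursions with the key identities
\[
\widetilde\rho(\gamma)\,Y^\xorder(\interior\gamma) = \xi^\xdisorder(\gamma)\,Y^\xdisorder(\interior\gamma), \qquad \widetilde\rho(\gamma)\,Y^\xdisorder(\interior\gamma) = \xi^\xorder(\gamma)\,Y^\xorder(\interior\gamma),
\]
obtained by unfolding the definitions of $\widetilde\rho,\xi^\xdisorder,\xi^\xorder$ together with the identifications $Y^\xdisorder(\Lambda') = \xe^{|B(\Lambda')|e(\varnothing)} Z^{\RC.\disorder}(\Lambda')$ and $Y^\xorder(\Lambda') = \xe^{|B(\Lambda')|e(\xB)} Z^{\RC.\order}(\Lambda')$. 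For the order case, the identity $|B(V(\gamma))|=|B(\interior\gamma)|+|B(\gamma)|$ produces the extra factor $\xe^{|B(\gamma)|e(\xB)}$ appearing in $\xi^\xorder$, while the replacement of $Z^{\RC.\disorder}(\interior\gamma)$ by $Z^{\RC.\disorder}(V(\gamma))$ is accounted for by the auxiliary relation $Z^{\RC.\disorder}(V(\gamma)) = \xe^{-|B(\gamma)|e(\varnothing)}\,Z^{\RC.\disorder}(\interior\gamma)$. This auxiliary relation follows because no admissible family on $V(\gamma)$ with disordered boundary can involve a contour using any of the bonds of $B(\gamma)$ (those bonds lie on the outer edge of $V(\gamma)$), so the two partition functions differ only by the energy contribution of $|B(\gamma)|$ disorder bonds sitting outside every contour.

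Applying the inductive hypothesis $Y^\xorder(\interior\gamma)=\mathscr{Z}(\interior\gamma\,|\,\xi^\xorder)$ and $Y^\xdisorder(\interior\gamma)=\mathscr{Z}(\interior\gamma\,|\,\xi^\xdisorder)$ to each factor and summing over $\Upsilon$ yields
\[
Y^\xdisorder(\Lambda) = \sum_\Upsilon \prod_{\gamma\in\Upsilon}\xi^\xdisorder(\gamma)\,\mathscr{Z}(\interior\gamma\,|\,\xi^\xdisorder) = \mathscr{Z}(\Lambda\,|\,\xi^\xdisorder),
\]
the last equality holding because any mutually compatible family of disorder contours in $\Lambda$ decomposes uniquely into an external sub-family $\Upsilon$ together with independent sub-families in the disjoint interiors $\interior\gamma$; mutual compatibility (rather than full admissibility) is enough here since $\xi^\xdisorder$ vanishes on order contours. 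The ordered case is completely parallel.

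The principal obstacle is the bond-counting bookkeeping at the order/disorder interfaces, especially the appearance of $V(\gamma)$ rather than $\interior\gamma$ in the formula for $\xi^\xorder$. This reflects the fact that the contour bonds $B(\gamma)$ of an order contour are themselves disorder bonds and must be absorbed into the ``disordered side'' of the ledger when stripping off the outermost order contour in the recursion; for a disorder contour the analogous bonds $B(\gamma)$ are already accounted for by the factor $(q+r)^{-|\gamma|/4}$ in $\rho(\gamma)$, so no $V(\gamma)$ appears in $\xi^\xdisorder$. Verifying the precise cancellation of the exponential prefactors is the step most prone to sign errors.
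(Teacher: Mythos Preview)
Your overall strategy---induction on the volume, splitting an admissible family into its external layer and the sub-families sitting inside---is exactly the paper's approach. The disordered case is handled correctly. The problem lies in the ordered case, where your recursion and your ``auxiliary relation'' are both false.

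You write the recursion as
\[
Y^\xorder(\Lambda)=\sum_{\Upsilon}\prod_{\gamma\in\Upsilon}\widetilde\rho(\gamma)\,Y^\xdisorder(\interior\gamma),
\]
but when you strip off an external \emph{order} contour $\gamma$, the admissible sub-family left over does \emph{not} live in $\Delta_{\interior\gamma}^\disorder$; it lives in $\Delta_{V(\gamma)}^\disorder$. The point is that the next inward layer consists of disorder contours, and a disorder contour $\gamma'$ can legitimately use bonds of $B(\gamma)$: mutual compatibility only forbids sharing site--bond \emph{pairs}, not bonds. Concretely, take $\gamma$ to be the order contour with $\interior\gamma$ a single edge $c=\{j_1,j_2\}$ and $B(\gamma)$ the six remaining bonds incident to $j_1,j_2$. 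Then the configuration $X'=\{c\}$ has a single disorder contour whose bond set is exactly $B(\gamma)$; this contour is in $V(\gamma)$ but not in $\interior\gamma$. Hence $\Delta_{V(\gamma)}^\disorder\supsetneq\Delta_{\interior\gamma}^\disorder$, so $Y^\xdisorder(V(\gamma))\neq Y^\xdisorder(\interior\gamma)$ and your auxiliary relation
\[
Z^{\RC.\disorder}(V(\gamma))=\xe^{-|B(\gamma)|e(\varnothing)}\,Z^{\RC.\disorder}(\interior\gamma)
\]
fails. Your justification (``those bonds lie on the outer edge of $V(\gamma)$'') is exactly what breaks down: the outer-edge bonds can be shared with an inner disorder contour through the \emph{interior} endpoint.

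The fix is simply to write the correct recursion
\[
Y^\xorder(\Lambda)=\sum_{\Upsilon}\prod_{\gamma\in\Upsilon}\widetilde\rho(\gamma)\,Y^\xdisorder(V(\gamma)),
\]
after which the key identity $\widetilde\rho(\gamma)\,Y^\xdisorder(V(\gamma))=\xi^\xorder(\gamma)\,Y^\xorder(\interior\gamma)$ is a one-line algebraic check using $|B(V(\gamma))|=|B(\gamma)|+|B(\interior\gamma)|$, with no auxiliary relation needed. This is precisely how the paper proceeds.
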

\begin{proof}
	The key step to prove the lemma is to write a recursion
	for the above partition functions by factoring
	the contribution of the interior of each external contour.
	Let us denote by $\mathcal{E}^\disorder_\Lambda$
	the set of mutually compatible families of disorder contours whose elements
	are all external.  (We include the empty family in $\mathcal{E}^\disorder_\Lambda$.)
	Note that the elements of $\mathcal{E}^\disorder_\Lambda$
	are all admissible and in $\Delta^\disorder_\Lambda$.
	Moreover, for each admissible family $\partial\in\Delta^\disorder_\Lambda$,
	the sub-family of $\partial$ consisting of its external contours
	is in $\mathcal{E}^\disorder_\Lambda$.
	Similarly,  we denote by $\mathcal{E}^\order_\Lambda$
	the set of mutually compatible families of order contours whose elements
	are all external.
	The partition functions $Y^\xdisorder$ and $Y^\xorder$
	satisfy the following recursions:
	\begin{align}
		\label{eq:recursion:disordered}
		Y^\xdisorder(\Lambda) &=
			\sum_{\theta\in\mathcal{E}^\disorder_\Lambda}
			\prod_{\gamma\in\theta}\,
			\widetilde{\rho}(\gamma)\cdot Y^\xorder(\interior\gamma)
				\;, \\
		Y^\xorder(\Lambda) &=
			\sum_{\theta\in\mathcal{E}^\order_\Lambda}
			\prod_{\gamma\in\theta}\,
			\widetilde{\rho}(\gamma)\cdot Y^\xdisorder(V(\gamma)) \;.
	\end{align}
	Similar recursions hold for the contour model partition functions
	$\mathscr{Z}(\cdot\,|\,\xi^\xdisorder)$ and
	$\mathscr{Z}(\cdot\,|\,\xi^\xorder)$:
	\begin{align}
		\label{eq:recursion:disordered:abstract}
		\mathscr{Z}(\Lambda\,|\,\xi^\xdisorder) &=
			\sum_{\theta\in\mathcal{E}^\disorder_\Lambda}
			\prod_{\gamma\in\theta}\,
			\xi^\xdisorder(\gamma)\cdot\mathscr{Z}(\interior\gamma\,|\,\xi^\xdisorder) \;, \\
		\mathscr{Z}(\Lambda\,|\,\xi^\xorder) &=
			\sum_{\theta\in\mathcal{E}^\order_\Lambda}
			\prod_{\gamma\in\theta}\,
			\xi^\xorder(\gamma)\cdot\mathscr{Z}(\interior\gamma\,|\,\xi^\xorder) \;.
	\end{align}
	Note that, since every order contour is weighted $0$ by $\xi^\xdisorder$,
	we can ignore in $\mathscr{Z}(\cdot\,|\,\xi^\xdisorder)$
	the families containing order contours,
	and similarly the disorder contours can be ignored
	in $\mathscr{Z}(\cdot\,|\,\xi^\xorder)$.
	
	We use induction on volume $\Lambda$ to prove that
	$Y^\xdisorder(\Lambda)=\mathscr{Z}(\Lambda\,|\,\xi^\xdisorder)$.
	Suppose that for every sub-volume $\Lambda'\subsetneq\Lambda$
	we have
	$Y^\xdisorder(\Lambda')=
		\mathscr{Z}(\Lambda'\,|\,\xi^\xdisorder)$.
	Let $\theta\in\mathcal{E}_{\Lambda}^\disorder$.
	We want to show that the terms corresponding to $\theta$
	in the recursion
	formulas~(\ref{eq:recursion:disordered}) and~(\ref{eq:recursion:disordered:abstract})
	for $Y^\xdisorder(\Lambda)$ and
	$\mathscr{Z}(\Lambda\,|\,\xi^\xdisorder)$
	are equal.
	If $\theta$ is empty, the equality is trivial
	(we consider the product over an empty set to be $1$).
	Otherwise, for every $\gamma\in\theta$,
	we have $\interior\gamma\subsetneq\Lambda$,
	which implies
	$\mathscr{Z}(\interior\gamma\,|\,\xi^\xdisorder) =
		Y^\xdisorder(\interior\gamma)$.
	Using the definitions of
	$\widetilde{\rho}$ and $\xi^\xdisorder$
	we obtain that
	\begin{align}
		\prod_{\gamma\in\theta}\,
			\widetilde{\rho}(\gamma)\cdot Y^\xorder(\interior\gamma)
		&=
		\prod_{\gamma\in\theta}\,
			\xi^\xdisorder(\gamma)\cdot Y^\xdisorder(\interior\gamma) \;.
	\end{align}
	Therefore, $Y^\xdisorder(\Lambda)=
	\mathscr{Z}(\Lambda\,|\,\xi^\xdisorder)$.
	The starting point of the induction is
	when the only element of $\mathcal{E}^\disorder_\Lambda$
	is the empty family.
	
	The argument for
	$Y^\xorder(\Lambda)=
		\mathscr{Z}(\Lambda\,|\,\xi^\xorder)$
	is similar.
\end{proof}

Note that there is no complete correspondence
between the configurations of the $r$-biased random-cluster model
and the contour families of the corresponding
abstract contour model.
Nevertheless, the probability of appearance of a contour
as an external contour is the same in both models.
Let
$\phi_\Lambda^\order$ 
denote the probability distribution
associated to $Z^{\RC.\order}(\Lambda)$. 
We consider $\phi_\Lambda^\order$ as a measure on
the infinite-volume bond configurations $X\subseteq\xB$,
which is concentrated on the set $\{X: \xpartial X\in\Delta_\Lambda^\order\}$.
Likewise, $\phi_\Lambda^\disorder$ will denote the
measure corresponding to $Z^{\RC.\disorder}(\Lambda)$,
which is concentrated on the set $\{X: \xpartial X\in\Delta_\Lambda^\disorder\}$.

\begin{corollary}
	Let $\Lambda$ be a finite volume
	and $\theta\in\mathcal{E}_\Lambda^\order$
	a family of external mutually compatible order contours.
	Then,
	\begin{align}
		\phi_\Lambda^\order\{X: \xpartial_{\extern}X=\theta\}
			&=
			\frac{%
				\prod_{\gamma\in\theta}
				\xi^\xorder(\gamma)\mathscr{Z}(\interior\gamma\,|\,\xi^\xorder)
			}{%
				\mathscr{Z}(\Lambda\,|\,\xi^\xorder)
			} \;,
	\end{align}
	where
	$\xpartial_{\extern}X$ is the family of external contours
	of $X$.
	A similar statement holds for
	the probability of 
	families of external mutually compatible disorder contours
	under 
	$\phi_\Lambda^\disorder$.
\end{corollary}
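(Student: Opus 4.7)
The proof amounts to unpacking the recursion that already appears in the proof of Lemma~\ref{lem:contour-model} and keeping track of the term corresponding to the prescribed family of external contours. The plan is to write the probability $\phi_\Lambda^\order\{X:\xpartial_{\extern}X=\theta\}$ as the ratio of two partition sums, replace the numerator by the summand of~(\ref{eq:recursion:disordered}) evaluated at $\theta$ (for the ordered boundary), and then convert to the weights $\xi^\xorder$ by a direct algebraic manipulation.

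First I would note that, by the very definition of $\phi_\Lambda^\order$, the probability in question equals the sum of the random-cluster weights of all $X$ whose external boundary equals $\theta$, divided by $Z^{\RC.\order}(\Lambda)$. Using the correspondence between configurations and admissible contour families, and conditioning on the interior of each contour in $\theta$ being filled by an admissible sub-family compatible with the disordered boundary condition, this sum splits as
\begin{align*}
    \sum_{\substack{\partial\in\Delta_\Lambda^\order\\ \partial_\extern=\theta}}
    \xe^{-|R_\Lambda^\xorder(\partial)|\,e(\xB)-|R_\Lambda^\xdisorder(\partial)|\,e(\varnothing)}\prod_{\gamma\in\partial}\rho(\gamma)
    &= \xe^{-|B(\Lambda)|\,e(\xB)}\prod_{\gamma\in\theta}\widetilde{\rho}(\gamma)\,Y^\xdisorder(\interior\gamma),
\end{align*}
which is precisely the term corresponding to $\theta$ in the recursion~(\ref{eq:recursion:disordered}) for $Y^\xorder(\Lambda)$ after passing to $Z^{\RC.\order}(\Lambda)=\xe^{-|B(\Lambda)|e(\xB)}Y^\xorder(\Lambda)$. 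This identification is the only genuinely geometric step; once granted, the remainder is bookkeeping.

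Next I would rewrite $\widetilde{\rho}(\gamma)\,Y^\xdisorder(\interior\gamma)$ so that the weight $\xi^\xorder(\gamma)$ and the partition function $\mathscr{Z}(\interior\gamma\,|\,\xi^\xorder)$ appear. Using $Y^\xdisorder(V(\gamma))=\xe^{|B(V(\gamma))|e(\varnothing)}Z^{\RC.\disorder}(V(\gamma))$, the definition of $\widetilde{\rho}$ for order contours, the decomposition $|B(V(\gamma))|=|B(\gamma)|+|B(\interior\gamma)|$, and the relation $Z^{\RC.\order}(\interior\gamma)=\xe^{-|B(\interior\gamma)|e(\xB)}\mathscr{Z}(\interior\gamma\,|\,\xi^\xorder)$ supplied by Lemma~\ref{lem:contour-model}, one checks by direct computation that
\begin{align*}
    \widetilde{\rho}(\gamma)\,Y^\xdisorder(V(\gamma)) &=
    \rho(\gamma)\,\xe^{|B(\gamma)|e(\xB)}\,Z^{\RC.\disorder}(V(\gamma))\cdot\xe^{|B(\interior\gamma)|e(\xB)} \\
    &= \xi^\xorder(\gamma)\,\mathscr{Z}(\interior\gamma\,|\,\xi^\xorder),
\end{align*}
exactly matching the definition~(\ref{eq:xi-order}) of $\xi^\xorder$.

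Substituting this identity into the factored numerator, the exponential prefactor $\xe^{-|B(\Lambda)|e(\xB)}$ cancels with the corresponding prefactor in $Z^{\RC.\order}(\Lambda)=\xe^{-|B(\Lambda)|e(\xB)}\mathscr{Z}(\Lambda\,|\,\xi^\xorder)$, yielding the claimed formula. The analogous statement for $\phi_\Lambda^\disorder$ follows by interchanging the roles of order and disorder throughout. I expect the only mildly delicate point to be the first step: namely, verifying that fixing $\partial_\extern=\theta$ really makes the sum over admissible extensions inside each $\interior\gamma$ factorize as a product of disordered partition functions $Y^\xdisorder(\interior\gamma)$, which relies on the fact that an external order contour forces its immediate interior to begin with a disordered region, together with the additivity of ordered/disordered bond counts across the partition $\Lambda=(\Lambda\setminus\bigcup_{\gamma\in\theta}V(\gamma))\sqcup\bigsqcup_{\gamma\in\theta}V(\gamma)$.
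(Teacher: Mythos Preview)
Your argument is correct and follows essentially the same route as the paper's proof: express the probability via the $\theta$-summand in the recursion for $Y^\xorder$, then multiply and divide by $Y^\xorder(\interior\gamma)$ (equivalently, insert $Z^{\RC.\order}(\interior\gamma)$) to make $\xi^\xorder(\gamma)$ and $\mathscr{Z}(\interior\gamma\,|\,\xi^\xorder)$ appear. The only slip is that in your first display you wrote $Y^\xdisorder(\interior\gamma)$ where it should be $Y^\xdisorder(V(\gamma))$ --- as you correctly use in the second display and in the final partition $\Lambda=(\Lambda\setminus\bigcup_{\gamma\in\theta}V(\gamma))\sqcup\bigsqcup_{\gamma\in\theta}V(\gamma)$ --- since for an external \emph{order} contour the recursion for $Y^\xorder$ fills $V(\gamma)$ (not just $\interior\gamma$) with a disordered partition function.
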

\begin{proof}
	\begin{align}
		\phi_\Lambda^\order\{X: \xpartial_{\extern}X=\theta\} &=
			\frac{%
				\prod_{\gamma\in\theta}\widetilde{\rho}(\gamma)
				Y^\xdisorder(V(\gamma))
			}{%
				Y^\xorder(\Lambda)
			} \\
		&=
			\frac{%
				\prod_{\gamma\in\theta}
				\left[\rho(\gamma)\cdot
				\xe^{
					-\abs{B(V(\gamma))}\cdot
					\left(e(\varnothing-e(\xB))\right)
				}
				\frac{%
					Y^\xdisorder(V(\gamma))
				}{%
					Y^\xorder(\interior\gamma)
				}
				\right]
				Y^\xorder(\interior\gamma)
			}{%
				Y^\xorder(\Lambda)
			} \\
		&=
			\frac{%
				\prod_{\gamma\in\theta}
				\xi^\xorder(\gamma)\mathscr{Z}(\interior\gamma\,|\,\xi^\xorder)
			}{%
				\mathscr{Z}(\Lambda\,|\,\xi^\xorder)
			}
	\end{align}
\end{proof}

The next corollary provides an estimate
for the probability that a finite region of the lattice 
is surrounded by a contour (see e.g.~\cite{Gri06}).
For a finite set of sites $A$ in the lattice,
let $\Gamma_A$ denote the set of all finite contours
that have $A$ in their interiors.
\begin{corollary}
	For every finite volume $\Lambda$ and
	every finite set $A\subseteq S(\Lambda)$ we have
	\begin{align}
		\phi_\Lambda^\order\{
			X: \xpartial X\cap\Gamma_A\neq\varnothing
		\}
		&\leq
		\sum_{\gamma\in\Gamma_A} \xi^\xorder(\gamma)
		=
		\sum_{\substack{\gamma\in\Gamma_A\\ \gamma\text{ \rm order}}}
		\xi^\xorder(\gamma) \;.
	\end{align}
	A similar bound holds in the disordered case.
\end{corollary}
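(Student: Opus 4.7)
The plan is a standard Peierls argument layered on top of the previous corollary. The first ingredient is a purely geometric observation: if $\xpartial X\cap\Gamma_A\neq\varnothing$ for some $X$ with $\xpartial X\in\Delta_\Lambda^\order$, then already $\xpartial_{\extern}X\cap\Gamma_A\neq\varnothing$. Indeed, pick $\gamma^*\in\xpartial X\cap\Gamma_A$ so that $\interior\gamma^*$ is maximal, under set inclusion, among the interiors of contours in $\xpartial X\cap\Gamma_A$. If $\gamma^*$ were not external in $\xpartial X$, there would exist some $\gamma''\in\xpartial X$ with $B(\gamma^*)\subseteq\interior\gamma''$; but then $\interior\gamma^*\subseteq\interior\gamma''$, so $A\subseteq\interior\gamma''$, contradicting the maximality of $\gamma^*$. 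Under the ordered boundary condition every external contour of $\xpartial X$ is of order type, so $\gamma^*\in\Gamma_A\cap\Gamma^\xorder$. The union bound then gives
\[
\phi_\Lambda^\order\{X:\xpartial X\cap\Gamma_A\neq\varnothing\}
\;\leq\;
\sum_{\substack{\gamma\in\Gamma_A\\ \gamma\text{ order}}}
\phi_\Lambda^\order\{X:\gamma\in\xpartial_{\extern}X\}.
\]

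It remains to show $\phi_\Lambda^\order\{X:\gamma\in\xpartial_{\extern}X\}\leq\xi^\xorder(\gamma)$ for each order contour $\gamma$. By the previous corollary this probability equals
\[
\sum_{\substack{\theta\in\mathcal{E}_\Lambda^\order\\ \theta\ni\gamma}}
\frac{\prod_{\gamma'\in\theta}\xi^\xorder(\gamma')\,\mathscr{Z}(\interior\gamma'\mid\xi^\xorder)}{\mathscr{Z}(\Lambda\mid\xi^\xorder)}.
\]
Writing $\theta=\{\gamma\}\sqcup\theta'$ and factoring out the $\gamma$-contribution reduces matters to the inequality
\[
\mathscr{Z}(\interior\gamma\mid\xi^\xorder)\cdot
\sum_{\theta'}\prod_{\gamma'\in\theta'}\xi^\xorder(\gamma')\,\mathscr{Z}(\interior\gamma'\mid\xi^\xorder)
\;\leq\;
\mathscr{Z}(\Lambda\mid\xi^\xorder),
\]
which then yields the claimed bound. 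The second equality in the statement of the corollary is automatic because $\xi^\xorder$ vanishes on disorder contours by its very definition.

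The principal obstacle is the last inequality. I would handle it by exhibiting an injective weight-preserving map that sends each triple consisting of (i) a family $\theta'$ with $\theta'\cup\{\gamma\}\in\mathcal{E}_\Lambda^\order$, (ii) a sub-family in $\interior\gamma$, and (iii) sub-families in each $\interior\gamma'$ for $\gamma'\in\theta'$, to their union $\partial\in\mathcal{M}_\Lambda$; the $\xi^\xorder$-weight of the triple equals the $\xi^\xorder$-weight of $\partial$, so that the left-hand side is realised as a sub-sum of $\mathscr{Z}(\Lambda\mid\xi^\xorder)=\sum_{\partial\in\mathcal{M}_\Lambda}\prod_{\gamma''\in\partial}\xi^\xorder(\gamma'')$ and positivity of the weights closes the argument. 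Injectivity is straightforward once $\gamma$ is fixed: $\theta'$ is recovered as the external contours of $\partial$ whose bonds lie outside $\interior\gamma$, the sub-family in $\interior\gamma$ as the contours of $\partial$ with bonds in $\interior\gamma$, and similarly for each $\interior\gamma'$. An equivalent, possibly cleaner route is to introduce the abstract contour-model measure $\mu^\xorder(\partial)\propto\prod_{\gamma'\in\partial}\xi^\xorder(\gamma')$ on $\mathcal{M}_\Lambda$, verify via the recursion of Lemma~\ref{lem:contour-model} that $\phi_\Lambda^\order\{X:\gamma\in\xpartial_{\extern}X\}=\mu^\xorder\{\partial:\gamma\in\partial_{\extern}\}$, and conclude by the elementary contour-removal bound $\mu^\xorder\{\partial:\gamma\in\partial\}\leq\xi^\xorder(\gamma)$, which follows from the injection $\partial\mapsto\partial\setminus\{\gamma\}$. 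The disordered case is treated identically by exchanging the roles of order and disorder throughout.
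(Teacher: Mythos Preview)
Your argument is correct and mirrors the paper's proof: reduce to external contours, apply the previous corollary, factor out $\gamma$, and use non-negativity of the weights to compare with $\mathscr{Z}(\Lambda\,|\,\xi^\xorder)$. The only cosmetic difference is that the paper handles your key inequality in two strokes---first bounding $\sum_{\theta'}\prod_{\gamma'\in\theta'}\xi^\xorder(\gamma')\,\mathscr{Z}(\interior\gamma'\,|\,\xi^\xorder)\leq\mathscr{Z}(\Lambda\setminus\interior\gamma\,|\,\xi^\xorder)$ via the recursion, then invoking the sub-multiplicativity $\mathscr{Z}(\interior\gamma\,|\,\xi^\xorder)\,\mathscr{Z}(\Lambda\setminus\interior\gamma\,|\,\xi^\xorder)\leq\mathscr{Z}(\Lambda\,|\,\xi^\xorder)$---whereas you do it in one step with an explicit injection; the content is the same.
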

\begin{proof}
	Taking into account the ordered boundary condition,
	we have that
	if $A$ is surrounded by a contour in $\Lambda$, it is also
	surrounded by an external order contour in $\Lambda$, that is,
	\begin{align}
		\{X: \xpartial X\cap\Gamma_A\neq\varnothing\} &=
		\{X: \xpartial_\extern X\cap\Gamma_A\neq\varnothing\} \;.
	\end{align}
	By the previous corollary, we can bound the
	probability of a contour $\gamma$ appearing as
	an external contour by
	\begin{align}
		\phi_\Lambda^\order \{X: \xpartial_\extern X\ni\gamma\} &=
			\sum_{
				\substack{\theta\in\mathcal{E}^\order_\Lambda\\
				\theta\ni\gamma
			}}
			\frac{%
				\prod_{\widehat{\gamma}\in\theta}
				\xi^\xorder(\widehat{\gamma})
				\mathscr{Z}(\interior\widehat{\gamma}\,|\,\xi^\xorder)
			}{%
				\mathscr{Z}(\Lambda\,|\,\xi^\xorder)
			} \\
		&=
			\frac{%
				\xi^\xorder(\gamma)
					\mathscr{Z}(\interior\gamma\,|\,\xi^\xorder)
				\sum_{
					\substack{\theta\in\mathcal{E}^\order_\Lambda\\
					\theta\ni\gamma}
				} \prod_{\substack{\widehat{\gamma}\in\theta\\ \widehat{\gamma}\neq\gamma}}
				\xi^\xorder(\widehat{\gamma})
				\mathscr{Z}(\interior\widehat{\gamma}\,|\,\xi^\xorder)
			}{%
				\mathscr{Z}(\Lambda\,|\,\xi^\xorder)
			} \\
		&\leq
			\frac{%
				\xi^\xorder(\gamma)
				\mathscr{Z}(\interior\gamma\,|\,\xi^\xorder)
				\mathscr{Z}(\Lambda\setminus\interior\gamma\,|\,\xi^\xorder)	
			}{%
				\mathscr{Z}(\Lambda\,|\,\xi^\xorder)
			} \\
		&\leq
			\xi^\xorder(\gamma) \;.
	\end{align}
	The last step follows from the fact that all the terms
	in the partition function
	$\mathscr{Z}(\Lambda\,|\,\xi^\xorder)$ are non-negative, hence
	\begin{align}
		\mathscr{Z}(\Lambda\,|\,\xi^\xorder) &\geq
		\mathscr{Z}(\interior\gamma\,|\,\xi^\xorder)
		\mathscr{Z}(\Lambda\setminus\interior\gamma\,|\,\xi^\xorder) \;.
	\end{align}
	We obtain that
	\begin{align}
		\phi_\Lambda^\order\{
			X: \xpartial X\cap\Gamma_A\neq\varnothing
		\} &=
		\phi_\Lambda^\order\{
			X: \xpartial_\extern X\cap\Gamma_A\neq\varnothing
		\} \\
		&=
		\sum_{\gamma\in\Gamma_A}
		\phi_\Lambda^\order \{X: \xpartial_\extern X\ni\gamma\} \\
		&\leq
		\sum_{\gamma\in\Gamma_A} \xi^\xorder(\gamma) \;.
	\end{align}
\end{proof}


A standard argument using the positive correlation property of
$\phi_\Lambda^\order$ (resp., $\phi_\Lambda^\disorder$)
can be used to show that the thermodynamic limit of
$\phi_\Lambda^\order$ (resp., $\phi_\Lambda^\disorder$)
exists and is unique (see Appendix~\ref{apx:rc:properties}).
The limit measure $\phi^\order$ (resp., $\phi^\disorder$)
satisfies the same bound as in the above corollary.
If the weights $\xi^\xorder$ (resp., $\xi^\xdisorder$)
decay sufficiently fast, the latter bound
implies that under $\phi^\order$ (resp., $\phi^\disorder$),
the configuration of the model almost surely consists of
a unique infinite sea of order (resp., disorder)
with finite islands of disorder (resp., order).
By a ``sea'' of order (resp., disorder) in a random-cluster configuration
we mean a connected component of present (resp., absent) bonds.

\begin{corollary}
\label{cor:peierls}
	For every finite set $A\subseteq\xS$ we have
	\begin{align}
		\phi^\order\{
			X: \xpartial X\cap\Gamma_A\neq\varnothing
		\}
		&\leq
		\sum_{\gamma\in\Gamma_A} \xi^\xorder(\gamma)
		=
		\sum_{\substack{\gamma\in\Gamma_A\\ \gamma\text{ \rm order}}}
		\xi^\xorder(\gamma) \;.
	\end{align}
	Furthermore, if the sum $\sum_{\gamma\in\Gamma_A} \xi^\xorder(\gamma)$
	converges, we have
	\begin{align}
		\phi^\order\left(
			\begin{array}{c}
				\text{\rm $\exists$ unique infinite sea of order} \\
				\text{\rm with finite islands of disorder}
			\end{array}
		\right)=1 \;,
	\end{align}	
	A similar statement holds in the disordered case.
\end{corollary}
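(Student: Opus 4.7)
The plan is to derive the probability bound by passing the finite-volume bound from the previous corollary to the thermodynamic limit, and then to deduce the almost-sure geometric conclusion via a Borel--Cantelli-type tail estimate.

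For the bound, I would rely on the existence of the weak limit $\phi^\order = \lim_m \phi_{\Lambda_m}^\order$ along an exhausting sequence of volumes, guaranteed by the positive-correlation argument in Appendix~\ref{apx:rc:properties}. For any fixed finite contour $\gamma$, the event $\{X: \gamma \in \xpartial X\}$ depends on only finitely many bonds and hence is a cylinder event in the product topology on $\{0,1\}^{\xB}$. Writing $\Gamma_A^{(\leq n)}$ for the finite subset of $\Gamma_A$ consisting of contours with at most $n$ bonds, the event $\{X: \xpartial X \cap \Gamma_A^{(\leq n)} \neq \varnothing\}$ is a finite union of cylinder events, so weak convergence of $\phi_{\Lambda_m}^\order$ combined with the uniform finite-volume bound of the preceding corollary yields
\begin{align}
	\phi^\order\{X: \xpartial X \cap \Gamma_A^{(\leq n)} \neq \varnothing\} \leq \sum_{\gamma \in \Gamma_A} \xi^\xorder(\gamma) \;,
\end{align}
and monotone continuity as $n \to \infty$ then delivers the claimed bound.

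Next, assume the sum is finite; by translation invariance of $\xi^\xorder$ it remains finite with $A$ replaced by any singleton $\{i\}$. Letting $\Gamma_i^{(>n)}$ denote the contours of size greater than $n$ surrounding $i$, the tail $\sum_{\gamma \in \Gamma_i^{(>n)}} \xi^\xorder(\gamma)$ tends to zero as $n\to\infty$, so the bound above gives $\phi^\order\{X: \xpartial X \cap \Gamma_i^{(>n)} \neq \varnothing\} \to 0$. Hence, $\phi^\order$-almost surely the sizes of contours surrounding $i$ are uniformly bounded; since at most finitely many contours of any bounded size can surround a given site, only finitely many contours surround $i$, and a countable union over sites gives the same conclusion simultaneously for every $i$.

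Finally, for each site $i$ surrounded by at least one contour, I would pick the outermost such contour $\gamma^*(i)$. Inheriting ordered-boundary admissibility from the finite-volume families in $\Delta_\Lambda^\order$, the contour $\gamma^*(i)$ is of order type and encloses $i$ in the finite region $V(\gamma^*(i))$. The complement $\xL \setminus \bigcup_i V(\gamma^*(i))$ is then a single unbounded connected region of present bonds, identifying the unique infinite sea of order, while every disorder component lies within some $\interior \gamma^*(i)$ and is therefore finite. The main obstacle is to justify rigorously that the external-contour-is-order property, imposed at finite volume by the definition of $\Delta_\Lambda^\order$, survives passage to the weak limit; I would argue this by noting that the alternation rule between concentric contours is a local constraint on bond configurations, and therefore preserved by cylinder-level convergence of $\phi_{\Lambda_m}^\order$ almost surely.
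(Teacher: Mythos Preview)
Your argument for the probability bound and the Borel--Cantelli step coincide with the paper's: truncate to contours of bounded extent, pass to the weak limit on the resulting cylinder event, then let the truncation go. The paper phrases the truncation as $\Gamma_{A,\Lambda_n}$ rather than your $\Gamma_A^{(\leq n)}$, but the content is the same.

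The gap is in your final paragraph. You correctly identify the obstacle---that the property ``external contours are of order type'' is imposed by the finite-volume boundary condition and must survive the limit---but your proposed resolution does not work. The alternation rule between nested contours is indeed a local constraint, and it is preserved in the limit; however, alternation alone does not determine the type of the \emph{outermost} contour. A configuration in which every site is surrounded by only finitely many contours, with all alternations respected, is equally consistent with an infinite sea of order (outermost contours of order type) and with an infinite sea of disorder (outermost contours of disorder type). Nothing in cylinder-level convergence rules out the second scenario, because ``outermost'' is a global notion.

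The paper closes this gap differently. After the Borel--Cantelli step establishes $\phi^\order(\mathcal{S}^\xorder\cup\mathcal{S}^\xdisorder)=1$, it shows $\phi^\order(\mathcal{S}^\xdisorder)=0$ by a separate estimate: it introduces the local events $\mathcal{C}_{A,\Lambda_n}$ (existence of a finite volume around $A$ on which the configuration is compatible with the disordered boundary condition), observes that $\mathcal{S}^\xdisorder\subseteq\bigcup_n\mathcal{C}_{A,\Lambda_n}$, and notes that in finite volume $\Lambda_m$ with ordered boundary, any configuration in $\mathcal{C}_{A,\Lambda_n}$ must have an order contour surrounding $A$. Hence $\phi^\order_{\Lambda_m}(\mathcal{C}_{A,\Lambda_n})\leq\sum_{\gamma\in\Gamma_A}\xi^\xorder(\gamma)$, and after passing to the limit and letting $A$ grow, the right-hand side tends to zero, forcing $\phi^\order(\mathcal{S}^\xdisorder)=0$. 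This is precisely the piece of input from the boundary condition that your alternation argument cannot supply.
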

\begin{proof}
	As before, let $\Lambda_n$ denote the $(2n+1)\times(2n+1)$ central square in the lattice.
	For every $n$
	let us define $\Gamma_{A,\Lambda_n}$ as the set of all contours in $\Lambda_n$
	having $A$ in their interiors.  From the previous corollary, we know that
	for every $m>n$, the following bound holds:
	\begin{align}
		\phi_{\Lambda_m}^\order\{
			X: \xpartial X\cap\Gamma_{A,\Lambda_n}\neq\varnothing
		\}
		&\leq
		\sum_{\gamma\in\Gamma_A} \xi^\xorder(\gamma) \;.
	\end{align}
	Since the event $\{X: \xpartial X\cap\Gamma_{A,\Lambda_n}\neq\varnothing\}$
	is local, we obtain
	\begin{align}
		\phi^\order\{
			X: \xpartial X\cap\Gamma_{A,\Lambda_n}\neq\varnothing
		\}
		&\leq
		\sum_{\gamma\in\Gamma_A} \xi^\xorder(\gamma) \;.
	\end{align}	
	due to weak convergence of $\phi_{\Lambda_m}^\order$ to $\phi^\order$.
	Letting $n\to\infty$ the first claim follows.
	
	If $\sum_{\gamma\in\Gamma_A} \xi^\xorder(\gamma)$ converges,
	using a Borel-Cantelli argument, with probability~$1$,
	no infinite cascade of contours appears on the lattice.
	In particular, if we define
	\begin{align}
		\mathcal{S}^\xorder &\IsDef \left\{
			X: \begin{array}{c}
				\text{$X$ has a unique infinite sea of order} \\
				\text{with finite islands of disorder}
			\end{array}
		\right\} \;, \\
		\mathcal{S}^\xdisorder &\IsDef \left\{
			X: \begin{array}{c}
				\text{$X$ has a unique infinite sea of disorder} \\
				\text{with finite islands of order}
			\end{array}
		\right\} \;,
	\end{align}
	the latter implies that $\phi^\order(\mathcal{S}^\xorder\cup\mathcal{S}^\xdisorder)=1$.
	We show that, in fact, $\phi^\order(\mathcal{S}^\xdisorder)=0$.
	
	Let $A$ be a finite set of sites in the lattice.
	For every $X\in\mathcal{S}^\xdisorder$ one can find a volume $\Lambda$
	containing $A$ such that
	$\xpartial\left(X\cap B(\Lambda)\right)\in\Delta_\Lambda^\disorder$
	(i.e., the restriction of $X$ to $\Lambda$ is compatible with
	the disordered boundary condition).
	In particular, if we define
	\begin{align}
		\mathcal{C}_{A,\Lambda_n} &\IsDef \left\{
			X: \text{$\exists$ a finite volume $\Lambda\subseteq\Lambda_n$ with $S(\Lambda)\supseteq A$
			and $\xpartial\left(X\cap B(\Lambda)\right)\in \Delta_\Lambda^\disorder$}
		\right\} \;,
	\end{align}
	we have $\mathcal{C}_{A,\Lambda_1}\subseteq\mathcal{C}_{A,\Lambda_2}\subseteq\cdots$
	and $\mathcal{S}^\xdisorder\subseteq\bigcup_n\,\mathcal{C}_{A,\Lambda_n}$.
	
	If $m,n$ are integers with $m>n$, every configuration $X$ that is compatible with
	the ordered boundary condition on $\Lambda_m$ (i.e., $\xpartial X\in\Delta_{\Lambda_m}^\order$)
	and is in $\mathcal{C}_{A,\Lambda_n}$ necessarily has an order contour
	surrounding $A$.  Therefore, by the previous corollary, we have
	\begin{align}
		\phi^\order_{\Lambda_m}\left(\mathcal{C}_{A,\Lambda_n}\right) \leq
			\phi^\order_{\Lambda_m}\left\{X: \xpartial X\cap\Gamma_A\neq\varnothing\right\} \leq
			\sum_{\gamma\in\Gamma_A}\xi^\xorder(\gamma) \;.
	\end{align}
	Since $\mathcal{C}_{A,\Lambda_n}$ is a local event, by weak convergence
	of $\phi^\order_{\Lambda_m}$ to $\phi^\order$ we have
	\begin{align}
		\phi^\order\left(\mathcal{C}_{A,\Lambda_n}\right) \leq
			\sum_{\gamma\in\Gamma_A}\xi^\xorder(\gamma) \;.
	\end{align}
	Letting $n\to\infty$, we obtain
	\begin{align}
		\phi^\order(\mathcal{S}^\xdisorder) \leq
			\lim_{n\to\infty} \phi^\order(\mathcal{C}_{A,\Lambda_n}) \leq
			\sum_{\gamma\in\Gamma_A}\xi^\xorder(\gamma) \;.
	\end{align}
	The latter holds for every finite set $A\subseteq\xS$,
	which by the convergence of the series, implies that
	$\phi^\order(\mathcal{S}^\xdisorder)=0$.
\end{proof}


\section{Damping of Contour Weights}
\label{sec:damping}

One advantage of working with contour models is that
when the contour weights are sufficiently ``damped''
(i.e., decay exponentially in the length with a sufficiently fast rate)
the free energy exists and is bounded, and moreover,
the error in the finite-volume approximations of the free energy
is of the order of the size of the boundary of the finite volume.
This is the message of the following well-known proposition
(see e.g. Section~2 of~\cite{Zah84}, or Proposition~2.3 of~\cite{Sin82}).

\begin{proposition}
\label{prop:damped}
	Let $\tau>0$ be sufficiently large, and
	suppose that the weight function $\chi:\Gamma\to\xR$
	of a contour model satisfies
	$0\leq\chi(\gamma)\leq\xe^{-\tau\abs{\gamma}}$
	for every contour $\gamma$.
	Then, the limit
	\begin{align}
	\label{eq:pressure:contour-model}
		g(\chi)=\lim_{n\to\infty}
              \frac{1}{\abs{B(\Lambda_n)}}\log\mathscr{Z}(\Lambda_n\,|\,\chi)
	\end{align}
	exists and satisfies
	$g(\chi)\leq\sum_{\gamma:S(\gamma)\ni 0}\chi(\gamma)\leq\xe^{-\tau/2}$. 
	In particular $g(\chi)\rightarrow 0$ as $\tau$ tends to infinity. 
	
	Furthermore, there is a constant $C=C(\tau)$, such that $C\rightarrow 0$ 
	as $\tau$ goes to infinity, and for each finite volume
	$\Lambda\in\mathbb{Z}^d$
	\begin{align}
		\xe^{g(\chi)\abs{B(\Lambda)}-C(\tau)\abs{\partial\Lambda}}
		\leq \mathscr{Z}(\Lambda\,|\,\chi)
		\leq \xe^{g(\chi)\abs{B(\Lambda)}+C(\tau)\abs{\partial\Lambda}}\;,
	\end{align}
	where $\partial\Lambda$ denotes the \emph{boundary} of
	the volume $\Lambda$ and can be defined as the set of bonds
	that are not in $B(\Lambda)$ but are incident to $\Lambda$.
\end{proposition}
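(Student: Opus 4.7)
The plan is to treat $\mathscr{Z}(\Lambda\,|\,\chi)$ as the partition function of an abstract polymer model with contours as polymers and disjointness in $\xS\times\xB$ as the compatibility relation. With the damping $\chi(\gamma)\leq \xe^{-\tau|\gamma|}$ and $\tau$ large, I would apply the standard Kotecky--Preiss cluster expansion, which simultaneously delivers the existence of $g(\chi)$ and the surface-order error bound.

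For the easy upper bound, note that the sum defining $\mathscr{Z}(\Lambda\,|\,\chi)$ is bounded by $\prod_{\gamma\subseteq\Lambda}(1+\chi(\gamma)) \leq \exp\bigl(\sum_{\gamma\subseteq\Lambda}\chi(\gamma)\bigr)$, since every family of mutually compatible contours is just a subset of $\Gamma$ and $\chi\geq 0$. Dividing $\log \mathscr{Z}(\Lambda_n\,|\,\chi)$ by $|B(\Lambda_n)|$ and using translation invariance of $\Gamma$ yields $g(\chi) \leq \sum_{\gamma : S(\gamma)\ni 0}\chi(\gamma)$. A Peierls-type combinatorial bound then controls the latter: the number of contours of size $n$ through a fixed bond is at most $K^n$ for some lattice constant $K$, since co-connected bond sets correspond to connected subsets in the edge-adjacency (dual) graph and the additional site-bond pair labeling in $\xS\times\xB$ costs at most a factor $2^n$. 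With $\chi(\gamma)\leq \xe^{-\tau|\gamma|}$, the geometric series sums to at most $\xe^{-\tau/2}$ for $\tau$ sufficiently large.

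For the existence of the limit and the surface bound, I would invoke the Kotecky--Preiss criterion with auxiliary weight $a(\gamma)=c|\gamma|$; this reduces to requiring $\sum_{\gamma\ni b}\xe^{(c-\tau)|\gamma|}\leq c$ for every fixed bond $b$, which again holds by the Peierls count for $\tau$ large. The expansion then writes $\log\mathscr{Z}(\Lambda\,|\,\chi)$ as an absolutely convergent sum over clusters (connected multisets of pairwise incompatible polymers) entirely contained in $\Lambda$, with Ursell coefficients $\Phi(C)$ satisfying $\sum_{C\ni b}|\Phi(C)|\leq\varepsilon(\tau)$ uniformly in the reference bond $b$, where $\varepsilon(\tau)\to 0$ as $\tau\to\infty$. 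Assigning each cluster to a canonical anchoring bond (say the lexicographically smallest one in its support) and using translation invariance of $\Gamma$, one defines $g(\chi)$ as the sum of $\Phi(C)$ over clusters anchored at any fixed bond $b_0$; independence of the choice of $b_0$ is automatic. The difference $\log\mathscr{Z}(\Lambda\,|\,\chi) - g(\chi)|B(\Lambda)|$ then becomes a signed sum over clusters that straddle $\partial\Lambda$ (anchor inside $\Lambda$ but some polymer exits, or conversely). Each such cluster must meet $\partial\Lambda$, so its total contribution is bounded by $|\partial\Lambda|\cdot\varepsilon(\tau)$, giving the claimed $C(\tau)|\partial\Lambda|$ estimate.

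The main obstacle is verifying the Peierls counting lemma in this specific geometric setup, since contours here are subsets of $\xS\times\xB$ rather than pure bond configurations. The needed bound splits into (i) counting co-connected bond subsets of size $n$ through the fixed bond, bounded by standard lattice-animal estimates on the square lattice, and (ii) choosing the site labels for the pairs, bounded by $2^n$ since each bond in a contour is paired with at most its two endpoints. Once this combinatorial input is in hand, the rest of the proof is pure Kotecky--Preiss machinery, exactly as in the references cited by the proposition.
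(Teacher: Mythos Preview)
The paper does not actually prove this proposition; it simply states it as a well-known fact and refers the reader to Section~2 of Zahradn\'{\i}k's paper and Proposition~2.3 of Sinai's book. So there is no ``paper's proof'' to compare against in any detailed sense.

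Your approach via the Kotecky--Preiss cluster expansion is the standard modern route to results of this type and is essentially what the cited references (especially Zahradn\'{\i}k) do. The outline is correct: the product upper bound gives the crude estimate on $g(\chi)$, the Peierls-type count handles the combinatorics, and the cluster expansion yields both the existence of the limit and the surface-order correction. One small point worth tidying up: in this paper $|\gamma|$ is the number of site-bond pairs in $\gamma\subseteq\xS\times\xB$, not $|B(\gamma)|$, so when you do the lattice-animal count you should index it by $|B(\gamma)|$ and then observe that $|B(\gamma)|\leq|\gamma|\leq 2|B(\gamma)|$ and that each bond has at most three nonempty choices of incident-site labels; this only changes the base of the exponential and does not affect the argument. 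Also, the proposition anchors the sum at a \emph{site} ($S(\gamma)\ni 0$) while you switch to a fixed \emph{bond} midway; either works, but keep the bookkeeping consistent when you pass from $\sum_{\gamma\subseteq\Lambda}\chi(\gamma)$ to the translation-invariant per-bond (or per-site) quantity.
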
	

The main purpose of this section is to identify conditions on
the parameters $(q+r)$ and $\beta$ under which the weights
$\xi^\xdisorder$ and $\xi^\xorder$ are damped
(i.e., satisfy the condition of the above proposition).
We will see that when $(q+r)$ is large, for any value of $\beta>0$
at least one of $\xi^\xdisorder$ and $\xi^\xorder$ is damped,
and moreover there exists a unique $\beta$ at which
both $\xi^\xdisorder$ and $\xi^\xorder$ are damped.
Let us remark that for sufficiently damped weights,
the sum appearing in Corollary~\ref{cor:peierls}
converges, implying that the corresponding phase is stable.

For $\tau>0$ large enough,
let us introduce the \emph{truncated} (i.e., artificially damped) weights
\begin{align}
\label{eq:weight:truncated}
	\bar{\xi}^\xorder(\gamma) &= \begin{cases}
			\xi^\xorder(\gamma)\;,\quad &
				\text{if $\xi^\xorder(\gamma)\leq \xe^{-\tau|\gamma|}$,} \\
			0\;, & \text{otherwise,}
		\end{cases}
\end{align}
and similarly for $\bar\xi^\xdisorder(\gamma)$ (see e.g.~\cite{Kot94}).
The term truncated refers to the suppression of all contours whose weights 
are not damped.
If we replace the original weight $\xi^\xorder$ by the
artificially damped one $\bar{\xi}^\xorder$, we obtain
the following \emph{truncated} partition function,
which can be thought of as an approximation of the
partition function of the $r$-biased random cluster model
with ordered boundary condition:
\begin{align}
	\bar{Z}^{\RC.\order}(\Lambda) &=
			\xe^{-\abs{B(\Lambda)}\cdot e(\xB)}
	        \mathscr{Z}(\Lambda\,|\,\bar\xi^\xorder)\;.
\end{align}
Similarly, replacing $\xi^\xdisorder$ by $\bar{\xi}^\xdisorder$
leads to 
the truncated partition function for
the $r$-biased random cluster model
with disordered boundary condition:
\begin{align}
	\bar{Z}^{\RC.\disorder}(\Lambda) &=
			\xe^{-\abs{B(\Lambda)}\cdot e(\varnothing)}
	        \mathscr{Z}(\Lambda\,|\,\bar\xi^\xdisorder)\;.
\end{align}
The advantage of introducing these truncated partition functions is that
we can apply Proposition~\ref{prop:damped}.
Note that if the original weights are ``damped''
(that is,
$\xi^\xorder(\gamma)\leq \xe^{-\tau|\gamma|}$
or $\xi^\xdisorder(\gamma)\leq \xe^{-\tau|\gamma|}$),
the corresponding
truncated partition functions coincide with the original ones.

From Proposition~\ref{prop:damped} we have the following bounds
for the truncated partition functions:
\begin{align}
	\xe^{\left(g(\bar{\xi}^\xorder)-e(\xB)\right)\abs{B(\Lambda)}
	-C(\tau)\abs{\partial\Lambda}}
	&\leq \makebox[5.5em][c]{$\displaystyle{\bar{Z}^{\RC.\order}(\Lambda)}$}
	\leq \xe^{\left(g(\bar{\xi}^\xorder)-e(\xB)\right)\abs{B(\Lambda)}
	+C(\tau)\abs{\partial\Lambda}}\;, \\
	\xe^{\left(g(\bar{\xi}^\xdisorder)-e(\varnothing)\right)\abs{B(\Lambda)}
	-C(\tau)\abs{\partial\Lambda}}
	&\leq \makebox[5.5em][c]{$\displaystyle{\bar{Z}^{\RC.\disorder}(\Lambda)}$}
	\leq \xe^{\left(g(\bar{\xi}^\xdisorder)-e(\varnothing)\right)\abs{B(\Lambda)}
	+C(\tau)\abs{\partial\Lambda}}\;.
\end{align}

The pressure functions associated to the truncated partition functions are
\begin{align}
	f^\xorder(\beta) &=
		\lim_{n\to\infty}\frac{1}{\abs{B(\Lambda_n)}}
		\log\bar{Z}^{\RC.\order}(\Lambda_n) =
		-e(\xB)+ g(\bar{\xi}^{\xorder}) \;,\\
\label{eq:free-energy:truncated}
	f^\xdisorder(\beta) &=
		\lim_{n\to\infty}\frac{1}{\abs{B(\Lambda_n)}}
		\log\bar{Z}^{\RC.\disorder}(\Lambda_n) =
		-e(\varnothing)+ g(\bar{\xi}^{\xdisorder})\;.
\end{align}

The functions $f^\xorder(\beta)$ and $f^\xdisorder(\beta)$ are lower approximations
of the pressure $f^\RC(\beta)$ of the $r$-biased random-cluster representation.
The next lemma states that in fact when $(q+r)$ is large enough,
the maximum of $f^\xorder$ and $f^\xdisorder$ coincides with $f^\RC$.
As we will see in the next section, for $(q+r)$ large enough,
the functions $g(\bar{\xi}^{\xorder})$ and $g(\bar{\xi}^{\xdisorder})$
and their $\beta$-derivatives are small, and therefore,
the dominant terms of $f^\xorder$ and $f^\xdisorder$ are
$-e(\xB)$ and $-e(\varnothing)$.
This means that $f^\RC$ is approximated by
the maximum of the curves $-e(\xB)$ and $-e(\varnothing)$,
which intersect at a unique value $\beta$, with significantly
different slopes.

By the diameter of a contour $\gamma$, denoted by $\diam\gamma$,
we shall mean the maximum lattice distance between two bonds in $B(\gamma)$.
The next lemma is parallel to Lemma~2 or~\cite{Kot94} or Theorem~3.1 of~\cite{BorImb89}.
\begin{lemma}
\label{lem:damped:naturally}
	Let $(q+r)$ be sufficiently large.
	If $f^\xdisorder\leq f^\xorder$, then
	\begin{enumerate}[ i)]
		\item for every disorder contour $\gamma$ with
			$\diam\gamma\leq\frac{1}{f^\xorder-f^\xdisorder}$ we have
			\begin{align}      
				\xi^\xdisorder(\gamma)\leq \xe^{-\tau|\gamma|}\;,
			\end{align}
		\item for every order contour $\gamma$ we have
			\begin{align}      
				\xi^\xorder(\gamma)\leq \xe^{-\tau|\gamma|}\;.
			\end{align}
	\end{enumerate}
	A similar statement holds if $f^\xorder\leq f^\xdisorder$.
\end{lemma}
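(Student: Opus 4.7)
The plan is a strong induction on $|V(\gamma)|$, establishing (i) and (ii) simultaneously, with the inductive hypothesis asserting that both conclusions already hold for every contour $\gamma'$ with $V(\gamma')\subsetneq V(\gamma)$. The guiding idea is that $\rho(\gamma)$ carries a prefactor of $(q+r)^{-|\gamma|/4}$ which becomes arbitrarily strongly damping as $(q+r)\to\infty$; it then suffices to show that all remaining terms produced by unpacking $\xi^\xorder$ or $\xi^\xdisorder$ via Lemma~\ref{lem:contour-model} contribute at most $\xe^{O(|\gamma|)}$ with a constant independent of $(q+r)$.

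For (ii), the inductive hypothesis implies that every order contour inside $\interior\gamma$ is damped, so $\mathscr{Z}(\interior\gamma\,|\,\xi^\xorder)=\mathscr{Z}(\interior\gamma\,|\,\bar\xi^\xorder)$ and Proposition~\ref{prop:damped} gives $Z^{\RC.\order}(\interior\gamma)\geq \xe^{f^\xorder|B(\interior\gamma)|-C|\partial\interior\gamma|}$. The numerator $Z^{\RC.\disorder}(V(\gamma))$ is the delicate ingredient, because the inductive hypothesis only dampens disorder contours whose diameter does not exceed $1/(f^\xorder-f^\xdisorder)$. I propose to handle this by expanding $Y^\xdisorder(V(\gamma))$ through the external-contour recursion~(\ref{eq:recursion:disordered}) and arguing, term by term, that any external disorder contour $\gamma'$ with $\diam\gamma'>1/(f^\xorder-f^\xdisorder)$ is already suppressed by a factor of at least $\xe^{-(f^\xorder-f^\xdisorder)|B(\interior\gamma')|}$, inherited from the IH-controlled ratio $Y^\xorder(\interior\gamma')/\xe^{f^\xdisorder|B(\interior\gamma')|}$, so that after resummation $Z^{\RC.\disorder}(V(\gamma))\leq \xe^{f^\xdisorder|B(V(\gamma))|+C|\partial V(\gamma)|}$. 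This resummation is the principal technical obstacle of the proof.

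Substituting both bounds into the definition of $\xi^\xorder(\gamma)$ and using $|B(V(\gamma))|=|B(\gamma)|+|B(\interior\gamma)|$ together with $f^\xdisorder\leq f^\xorder$ to discard the $|B(\interior\gamma)|$ contribution, the estimate collapses to $\xi^\xorder(\gamma)\leq \rho(\gamma)\,\xe^{|B(\gamma)|(e(\xB)+f^\xdisorder)+C'|\partial V(\gamma)|}$. Rearranging the hypothesis $f^\xdisorder\leq f^\xorder$ yields $e(\xB)+f^\xdisorder\leq g(\bar\xi^\xorder)$, which is of order $\xe^{-\tau/2}$ by Proposition~\ref{prop:damped}; since $|B(\gamma)|$ and $|\partial V(\gamma)|$ are $O(|\gamma|)$, the prefactor $(q+r)^{-|\gamma|/4}$ dominates as soon as $(q+r)$ is large enough for $\tfrac14\log(q+r)$ to exceed the remaining $O(1)$ constants, giving $\xi^\xorder(\gamma)\leq\xe^{-\tau|\gamma|}$.

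Case (i) proceeds by the symmetric calculation for $\xi^\xdisorder(\gamma)$. The key difference is that the residual factor $\xe^{(f^\xorder-f^\xdisorder)|B(\interior\gamma)|}$ now carries the unfavourable sign and cannot be cancelled by the hypothesis. Using an isoperimetric bound $|B(\interior\gamma)|\leq c\,(\diam\gamma)^2$ together with $\diam\gamma\leq 1/(f^\xorder-f^\xdisorder)$ gives $(f^\xorder-f^\xdisorder)|B(\interior\gamma)|\leq c\,\diam\gamma\leq c|\gamma|$, still of order $|\gamma|$ and therefore absorbable by the $(q+r)^{-|\gamma|/4}$ prefactor once $(q+r)$ is large. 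This is precisely why the diameter restriction appears in (i) but not in (ii).
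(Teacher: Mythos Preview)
Your overall inductive scheme and your treatment of case~(i) are essentially the paper's argument. The real problem is the central step of case~(ii): the bound
\[
Z^{\RC.\disorder}(V(\gamma))\leq \xe^{f^\xdisorder\,|B(V(\gamma))|+C|\partial V(\gamma)|}
\]
is simply false in the regime $f^\xorder>f^\xdisorder$. Both boundary conditions produce the same bulk pressure $f^\RC=\max\{f^\xorder,f^\xdisorder\}=f^\xorder$, so $Z^{\RC.\disorder}(V(\gamma))$ grows like $\xe^{f^\xorder\,|B(V(\gamma))|}$, not $\xe^{f^\xdisorder\,|B(V(\gamma))|}$. Correspondingly, your ``suppression'' claim is backwards: in the external-contour recursion for $Y^\xdisorder(V(\gamma))$ a big disorder contour $\gamma'$ contributes, after applying the induction hypothesis to $Y^\xorder(\interior\gamma')$, a factor of order $\rho(\gamma')\,\xe^{(f^\xorder+e(\varnothing))|B(\interior\gamma')|}$, which relative to the small-contour baseline $\xe^{g(\bar\xi^\xdisorder)|B(\interior\gamma')|}=\xe^{(f^\xdisorder+e(\varnothing))|B(\interior\gamma')|}$ is an \emph{enhancement} by $\xe^{+(f^\xorder-f^\xdisorder)|B(\interior\gamma')|}$. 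Since $|B(\interior\gamma')|$ scales like the area and $\gamma'$ may have diameter comparable to $\diam\gamma$, this is not absorbable by $\rho(\gamma')$.

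The paper avoids this by \emph{not} estimating numerator and denominator separately. Writing
\[
\xi^\xorder(\gamma)=\rho(\gamma)\,\xe^{|B(V(\gamma))|(e(\xB)-e(\varnothing))}\,\frac{Y^\xdisorder(V(\gamma))}{Y^\xorder(\interior\gamma)}
\]
and expanding $Y^\xdisorder(V(\gamma))$ over families $\theta$ of big external disorder contours, the factors $Y^\xorder(\interior\gamma')$ in the numerator partially cancel against $Y^\xorder(\interior\gamma)$ in the denominator, producing $\xe^{-g(\bar\xi^\xorder)\,|B(V(\gamma)\setminus\interior\theta)|}$. Combined with the small-contour and $e(\xB)-e(\varnothing)$ factors, the exponent on $|B(V(\gamma)\setminus\interior\theta)|$ becomes $-(f^\xorder-f^\xdisorder)+\bigl(g(\xi^\xdisorder_\SMALL)-g(\bar\xi^\xdisorder)\bigr)\leq 0$, which is the crucial sign flip. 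The remaining sum $\sum_{\theta}\xe^{-a|B(V(\gamma)\setminus\interior\theta)|}\prod_{\gamma'\in\theta}\rho(\gamma')\xe^{O(1)|\gamma'|}$ is then resummed via an auxiliary damped contour model $\widehat{\rho}$ (big disorder contours only), using that $g(\widehat{\rho})\leq\xe^{-\tau/(f^\xorder-f^\xdisorder)}\leq f^\xorder-f^\xdisorder$ because big contours have $|\gamma'|\geq 2/(f^\xorder-f^\xdisorder)$. So the fix is to keep the ratio intact; the cancellation with the denominator is what makes the big contours controllable, and it cannot be extracted from $Z^{\RC.\disorder}(V(\gamma))$ alone.
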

\begin{proof}
	We prove the two claims simultaneously by induction on $\diam\gamma$.
	Let $K>0$ and suppose that the claims hold for
	all (disorder/order) contours with diameter less than $K$.
	
	Let $\gamma$ be a disorder contour with diameter $K$
	that satisfies $\diam\gamma\leq \frac{1}{f^\xorder-f^\xdisorder}$.
	Then,
	\begin{align}
		\xi^\xdisorder(\gamma) &=
			\rho(\gamma)\frac{
				Z^{\RC.\order}(\interior\gamma)
			}{
				Z^{\RC.\disorder}(\interior\gamma)
			} \\
		&=
			\rho(\gamma)\frac{
				\bar{Z}^{\RC.\order}(\interior\gamma)
			}{
				\bar{Z}^{\RC.\disorder}(\interior\gamma)
			} \\
		&\leq		
			\rho(\gamma)\frac{
				\xe^{
					f^\xorder\cdot\abs{B(\interior\gamma)}
					+C(\tau)	\vert\partial\interior\gamma\vert
				}
			}{
				\xe^{
					f^\xdisorder\cdot\abs{B(\interior\gamma)}
					-C(\tau)\vert\partial\interior\gamma\vert
				}
			}\\
		&=
			\rho(\gamma)
				\xe^{
					(f^\xorder - f^\xdisorder)\vert B(\interior\gamma)\vert
					+ 2 C(\tau)\vert\partial\interior\gamma\vert
				} \;,
	\end{align}
	where in the second equality we have used the induction hypothesis.
	Namely, every contour in $\interior\gamma$ has diameter less than $K$,
	allowing us to replace the original partition functions
	with the truncated ones.
	Notice that
	\begin{itemize}
		\item $\rho(\gamma)= q (q+r)^{-\frac{1}{4}\abs{\gamma}}$,
		\item $\abs{B(\interior\gamma)}\leq \frac{1}{2}\abs{\gamma}\cdot\diam\gamma$,
		\item $(f^\xorder-f^\xdisorder)\cdot\diam\gamma\leq 1$, and
		\item $\vert\partial\interior\gamma\vert\leq\abs{\gamma}$.
	\end{itemize}
	Hence, we obtain that
	\begin{align}
		\xi^\xdisorder(\gamma) &\leq
			q\,\xe^{-\left(\frac{1}{4}\log(q+r)-1-2C(\tau)\right)\cdot\abs{\gamma}} \;.
	\end{align}
	For $(q+r)$ large enough (uniformly in $\gamma$)
	the righthand side is bounded by $\xe^{-\tau\abs{\gamma}}$,
	hence the claim.

	Next, suppose that $\gamma$ is an order contour
	with diameter $K$.
	We need to show 
	\begin{align}
 		\xi^\xorder(\gamma)=\rho(\gamma) \xe^{\abs{B(\gamma)} e(\xB)}
 		\frac{Z^{\RC.\disorder}(V(\gamma))}{Z^{\RC.\order}(\interior\gamma)}
		\leq \xe^{-\tau|\gamma|}\;.
	\end{align}
	By the induction hypothesis,
	the partition function $Z^{\RC.\order}(\interior\gamma)$
	is equal to the corresponding truncated partition function,
	which can be bounded using Proposition~\ref{prop:damped}.
	As for $Z^{\RC.\disorder}(V(\gamma))$,
	if we suppress all the contours that are ``big'',
	we can get a similar bound using the induction hypothesis.
	
	To render the argument more transparent, we work with the partition functions
	$Y^{\xdisorder}(\Lambda)$ and $Y^{\xdisorder}(\Lambda)$
	(see~(\ref{eq:Z-tilde-f}) and~(\ref{eq:Z-tilde-w})) for which we have
	\begin{align}
		Z^{\RC.\disorder}(\Lambda) &=
			\xe^{-\abs{B(\Lambda)}\cdot e(\varnothing)}
			Y^{\xdisorder}(\Lambda) \;, \\
		Z^{\RC.\order}(\Lambda) &=
			\xe^{-\abs{B(\Lambda)}\cdot e(\xB)}
			Y^{\xorder}(\Lambda) \;.
	\end{align}
	Let us call a disorder contour \emph{small} if
	its diameter is less than or equal to $\frac{1}{f^\xdisorder-f^\xorder}$.
	Otherwise, we call the contour \emph{big}.
	
	As before, let us denote by $\mathcal{E}^\disorder_\Lambda$
	the set of all mutually compatible families of disorder contours in~$\Lambda$ whose elements are external.
	Factoring the contribution of the interior of big external contours we have the recursion\footnote{%
		Although $\Lambda\setminus\interior\theta$ does not match our requirement for
		being a volume (i.e., not having holes), it does not cause any problem.
		In fact, since the contours in $Y^\xdisorder_\SMALL$ are small,
		they cannot surround the holes in $\Lambda\setminus\interior\theta$,
		hence they do not distinguish the holes from the outside of $\Lambda$.
	}
	\begin{align}
		Y^\xdisorder(\Lambda) &=
			\sum_{
				\substack{\theta\in\mathcal{E}^\disorder_\Lambda \\
					\text{$\theta$ big}
				}
			}
			Y^\xdisorder_\SMALL(
				\Lambda\setminus\interior\theta
			)
			\prod_{\gamma'\in\theta}
			\widetilde{\rho}(\gamma')
			Y^\xorder(\interior\gamma')
	\end{align}
	where $\interior\theta\IsDef\bigcup_{\gamma'\in\theta}\interior\gamma'$,
	and $Y^\xdisorder_\SMALL(\Lambda)\IsDef\mathscr{Z}(\Lambda\,|\,\xi^\xdisorder_\SMALL)$,
	in which the weight function $\xi^\xdisorder_\SMALL$ is obtained from $\xi^\xdisorder$ by replacing
	the weights of all big contours with $0$, that is,
	\begin{align}
		\xi^\xdisorder_\SMALL(\gamma') &= \begin{cases}
			\xi^\xdisorder(\gamma') \qquad & \text{if $\gamma'$ small,} \\
			0 & \text{if $\gamma'$ big.}
		\end{cases}
	\end{align}
	
	The expression for the weight $\xi^\xorder(\gamma)$ reads then
	\begin{align}
		\xi^\xorder(\gamma) &=
			\rho(\gamma)\cdot \xe^{\abs{B(V(\gamma))}\cdot\left(e(\xB)-e(\varnothing)\right)}
			\sum_{
				\substack{\theta\in\mathcal{E}^\disorder_{V(\gamma)} \\
					\text{$\theta$ big}
				}
			}
			\frac{%
				Y^\xdisorder_\SMALL(V(\gamma)\setminus\interior\theta) \cdot
				Y^\xorder(\interior\theta)
			}{%
				Y^\xorder(\interior\gamma)
			}
			\prod_{\gamma'\in\theta}
			\widetilde{\rho}(\gamma') \;.
	\end{align}
	The induction hypothesis and Proposition~\ref{prop:damped} tell us:
	\begin{itemize}
		\item $Y^\xdisorder_\SMALL(V(\gamma)\setminus\interior\theta) \leq
			\xe^{g(\xi^\xdisorder_{\SMALL})\cdot\abs{B(V(\gamma)\setminus\interior\theta)}+
			C(\tau)\cdot\abs{\partial \left(V(\gamma)\setminus\interior\theta\right)}}$,
		\item $Y^\xorder(\interior\theta) \leq
			\xe^{g(\bar{\xi}^\xorder)\cdot\abs{B(\interior\theta)}+ C(\tau)\cdot\abs{\partial\interior\theta}}$,
		\item $Y^\xorder(\interior\gamma) \geq
			\xe^{g(\bar{\xi}^\xorder)\cdot\abs{B(\interior\gamma)}- C(\tau)\cdot\abs{\partial\interior\gamma}}$, and
		\item $g(\bar{\xi}^\xorder)\leq \xe^{-\tau/2} \leq 1$.
	\end{itemize}
	Moreover
	\begin{itemize}
		\item $\abs{\partial \left(V(\gamma)\setminus\interior\theta\right)}\leq
			\abs{\partial V(\gamma)} + \abs{\partial\interior\theta} \leq
			3\abs{\gamma} + \sum_{\gamma'\in\theta}\abs{\gamma'}$, and
		\item $\abs{B(\gamma)}\leq\abs{\gamma}$.
	\end{itemize}
	Hence we have
	\begin{align}
		\xi^\xorder(\gamma) &\leq
			\rho(\gamma)\cdot\xe^{\left(1+4 C(\tau)\right)\cdot\abs{\gamma}}
			\sum_{
				\substack{\theta\in\mathcal{E}^\disorder_{V(\gamma)} \\
					\text{$\theta$ big}
				}
			}
			\xe^{\abs{B(V(\gamma)\setminus\interior\theta)}\cdot\left[
				e(\xB)-e(\varnothing)+g(\xi^\xdisorder_\SMALL)-g(\bar{\xi}^\xorder)
			\right]}
			\prod_{\gamma'\in\theta}\rho(\gamma')\cdot \xe^{2 C(\tau)\cdot\abs{\gamma'}} \\
		&=
			\rho(\gamma)\cdot\xe^{\left(1+4 C(\tau)\right)\cdot\abs{\gamma}}
			\sum_{
				\substack{\theta\in\mathcal{E}^\disorder_{V(\gamma)} \\
					\text{$\theta$ big}
				}
			}
			\xe^{\abs{B(V(\gamma)\setminus\interior\theta)}\cdot\left[
				\left(g(\xi^\xdisorder_\SMALL)-g(\bar{\xi}^\xdisorder)\right) -
				\left(f^\xorder - f^\xdisorder\right)
			\right]}
			\prod_{\gamma'\in\theta}\rho(\gamma')\cdot \xe^{2 C(\tau)\cdot\abs{\gamma'}} \;.
	\end{align}
	As we shall see shortly,
	if $(q+r)$ is large enough, the sum appearing in the above expression can be bounded by
	$\xe^{3C(\tau)\cdot\abs{\gamma}}$, so that
	\begin{align}
		\xi^\xorder(\gamma) &\leq
			\rho(\gamma)\cdot\xe^{\left(1+4 C(\tau)\right)\cdot\abs{\gamma}}\xe^{3C(\tau)\cdot\abs{\gamma}} \\
		&=
			\xe^{-\left(\frac{1}{4}\log(q+r)-1-7C(\tau)\right)\cdot\abs{\gamma}}
	\end{align}
	For large $(q+r)$, the righthand side is bounded by $\xe^{-\tau\abs{\gamma}}$, proving the claim.
	
	It remains to show that for $(q+r)$ sufficiently large,
	\begin{align}
		\sum_{
				\substack{\theta\in\mathcal{E}^\disorder_{V(\gamma)} \\
					\text{$\theta$ big}
				}
			}
			\xe^{\abs{B(V(\gamma)\setminus\interior\theta)}\cdot\left[
				\left(g(\xi^\xdisorder_\SMALL)-g(\bar{\xi}^\xdisorder)\right) -
				\left(f^\xorder - f^\xdisorder\right)
			\right]}
			\prod_{\gamma'\in\theta}\rho(\gamma')\cdot \xe^{2 C(\tau)\cdot\abs{\gamma'}}
		&\leq
			\xe^{3C(\tau)\cdot\abs{\gamma}} \;.
	\end{align}
	To show this, let us consider a contour model with weight function
	\begin{align}
		\widehat{\rho}(\gamma') &= \begin{cases}
			\rho(\gamma')\cdot\xe^{3C(\tau)\cdot\abs{\gamma'}}\;, \qquad & \text{if $\gamma'$ big and disorder,} \\
			0\;, & \text{otherwise.}
		\end{cases}
	\end{align}
	Assuming that
	$g(\widehat{\rho})\leq f^\xorder - f^\xdisorder
		+g(\bar{\xi}^\xdisorder)-g(\xi^\xdisorder_\SMALL)$,
	and $(q+r)$ in such a way that
	$\widehat{\rho}(\gamma')\leq\xe^{-\tau\abs{\gamma'}}$,
	we can use Proposition~\ref{prop:damped} to obtain
	\begin{align}
		&
		\sum_{
				\substack{\theta\in\mathcal{E}^\disorder_{V(\gamma)} \\
					\text{$\theta$ big}
				}
			}
			\xe^{\abs{B(V(\gamma)\setminus\interior\theta)}\cdot\left[
				\left(g(\xi^\xdisorder_\SMALL)-g(\bar{\xi}^\xdisorder)\right) -
				\left(f^\xorder - f^\xdisorder\right)
			\right]}
			\prod_{\gamma'\in\theta}\rho(\gamma')\cdot \xe^{2 C(\tau)\cdot\abs{\gamma'}} \nonumber \\
		&\leq
			\sum_{
				\substack{\theta\in\mathcal{E}^\disorder_{V(\gamma)} \\
					\text{$\theta$ big}
				}
			}
			\xe^{-\abs{B(V(\gamma)\setminus\interior\theta)}\cdot g(\widehat{\rho})}
			\prod_{\gamma'\in\theta}\rho(\gamma')\cdot \xe^{2 C(\tau)\cdot\abs{\gamma'}} \\
		&=
			\xe^{-\abs{B(V(\gamma))}\cdot g(\widehat{\rho})}
			\sum_{
				\substack{\theta\in\mathcal{E}^\disorder_{V(\gamma)} \\
					\text{$\theta$ big}
				}
			}
			\prod_{\gamma'\in\theta}\rho(\gamma')\cdot \xe^{2 C(\tau)\cdot\abs{\gamma'}}
				\cdot \xe^{\abs{B(\interior\gamma')}\cdot g(\widehat{\rho})} \\
		&\leq
			\xe^{-\abs{B(V(\gamma))}\cdot g(\widehat{\rho})}
			\sum_{
				\substack{\theta\in\mathcal{E}^\disorder_{V(\gamma)} \\
					\text{$\theta$ big}
				}
			}
			\prod_{\gamma'\in\theta}\rho(\gamma')\cdot \xe^{2 C(\tau)\cdot\abs{\gamma'}}
				\cdot \mathscr{Z}(\interior\gamma'\,|\,\widehat{\rho})
				\cdot \xe^{C(\tau)\cdot\abs{\gamma'}} \\
		&=
			\xe^{-\abs{B(V(\gamma))}\cdot g(\widehat{\rho})}
			\cdot \mathscr{Z}(V(\gamma)\,|\,\widehat{\rho}) \\
		&\leq
			\xe^{3C(\tau)\cdot\abs{\gamma}} \;.
	\end{align}
	
	Finally, to see that $g(\widehat{\rho})\leq f^\xorder - f^\xdisorder
		+g(\bar{\xi}^\xdisorder)-g(\xi^\xdisorder_\SMALL)$,
	note that $g(\bar{\xi}^\xdisorder)-g(\xi^\xdisorder_\SMALL)\geq 0$ (due to the fact that
	$\xi^\xdisorder_\SMALL\leq\bar{\xi}^\xdisorder$) and by Proposition~\ref{prop:damped}
	\begin{align}
		g(\widehat{\rho}) &\leq
			\sum_{\substack{\gamma':S(\gamma')\ni 0 \\ \text{$\gamma'$ big}}} \widehat{\rho}(\gamma')
		\leq
			\xe^{-\tau/(f^\xorder-f^\xdisorder)} \;,
	\end{align}
	using the fact that $\gamma'$ is big only if $\abs{\gamma'}\geq \frac{2}{f^\xorder-f^\xdisorder}$.
	For $\tau$ not too small we have $\xe^{-\tau/(f^\xorder-f^\xdisorder)}\leq f^\xorder-f^\xdisorder$.
\end{proof}

\begin{figure}[h]
	\begin{center}
		\begin{tabular}{cc}
		\begin{minipage}[b]{0.45\textwidth}
		\begin{center}
		\begin{tikzpicture}[line cap=round,>=stealth',xscale=1.3,yscale=1.1]
			
			\def\mthck{0.35}
			\def\gdiffo{0.18}
			\def\gdiffd{0.27}
			\def\bcrit{2.05}
			\def\slope{1}
			\def\ebase{1.3}
			
			\def\xmin{1.12}
			\def\xmax{1.96}
			\def\ymin{-0.45}
			\def\ymax{0.3}
			
			\draw[dotted] (\xmin,\ymin) rectangle (\xmax,\ymax);

			\draw[->,black!80] (-0.1,0) -- (4.1,0) node[below] {$\beta$};
			\draw[->,black!80] (0,-2.1) -- (0,2.1);
						
			\fill[color=red!30,samples=50,opacity=0.5]
				plot[domain=0.31:3.5] (\x,{ln(1-exp(-\slope*\x))}) --
				(3.5,{ln(1-exp(-\slope*3.5))+\mthck}) --
				plot[domain=3.5:0.31] (\x,{ln(1-exp(-\slope*\x))+\mthck}) --
				cycle;
			
			\draw[<->,ultra thin]
				(3.65,{ln(1-exp(-\slope*3.5))}) --
					node[inner sep=0pt,above] {\hspace{2.5em} \scriptsize $\xe^{-\tau/2}$}
				(3.65,{ln(1-exp(-\slope*3.5))+\mthck});
			
			\fill[color=blue!30,opacity=0.5]
				plot[domain=3.2:0] (\x,{\ebase-\slope*\x}) --
				(0,{\ebase+\mthck}) --
				plot[domain=0:3.2] (\x,{\ebase+\mthck-\slope*\x}) --
				cycle;
			
			\draw[<->,ultra thin]
				(-0.15,\ebase) -- node[left] {\scriptsize $\xe^{-\tau/2}$}
				(-0.15,\ebase+\mthck);

			\draw[color=red,thick,samples=50]
				plot[domain=3.5:0.31] (\x,{ln(1-exp(-\slope*\x))})
				node[below] {\scriptsize $-e(\xB)$};

			\draw[color=blue,thick] 
				plot[domain=0:3.2] (\x,{\ebase-\slope*\x})
				node[below] {\scriptsize $-e(\varnothing)$};
						
			\draw[color=black,samples=50]
				plot[domain=3.5:0.31] (\x,{ln(1-exp(-\slope*\x))+\gdiffo+0.005*cos(1000*\x)})
				node[inner sep=0pt,left] {\scriptsize $f^\xorder$};

			\draw[color=black,samples=50]
				plot[domain=0:3.2] (\x,{\ebase-\slope*\x+\gdiffd-0.1+0.01*sin(600*\x)})
				node[inner sep=1pt,right] {\scriptsize $f^\xdisorder$};
						
		\end{tikzpicture}
		\end{center}
		
		
		\end{minipage}
		
		&
		
		\begin{minipage}[b]{0.45\textwidth}			
		\begin{center}
		\begin{tikzpicture}[line cap=round,>=stealth',xscale={1.3*4.5},yscale={1.1*4.5}]
			
			\def\mthck{0.35}
			\def\gdiffo{0.18}
			\def\gdiffd{0.27}
			\def\bcrit{2.05}
			\def\slope{1}
			\def\ebase{1.3}
			\def\dslope{0.15}
			
			\def\xmin{1.12}
			\def\xmax{1.96}
			\def\ymin{-0.45}
			\def\ymax{0.3}
			\def\ddd{0.005}
			\def\xcent{1.53}
			\def\ycent{-0.064}
			
			\begin{scope}
				\clip (\xmin-\ddd,\ymin-\ddd) rectangle (\xmax+\ddd,\ymax+\ddd);
				\draw[dotted] (\xmin,\ymin) rectangle (\xmax,\ymax);
							
%
				
				\begin{scope}
					\clip
						plot[domain=3.2:0] (\x,{\ebase-\slope*\x}) --
						(0,{\ebase+\mthck}) --
						plot[domain=0:3.2] (\x,{\ebase+\mthck-\slope*\x}) --
						cycle;
					
					\clip
						plot[domain=0.31:3.5] (\x,{ln(1-exp(-\slope*\x))}) --
						(3.5,{ln(1-exp(-\slope*3.5))+\mthck}) --
						plot[domain=3.5:0.31] (\x,{ln(1-exp(-\slope*\x))+\mthck}) --
						cycle;
					
					\fill[color=red!30,opacity=0.5]
						(\xmin+\ddd,\ymin+\ddd) rectangle (\xmax-\ddd,\ymax-\ddd);
					\fill[color=blue!30,opacity=0.5]
						(\xmin+\ddd,\ymin+\ddd) rectangle (\xmax-\ddd,\ymax-\ddd);
				\end{scope}
				
%
%
				
				\draw[color=black,samples=50]
					plot[domain=3.5:0.31] (\x,{ln(1-exp(-\slope*\x))+\gdiffo+0.005*cos(1000*\x)})
					node[inner sep=0pt,left] {\scriptsize $f^\xorder$};
	
				\draw[color=black,samples=50]
					plot[domain=0:3.2] (\x,{\ebase-\slope*\x+\gdiffd-0.1+0.01*sin(600*\x)})
					node[inner sep=1pt,right] {\scriptsize $f^\xdisorder$};

				\begin{scope}
					\clip (\xcent,\ycent) circle (10pt);
									
					\def\oslope{\slope*exp(-\slope*\xcent)/(1-exp(-\slope*\xcent))}
					\fill[color=gray,opacity=0.7]
						plot[domain=\xmin:\xmax] (\x,{(\oslope+\dslope)*(\x-\xcent)+\ycent}) --
						plot[domain=\xmax:\xmin] (\x,{(\oslope-\dslope)*(\x-\xcent)+\ycent}) --
						cycle;
					
					\fill[color=gray,opacity=0.7]
						plot[domain=\xmin:\xmax] (\x,{(-\slope+\dslope)*(\x-\xcent)+\ycent}) --
						plot[domain=\xmax:\xmin] (\x,{(-\slope-\dslope)*(\x-\xcent)+\ycent}) --
						cycle;
					
				\end{scope}
			\end{scope}
			
			\draw (\xmin,{ln(1-exp(-\slope*\xmin))+\gdiffo+0.005*cos(1000*\xmin)})
				node[inner sep=1pt,left] {\scriptsize $f^\xorder$};
			\draw (\xmax,\ymin)
				node[inner sep=1pt,below] {\scriptsize $f^\xdisorder$};

		\end{tikzpicture}
		\end{center}
		
		\vspace{0.2ex}
		
		\end{minipage}
		
		\\
		
		(a) & (b)
		\end{tabular}
	\end{center}
	\caption{
		(a) The curves of $f^\xorder(\beta)$ and $f^\xdisorder(\beta)$
		are within a narrow margin above $-e(\xB)$ and $-e(\varnothing)$.
		(b) The slopes of $f^\xorder(\beta)$ and $f^\xdisorder(\beta)$
		are close to the slopes of $-e(\xB)$ and $-e(\varnothing)$.
	}
	\label{fig:intersection}
\end{figure}

From the above lemma, we know that the pressure $f^\RC(\beta)$ of
the $r$-biased random-cluster representation
coincides with maximum of the functions $f^\xorder(\beta)$ and $f^\xdisorder(\beta)$, that is,
\begin{align}
	f^\RC(\beta) &= \max\left\{f^\xorder(\beta),f^\xdisorder(\beta)\right\} \;.
\end{align}
Recall that
\begin{align}
	f^\xorder(\beta) &= -e(\xB) + g(\bar{\xi}^\xorder) \;, \\
	f^\xdisorder(\beta) &= -e(\varnothing) + g(\bar{\xi}^\xdisorder) \;.
\end{align}
If $\tau$ is large, Proposition~\ref{prop:damped} says that $g(\bar{\xi}^\xorder)$
and $g(\bar{\xi}^\xdisorder)$ are small, so that $f^\RC(\beta)$ can be nearly expressed
in terms of
the ``energy'' per bond of the fully ordered and fully disordered configurations.
More precisely, if we define
\begin{align}
	F(\beta) &\IsDef \max\left\{-e(\xB),-e(\varnothing)\right\} \;.
\end{align}
we have
\begin{align}
\label{eq:tube}
	0 \leq f^\RC(\beta) - F(\beta) \leq \xe^{-\tau/2} \;.
\end{align}

The two curves $-e(\xB)$ and $-e(\varnothing)$ (as functions of $\beta$)
intersect at a single point
\begin{align}
	\bar{\beta}_\critical &=
		\log\left(1+\sqrt{q+r}\right) \;,
\end{align}
above which $F(\beta)=-e(\xB)$ and below which $F(\beta)=-e(\varnothing)$.
Furthermore, these two curves have significantly different slopes, implying that
$F(\beta)$ is not differentiable at $\bar{\beta}_\critical$.
What we are after is to infer that $f^\RC(\beta)$ has a similar behavior.  In other words, we would like to show that
there exists a unique solution $\beta_\critical$ for the equation $f^\xorder(\beta)=f^\xdisorder(\beta)$,
at which $f^\RC(\beta)$ is not differentiable,
above which $f^\RC(\beta)=f^\xorder(\beta)$ and below which $f^\RC(\beta)=f^\xdisorder(\beta)$.
Note that condition~(\ref{eq:tube}) guarantees that
$f^\RC(\beta)=f^\xorder(\beta)>f^\xdisorder(\beta)$ for $\beta\gg \bar{\beta}_\critical$
and $f^\RC(\beta)=f^\xdisorder(\beta)>f^\xorder(\beta)$ for $\beta\ll\bar{\beta}_\critical$.
In fact, it states that $f^\RC(\beta)$ lives in a margin of width $\xe^{-\tau/2}$
above $F(\beta)$ (see Figure~\ref{fig:intersection}(a)).
To infer such a sharp transition we further need to give bounds for the derivatives of
$g(\bar{\xi}^\xorder)$ and $g(\bar{\xi}^\xdisorder)$ (see Figure~\ref{fig:intersection}(b)).
This is addressed in the following lemma, which is analogous to Theorem~3.3 of~\cite{BorImb89}.

\begin{lemma}
\label{lem:derivative:bound}
	We have
	\begin{align}
		\frac{\partial}{\partial\beta}g(\bar{\xi}^\xorder)
		&\leq
			-2\,\frac{\partial e(\xB)}{\partial\beta}
			\sum_{\gamma: S(V(\gamma))\ni 0} \bar{\xi}^\xorder(\gamma) \leq
			\frac{2}{\xe^\beta-1}\, \xe^{-\tau/2} \;, \\
		\frac{\partial}{\partial\beta}g(\bar{\xi}^\xdisorder)
		&\leq
			-2\,\frac{\partial e(\xB)}{\partial\beta}
			\sum_{\gamma: S(\interior\gamma)\ni 0} \bar{\xi}^\xdisorder(\gamma) \leq
			\frac{2}{\xe^\beta-1}\, \xe^{-\tau/2} \;.
	\end{align}
\end{lemma}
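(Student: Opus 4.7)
The plan is to differentiate $g(\bar{\xi}^\xorder)$ by pushing $\partial_\beta$ inside the limit defining it in Proposition~\ref{prop:damped}, identify the result as a Gibbs expectation of $\partial_\beta\log\bar{\xi}^\xorder(\gamma)$ in the abstract contour model, and then control it via the Peierls-type bound $\mu_{\Lambda}(\gamma\in\partial)\leq\bar{\xi}^\xorder(\gamma)$ (which, just as in the corollary preceding this lemma, follows by positivity of the contour partition function). The uniform convergence in Proposition~\ref{prop:damped} justifies the exchange of derivative and thermodynamic limit, yielding
\begin{align*}
\frac{\partial g(\bar{\xi}^\xorder)}{\partial\beta}
= \lim_{n\to\infty}\frac{1}{\abs{B(\Lambda_n)}}\sum_{\gamma}
\frac{\partial\log\bar{\xi}^\xorder(\gamma)}{\partial\beta}\,\mu_{\Lambda_n}(\gamma\in\partial)
\end{align*}
with the sum restricted to $\gamma$ with $\bar{\xi}^\xorder(\gamma)>0$; the measure-zero set of $\beta$ where some weight sits exactly at its truncation threshold can be handled by one-sided differentiation or approximation.

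The second step is to compute $\partial_\beta\log\xi^\xorder(\gamma)$ using the explicit formula~(\ref{eq:xi-order}). Since $\rho(\gamma)$ is $\beta$-independent, only the prefactor $\xe^{\abs{B(\gamma)}e(\xB)}$ and the ratio $Z^{\RC.\disorder}(V(\gamma))/Z^{\RC.\order}(\interior\gamma)$ contribute. The logarithmic derivatives of these two partition functions are Gibbs expectations of the ``energy'' function under the respective $r$-biased random-cluster measures, and can be written as linear combinations of the expected numbers of ordered and disordered bonds, weighted by $\partial e(\xB)/\partial\beta$ and $\partial e(\varnothing)/\partial\beta$. Collecting the terms and using $\abs{B(V(\gamma))}=\abs{B(\gamma)}+\abs{B(\interior\gamma)}$, one obtains an identity expressing $\partial_\beta\log\xi^\xorder(\gamma)$ as $\partial e(\xB)/\partial\beta$ multiplied by an integer-valued quantity whose absolute value is bounded by $2\abs{B(V(\gamma))}$.

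Plugging this bound into the Gibbs-expectation identity, combining with the Peierls bound $\mu_{\Lambda_n}(\gamma\in\partial)\leq\bar{\xi}^\xorder(\gamma)$, and passing to the thermodynamic limit using translation invariance --- under which the combinatorial weight $\abs{B(V(\gamma))}$ is precisely what is needed to absorb the per-bond normalization when rewriting the sum as one over contours anchored at the origin in $S(V(\gamma))$ --- one arrives at
\begin{align*}
\frac{\partial g(\bar{\xi}^\xorder)}{\partial\beta} \leq -2\frac{\partial e(\xB)}{\partial\beta}\sum_{\gamma: S(V(\gamma))\ni 0}\bar{\xi}^\xorder(\gamma).
\end{align*}
The final numerical bound then follows from $-\partial e(\xB)/\partial\beta=1/(\xe^\beta-1)$ together with a small variant of Proposition~\ref{prop:damped} giving $\sum_{\gamma:S(V(\gamma))\ni 0}\bar{\xi}^\xorder(\gamma)\leq\xe^{-\tau/2}$. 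The disordered case is entirely analogous, starting from $\xi^\xdisorder(\gamma)=\rho(\gamma)\,Z^{\RC.\order}(\interior\gamma)/Z^{\RC.\disorder}(\interior\gamma)$.

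The main obstacle will be the bookkeeping in the second step: the $\partial e(\varnothing)/\partial\beta$ contributions must cancel or combine with the $\partial e(\xB)/\partial\beta$ ones so that the advertised $-2\partial e(\xB)/\partial\beta$ emerges, rather than the naive $\xe^\beta/(\xe^\beta-1)$ coming from the difference $\partial e(\varnothing)/\partial\beta - \partial e(\xB)/\partial\beta$. A secondary subtlety is the adaptation of Proposition~\ref{prop:damped} --- whose bound is stated in terms of contours containing the origin in $S(\gamma)$ --- to the larger class $S(V(\gamma))\ni 0$, which requires a short Peierls counting argument bounding the extra contribution by a convergent geometric series.
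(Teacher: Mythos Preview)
Your approach matches the paper's: differentiate $\log\mathscr{Z}(\Lambda\,|\,\bar\xi^\xorder)$, bound the result by $\sum_\gamma \partial_\beta\bar\xi^\xorder(\gamma)$ via the Peierls-type inequality, estimate $\partial_\beta\xi^\xorder(\gamma)$ from the explicit formula~(\ref{eq:xi-order}), and finish by translation invariance and dominated convergence.

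The only point where you are making things harder than necessary is your ``main obstacle''. There is no cancellation to chase: the $e(\varnothing)$-contributions are simply dropped by sign. From~(\ref{eq:rc-disordered-def}) and~(\ref{eq:rc-ordered-def}) the $\beta$-derivative of each term in $Z^{\RC.x}(\Lambda)$ picks up the factor
\[
-\abs{R^\xorder}\,\partial_\beta e(\xB)\;-\;\abs{R^\xdisorder}\,\partial_\beta e(\varnothing)\;.
\]
Since $\partial_\beta e(\varnothing)=1>0$, the second piece is $\leq 0$ and can be discarded in an upper bound; since $-\partial_\beta e(\xB)=1/(\xe^\beta-1)>0$ and $\abs{R^\xorder}\leq\abs{B(\Lambda)}$, what remains is $\leq -\abs{B(\Lambda)}\,\partial_\beta e(\xB)$. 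This directly yields
\[
\partial_\beta Z^{\RC.\disorder}(V(\gamma))\;\leq\;-\abs{B(V(\gamma))}\,\partial_\beta e(\xB)\,Z^{\RC.\disorder}(V(\gamma)),
\]
and together with the lower bound $\partial_\beta Z^{\RC.\order}(\interior\gamma)\geq 0$ for the denominator (which the paper asserts from the same definitions), one obtains $\partial_\beta\bar\xi^\xorder(\gamma)\leq -\abs{B(V(\gamma))}\,\partial_\beta e(\xB)\,\bar\xi^\xorder(\gamma)$ --- a one-sided bound, not an identity, and with no ``integer-valued quantity'' involved. The factor $2$ in the statement comes only at the very end from the ratio $\abs{B(\Lambda)}/\abs{S(\Lambda)}\leq 2$, not from the contour estimate itself. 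Your secondary subtlety about $S(V(\gamma))\ni 0$ versus $S(\gamma)\ni 0$ is real but routine; the paper handles it by rewriting $\sum_\gamma\abs{B(V(\gamma))}\bar\xi^\xorder(\gamma)=\sum_b\sum_{\gamma:B(V(\gamma))\ni b}\bar\xi^\xorder(\gamma)$ and invoking translation invariance.
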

\begin{proof}
	For a finite volume $\Lambda$,
	if we denote by $\Gamma(\Lambda)$ the set of all contours in $\Lambda$,
	we have
	\begin{align}
		\frac{\partial}{\partial\beta}\log\mathscr{Z}(\Lambda\,|\,\bar{\xi}^\xorder) &=
			\frac{1}{\mathscr{Z}(\Lambda\,|\,\bar{\xi}^\xorder)}
			\sum_{\partial\in\mathcal{M}_\Lambda}
			\frac{\partial}{\partial\beta}
			\prod_{\gamma\in\partial}\bar{\xi}^\xorder(\gamma) \\
		&=
			\frac{1}{\mathscr{Z}(\Lambda\,|\,\bar{\xi}^\xorder)}
			\sum_{\partial\in\mathcal{M}_\Lambda}
			\sum_{\gamma\in\partial}
			\frac{\partial\bar{\xi}^\xorder(\gamma)}{\partial\beta}
			\prod_{\substack{\widehat{\gamma}\in\partial\\ \widehat{\gamma}\neq\gamma}}\bar{\xi}^\xorder(\widehat{\gamma}) \\
		&=
			\sum_{\gamma\in\Gamma(\Lambda)}
			\frac{\partial\bar{\xi}^\xorder(\gamma)}{\partial\beta}
			\frac{
				\mathscr{Z}(\interior\gamma\,|\,\bar{\xi}^\xorder)\cdot
				\mathscr{Z}(\Lambda\cap\exterior\gamma\,|\,\bar{\xi}^\xorder)
			}{
				\mathscr{Z}(\Lambda\,|\,\bar{\xi}^\xorder)
			} \\
		&\leq
			\sum_{\gamma\in\Gamma(\Lambda)}
			\frac{\partial\bar{\xi}^\xorder(\gamma)}{\partial\beta} \;.
	\end{align}
	Recall that $\bar{\xi}^\xorder(\gamma)$ is equal to either $0$ or $\xi^\xorder(\gamma)$,
	hence the derivative of $\bar{\xi}^\xorder(\gamma)$ is bounded by
	the derivative of $\xi^\xorder(\gamma)$.
	Using the definition of $\xi^\xorder(\gamma)$ (Eq.~(\ref{eq:xi-order})) we have
	\begin{align}
		\frac{\partial\xi^\xorder(\gamma)}{\partial\beta} &=
			\abs{B(\gamma)}\cdot\frac{\partial e(\xB)}{\partial\beta}\cdot\xi^\xorder(\gamma) \ + \ 
			\rho(\gamma)\cdot\xe^{\abs{B(\gamma)}\cdot e(\xB)}\cdot
			\frac{\partial}{\partial\beta}
			\frac{%
				Z^{\RC.\disorder}(V(\gamma))
			}{%
				Z^{\RC.\order}(\interior\gamma)
			} \;.
	\end{align}
	The derivative of the partition functions appearing on the righthand side
	can be bounded directly from the definitions (Eq.~(\ref{eq:rc-disordered-def}) and~(\ref{eq:rc-ordered-def})) by
	\begin{align}
		0 \leq \frac{\partial}{\partial\beta}Z^{\RC.\disorder}(V(\gamma)) &\leq
			-\abs{B(V(\gamma))}\cdot\frac{\partial e(\xB)}{\partial\beta}\cdot
			Z^{\RC.\disorder}(V(\gamma)) \;, \\
		0 \leq \frac{\partial}{\partial\beta}Z^{\RC.\order}(\interior\gamma) &\leq
			-\abs{B(\interior\gamma)}\cdot\frac{\partial e(\xB)}{\partial\beta}\cdot
			Z^{\RC.\order}(\interior\gamma) \;,
	\end{align}
	leading to
	\begin{align}
		\frac{\partial}{\partial\beta}
			\frac{%
				Z^{\RC.\disorder}(V(\gamma))
			}{%
				Z^{\RC.\order}(\interior\gamma)
			} &\leq
			-\abs{B(V(\gamma))}\cdot\frac{\partial e(\xB)}{\partial\beta}\cdot
			\frac{%
				Z^{\RC.\disorder}(V(\gamma))
			}{%
				Z^{\RC.\order}(\interior\gamma)
			} \;.
	\end{align}
	Therefore,
	\begin{align}
		\frac{\partial\bar{\xi}^\xorder(\gamma)}{\partial\beta} &\leq
			-\abs{B(V(\gamma))}\cdot\frac{\partial e(\xB)}{\partial\beta}\cdot
			\bar{\xi}^\xorder(\gamma) \;.
	\end{align}
	
	We can now write 
	\begin{align}
		\frac{\partial}{\partial\beta}\frac{1}{\abs{S(\Lambda)}}\log\mathscr{Z}(\Lambda\,|\,\bar{\xi}^\xorder) &\leq
			-\frac{\partial e(\xB)}{\partial\beta}\cdot \frac{1}{\abs{S(\Lambda)}}
			\sum_{\gamma\in\Gamma(\Lambda)}
			\abs{B(V(\gamma))}\cdot \bar{\xi}^\xorder(\gamma) \\
		&=
			-\frac{\partial e(\xB)}{\partial\beta}\cdot \frac{1}{\abs{S(\Lambda)}}
			\sum_{\gamma\in\Gamma(\Lambda)} \sum_{b\in B(V(\gamma))} \bar{\xi}^\xorder(\gamma) \\
		&=
			-\frac{\partial e(\xB)}{\partial\beta}\cdot \frac{1}{\abs{S(\Lambda)}}
			\sum_{b\in B(\Lambda)} \sum_{\substack{\gamma\in\Gamma(\Lambda)\\ B(V(\gamma))\ni b}}
			\bar{\xi}^\xorder(\gamma) \\
		&\leq
			-\frac{\partial e(\xB)}{\partial\beta}\cdot \frac{1}{\abs{S(\Lambda)}}
			\sum_{b\in B(\Lambda)} \sum_{\substack{\gamma\in\Gamma\\ B(V(\gamma))\ni b}}
			\bar{\xi}^\xorder(\gamma) \\
		&\leq
			-\frac{\partial e(\xB)}{\partial\beta}\cdot
			\frac{\abs{B(\Lambda)}}{\abs{S(\Lambda)}}
			\sum_{\substack{\gamma\in\Gamma\\ S(V(\gamma))\ni 0}}
			\bar{\xi}^\xorder(\gamma) \\
		&\leq
			-2\,\frac{\partial e(\xB)}{\partial\beta}
			\sum_{\substack{\gamma\in\Gamma\\ S(V(\gamma))\ni 0}}
			\bar{\xi}^\xorder(\gamma) \;.
	\end{align}
	Since $\tau$ is large, the above sum converges and leads to a uniform bound (with respect to $\Lambda$)
	for $\frac{\partial}{\partial\beta}\frac{1}{\abs{S(\Lambda)}}\log\mathscr{Z}(\Lambda\,|\,\bar{\xi}^\xorder)$.
	Using the dominated convergence theorem the first claim follows.
	The proof of the other claim is similar.
\end{proof}

\section{The Main Result}
\label{sec:main}


\begin{theorem}
	Let $\varepsilon>0$ and $(q+r)$ large enough.
	The two-dimensional $(q,r)$-Potts model
	undergoes a first-order transition in temperature with breaking of permutation symmetry:
	\begin{enumerate}[ i)]
		\item Above the transition temperature, the model has a unique Gibbs state $\mu^\free$,
			which is ``disordered''.
		\item Below the transition temperature, there exist at least $q$
			different ``ordered'' Gibbs states $\mu^1,\mu^2,\ldots,\mu^q$.
		\item At the transition temperature, $q$ ``ordered'' Gibbs states $\mu^1,\mu^2,\ldots,\mu^q$
			coexist with a ``disordered'' Gibbs state $\mu^\free$.
	\end{enumerate}
	The ``ordered'' and ``disordered'' states can be distinguished by
	\begin{align}
		&\mu^k(\left\{\sigma: \sigma_i=k\right\}) > 1-\varepsilon\,,&&\text{for every visible $k$,} \\
		&\mu^\free(\left\{\sigma: \sigma_i=k\right\})< \varepsilon\,,&&\text{for every $k$,}
	\end{align}
	for every site $i$ in the lattice.
\end{theorem}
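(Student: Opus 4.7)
The plan is to combine three inputs from Section~\ref{sec:damping}: the tube estimate~(\ref{eq:tube}), the derivative bound of Lemma~\ref{lem:derivative:bound}, and the Peierls-type bound of Corollary~\ref{cor:peierls}. The tube estimate places the curves $f^\xorder(\beta)$ and $f^\xdisorder(\beta)$ within $\xe^{-\tau/2}$ above the reference curves $-e(\xB)$ and $-e(\varnothing)$, and the derivative bound keeps their slopes within $O(\xe^{-\tau/2})$ of the reference slopes. Since the reference curves cross at $\bar\beta_\critical=\log(1+\sqrt{q+r})$ with slope gap $\xe^{\bar\beta_\critical}/(\xe^{\bar\beta_\critical}-1)>1$, for $(q+r)$ large the perturbed curves cross at a unique $\beta_\critical$ close to $\bar\beta_\critical$, with $f^\xdisorder>f^\xorder$ strictly below and $f^\xorder>f^\xdisorder$ strictly above. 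Thus $f^\RC=\max\{f^\xorder,f^\xdisorder\}$ is non-differentiable exactly at $\beta_\critical$, with left and right derivatives differing by approximately the reference slope gap. Via $f(\beta)=2(\beta+f^\RC(\beta))$, this jump is precisely the latent heat of the Potts pressure.

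For the structure of the phases, I use Lemma~\ref{lem:damped:naturally}: above $\beta_\critical$ every order contour satisfies $\xi^\xorder(\gamma)\leq\xe^{-\tau\abs{\gamma}}$, and since the number of contours of size $n$ surrounding a given site grows only exponentially in $n$, the series $\sum_{\gamma\in\Gamma_{\{i\}}}\xi^\xorder(\gamma)$ converges and is $O(\xe^{-\tau/2})$. Corollary~\ref{cor:peierls} then forces $\phi^\order$ to be almost surely concentrated on configurations with a unique infinite sea of order and finite disorder islands. Symmetrically, below $\beta_\critical$ the measure $\phi^\disorder$ concentrates on configurations with a unique infinite sea of disorder. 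At $\beta_\critical$ both weights are damped simultaneously, yielding the coexistence. The weak existence and uniqueness of the infinite-volume random-cluster measures is drawn from the FKG machinery recalled in Appendix~\ref{apx:rc:properties}.

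To transfer to the Potts model I invoke interpretation~II of the coupling from Section~\ref{sec:rc}: starting from the finite-volume Potts distribution with boundary condition $\omega^k$, the coupling produces an ordered-boundary $r$-biased random-cluster configuration in which the infinite sea, being attached to the boundary, is pinned to colour $k$ in the thermodynamic limit. Defining $\mu^k$ as the Potts marginal of the coupled limiting measure gives, for any site $i$,
\begin{align*}
\mu^k(\sigma_i=k) \;\geq\; \phi^\order\bigl(i\in\text{infinite sea of order}\bigr) \;\geq\; 1-\sum_{\gamma\in\Gamma_{\{i\}}}\xi^\xorder(\gamma) \;>\; 1-\varepsilon
\end{align*}
by the contour counting above. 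Analogously, $\mu^\free$ is the Potts marginal of the coupling applied to $\phi^\disorder$; under it, site $i$ is an isolated singleton up to an $O(\xe^{-\tau/2})$ event, and isolated sites get uniform colours on $\{1,\ldots,q+r\}$, whence $\mu^\free(\sigma_i=k)\leq 1/(q+r)+O(\xe^{-\tau/2})<\varepsilon$. The $q$ ordered measures $\mu^1,\ldots,\mu^q$ are then distinct via the statistic $\{\sigma_i=k\}$, which is the stated symmetry breaking.

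The main obstacle will be the clean passage from the random-cluster measures back to genuine Potts Gibbs measures: one must verify that the colouring procedure of interpretation~II commutes with the thermodynamic limit, so that the resulting Potts measures satisfy the DLR equations, and that the pinning of the infinite sea by the boundary colour $k$ survives the limit. This is the content deferred to Appendix~\ref{apx:rc:properties}; in the unbiased case it is classical, and only routine modifications are required to accommodate the bias. Once these connections are in place, the single-site marginal bounds above are immediate from Corollary~\ref{cor:peierls} together with the summability of $\xi^\xorder$ and $\xi^\xdisorder$.
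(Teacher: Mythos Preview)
Your proposal tracks the paper's proof closely for the first-order transition (crossing of $f^\xorder$ and $f^\xdisorder$ with distinct slopes), for the damping via Lemma~\ref{lem:damped:naturally}, and for the transfer to Potts measures via the coupling and Corollary~\ref{cor:peierls}. The single-site bounds you derive for $\mu^k$ and $\mu^\free$ are exactly those of the paper.

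There is, however, one genuine gap: you never address the \emph{uniqueness} claim in part~(i). Showing that $\mu^\free(\sigma_i=k)<\varepsilon$ for $\beta<\beta_\critical$ does not by itself exclude other Gibbs measures at those temperatures; in particular, nothing in your argument rules out that the ordered measures $\mu^k$ persist into the high-temperature regime as distinct Gibbs states. The paper closes this gap with a separate percolation argument: it identifies the random-cluster percolation threshold $p_\critical$ with $p_{\beta_\critical}$ (by observing that percolation of order under $\phi^\order_{p_\beta}$ for $\beta<\beta_\critical$ would contradict the sea-of-disorder conclusion~(\ref{eq:thm:sea-of-disorder}), and non-percolation under $\phi^\disorder_{p_\beta}$ for $\beta>\beta_\critical$ would contradict~(\ref{eq:thm:sea-of-order})), and then invokes the general fact from Appendix~\ref{apx:rc:properties} that absence of an infinite cluster under $\phi^\order_{p_\beta}$ forces uniqueness of the $(q,r)$-Potts Gibbs measure. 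Without this step, your argument establishes (ii), (iii), and the distinguishing inequalities, but not the full content of~(i).
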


\begin{proof}
%
	In the previous sections we have introduced the main ingredients to prove
	the occurrence of a first-order transition.
	Below, we first put these ingredients together so as to obtain a recipe
	for the proof.
	Afterwards, we shall see how these ingredients along with basic properties
	of the biased random-cluster model (see Appendix~\ref{apx:rc:properties})
	can be used to prove the symmetry breaking at the transition temperature.
	
	The first step was to reduce the partition function of the $(q,r)$-Potts model
	to the partition function of the $r$-biased random-cluster model.
	This was done for the free and homogeneous visible boundary conditions,
	which led to the disordered and ordered boundary conditions for the $r$-biased random-cluster model
	(see Eq.~(\ref{eq:potts-rc:disordered}) and~(\ref{eq:potts-rc:ordered})).
	By means of this we could rewrite the pressure $f(\beta)$ for the $(q,r)$-Potts model as
	\begin{align}
		f(\beta) &=
			2\beta+2f^\RC(\beta) \;,
	\end{align}
	where
	\begin{align}
		f^\RC(\beta) &=
			\lim_{n\to\infty} \frac{\log Z^{\RC.\order}_{p_\beta,q,r}(\Lambda_n)}{\abs{B(\Lambda_n)}} =
			\lim_{n\to\infty} \frac{\log Z^{\RC.\disorder}_{p_\beta,q,r}(\Lambda_n)}{\abs{B(\Lambda_n)}} \;.
	\end{align}

	The second step consisted in re-expressing the partition functions
	$Z^{\RC.\order}$ and $Z^{\RC.\disorder}$ in terms of the partition functions of two
	abstract contour models $\mathscr{Z}(\cdot\,|\,\xi^\xorder)$ and
	$\mathscr{Z}(\cdot\,|\,\xi^\xdisorder)$ (Lemma~\ref{lem:contour-model})
	so that we could write
	\begin{align}
		f^\RC(\beta) &=
			-e(\xB) + g(\xi^\xorder) =
			-e(\varnothing) + g(\xi^\xdisorder) \;,
	\end{align}
	where $g(\xi^\xorder)$ and $g(\xi^\xdisorder)$ are the pressure functions
	for the two contour models (Eq.~(\ref{eq:pressure:contour-model})).
	
	If the weight function $\chi$ of a contour model is sufficiently ``damped''
	(i.e., $\chi(\gamma)\leq\xe^{-\tau\abs{\gamma}}$ for $\tau$ large enough),
	the corresponding pressure $g(\chi)$ can be made arbitrarily small (Proposition~\ref{prop:damped}).
	In order to exploit this result, in the third step, we truncated the weight functions
	$\xi^\xorder$ and $\xi^\xdisorder$ so as to render them
	artificially damped (Eq.~(\ref{eq:weight:truncated})).
	We could then define two functions
	\begin{align}
		f^\xorder(\beta) &=	-e(\xB) + g(\bar{\xi}^\xorder) \;, \\
		f^\xdisorder(\beta) &=	-e(\varnothing) + g(\bar{\xi}^\xdisorder) \;,
	\end{align}
	which approximate $f^\RC(\beta)$ from below, and which can be thought of
	(for sufficiently large $\tau$) as
	perturbations of the functions $-e(\xB)$
	and $-e(\varnothing)$, respectively (see Figure~\ref{fig:intersection}(a)).
	
	In the fourth step, we proved that, for $(q+r)$ large (relative to $\tau$),
	the pressure $f^\RC(\beta)$ is the maximum of
	these two approximations.
	This was achieved by proving that whenever $f^\xorder\geq f^\xdisorder$,
	the weight function $\xi^\xorder$ is ``naturally'' damped (i.e., $\xi^\xorder=\bar{\xi}^\xorder$)
	and vice versa (Lemma~\ref{lem:damped:naturally}).
	Therefore, $f^\RC(\beta)$ can be closely approximated by the maximum between
	$-e(\xB)$ and $-e(\varnothing)$.
	Due to the continuity of the pressure functions, the latter implies that the curves
	$f^\xorder(\beta)$ and $f^\xdisorder(\beta)$ intersect.
	
	In the last step, we showed that for $\tau$ sufficiently large,
	$\frac{\partial}{\partial\beta}f^\xorder(\beta) > \frac{\partial}{\partial\beta}f^\xdisorder(\beta)$
	(Lemma~\ref{lem:derivative:bound}).
	Therefore, the two functions meet at a unique point $\beta_\critical$
	at which $f^\RC(\beta)$ is non-differentiable (Figure~\ref{fig:intersection}(b)).
	Hence, the $(q,r)$-Potts model undergoes a first-order phase transition
	at $\beta_\critical$.
	
	We now prove the breaking of permutation symmetry at the transition temperature.
	Note that if $\beta\geq\beta_\critical$, we have $f^\xorder(\beta)\geq f^\xdisorder(\beta)$
	and therefore, in view of Lemma~\ref{lem:damped:naturally}, the weights $\xi^\xorder$
	satisfy $\xi^\xorder(\gamma)\leq\xe^{-\tau\abs{\gamma}}$.
	Since $\tau$ was chosen large, for every finite set of sites $A$, the sum
	$\sum_{\gamma\in\Gamma_A} \xi^\xorder(\gamma)$
	converges.  Hence, it follows from Corollary~\ref{cor:peierls}
	that
	\begin{align}
	\label{eq:thm:sea-of-order}
		\phi^\order_{p_\beta}\left(
			\begin{array}{c}
				\text{\rm $\exists$ unique infinite sea of order} \\
				\text{\rm with finite islands of disorder}
			\end{array}
		\right)=1 \;.
	\end{align}
	(The subscript $p_\beta$ is added to emphasize the dependence on $\beta$.)
	Similarly, if $\beta\leq\beta_\critical$, the sum $\sum_{\gamma\in\Gamma_A} \xi^\xdisorder(\gamma)$
	converges and thus
	\begin{align}
	\label{eq:thm:sea-of-disorder}
		\phi^\disorder_{p_\beta}\left(
			\begin{array}{c}
				\text{\rm $\exists$ unique infinite sea of disorder} \\
				\text{\rm with finite islands of order}
			\end{array}
		\right)=1 \;.
	\end{align}
		
	For a visible colour $k$,
	let $\mu^k_\beta$ be, as introduced in Section~\ref{sec:model},
	a weak limit of Boltzmann distributions with
	homogeneous boundary conditions $\omega^k$.
	Likewise, let $\mu^\free_\beta$ be a weak limit
	of free-boundary Boltzmann distributions.
	
	Similar to the finite-volume couplings, there exists
	a coupling of $\mu^k_\beta$ and $\phi^\order_{p_\beta}$
	with the property that
	with probability~$1$,
	every site incident to an infinite connected component of present bonds
	is coloured with $k$ (see Appendix~\ref{apx:rc:properties}).
	If $\beta\geq\beta_\critical$, we know that almost surely
	the bond configuration consists of a unique sea of order with finite
	islands of disorder.
	In particular,
	the probability that a given site $i$ takes a colour other than $k$
	is bounded by the probability that site $i$
	is surrounded by an order contour; that is,
	\begin{align}
		\mu^k_\beta(\left\{\sigma:\sigma_i\neq k\right\}) &\leq
			\phi^\order_{p_\beta}\{
				X: \xpartial X\cap\Gamma_i\neq\varnothing
			\} \;.
	\end{align}
	In this region of $\beta$, Corollary~\ref{cor:peierls}
	and Lemma~\ref{lem:damped:naturally} ensure that
	$\phi^\order_{p_\beta}\{
		X: \xpartial X\cap\Gamma_i\neq\varnothing
	\}$ can be made arbitrarily small by tuning $\tau$.
	Hence, for every $\varepsilon>0$, choosing $q+r$ large enough,
	we have $\mu^k_\beta(\left\{\sigma:\sigma_i= k\right\})>1-\varepsilon$.
	
	The measures $\mu^\free_\beta$ and $\phi^\disorder_{p_\beta}$
	can also be coupled, in such a way that,
	given a configuration of bonds, the colour of
	the isolated sites are chosen independently and uniformly
	among the $q+r$ possibilities.
	Using this coupling, and conditioning on whether a given site $i$
	is isolated or not, we obtain
	\begin{align}
		\mu^\free_\beta(\left\{\sigma:\sigma_i=k\right\}) &\leq
			\frac{1}{q+r} +
			\phi^\disorder_{p_\beta}\{
				X: \text{$i$ not isolated in $(\xS,X)$}
			\} \;.
	\end{align}
	If $\beta\leq\beta_\critical$,
	the bond configuration almost surely consists of a unique sea of disorder with finite
	islands of order.  Hence, the probability that site $i$
	is not isolated is bounded by the probability that
	site $i$ is surrounded by a disorder contour; that is,
	\begin{align}
		\phi^\disorder_{p_\beta}\{
			X: \text{$i$ not isolated in $(\xS,X)$}
		\} &\leq
			\phi^\disorder_{p_\beta}\{
				X: \xpartial X\cap\Gamma_i\neq\varnothing
			\} \;.
	\end{align}
	As in the previous case, Corollary~\ref{cor:peierls}
	and Lemma~\ref{lem:damped:naturally} guarantee that
	for every $\varepsilon>0$,
	choosing $q+r$ large enough,
	$\mu^\free_\beta(\left\{\sigma:\sigma_i=k\right\})<\varepsilon$.
	
	It remains to show that for $\beta<\beta_\critical$,
	the $(q,r)$-Potts Gibbs measure is unique.
	
	As in the standard random-cluster model,
	there exists a critical value $0<p_\critical<1$ such that
	\begin{itemize}
		\item for $p<p_\critical$, almost surely with respect to
			$\phi^\order_p$ and $\phi^\disorder_p$,
			there is no infinite connected component of bonds
			(order does not ``percolate''), whereas
		\item for $p>p_\critical$, the event that a given site is in
			an infinite connected component
			happens with positive probability
			under both $\phi^\order_p$ and $\phi^\disorder_p$.
	\end{itemize}
	(See Appendix~\ref{apx:rc:properties}.)
	It follows that $p_\critical=p_{\beta_\critical}$.
	Namely, if $p_\beta<p_\critical$, then order does not percolate
	under $\phi^\order_{p_\beta}$.  Therefore, equation~(\ref{eq:thm:sea-of-order})
	does not hold, implying that $\beta<\beta_\critical$.
	Conversely, if $p_\beta>p_\critical$, then order percolates
	with positive probability under $\phi^\disorder_{p_\beta}$,
	refuting~(\ref{eq:thm:sea-of-disorder}).
	Hence, we must have $\beta>\beta_\critical$.
	
	On the other hand, for every $\beta$ at which
	\begin{align}
		\phi^\order_{p_\beta}(\text{$\exists$ an infinite connected component of bonds})=0 \;,
	\end{align}
	the measure $\mu^\free_\beta$ is the only Gibbs measure for
	the $(q,r)$-Potts model (see Appendix~\ref{apx:rc:properties}).
	The latter condition is guaranteed whenever $p_\beta<p_\critical$,
	which is equivalent to $\beta<\beta_\critical$.
	Thus the uniqueness of Gibbs measure for $\beta<\beta_\critical$ follows.
\end{proof}

\section{Conclusion}
\label{sec:conclusion}
In this paper, we presented a proof that the two-dimensional
Potts model with $q$ visible colours and $r$ invisible colours
undergoes a first-order phase transition
in temperature accompanied by a $q$-fold symmetry breaking,
provided the number of invisible colours is large enough.
On the other hand, for $r=0$ (no invisible colours),
the model reduces to the standard $q$-colour Potts model,
for which it is known that if $q=2,3,4$, the transition in two dimensions is second-order.
Tamura, Tanaka and Kawashima~\cite{TamTanKaw10,TanTam10,TanTamKaw11}
introduced the Potts model with $r$ invisible colours as
a simple two-dimensional example with short-range interactions
in which, tuning the parameter $r$, the same symmetry breaking could accompany
phase transitions of different orders.
The impossibility to infer the order of the phase transition from
the broken symmetry was already noticed in other examples,
such as the two-dimensional $3$-colour Kac-Potts model~\cite{GobMer07}.
For this model, Gobron and Merola proved that
a $3$-fold symmetry breaking might be accompanied with either
a first-order or a second-order phase transition,
by changing the finite range of the interactions.

The first-order phase transition in the $(q,r)$-Potts model
occurs as long as $q+r$ is large enough.
In particular, even for small values of $q$ (say, $q=1,2,3,4$),
the presence of many invisible colours assures
a first-order transition.
The argument is very similar to the one 
for the standard $q$-Potts model, in which $q$ is required to be large~\cite{LaaMesMirRuiShl91,Kot94}.
The transition point is asymptotically (in $q+r$)
given by $\beta_\critical\approx\frac{1}{2}\log(q+r)$.
For $q+r$ large, the latent heat is approximately given
by
$2\left(-\frac{\partial e(\varnothing)}{\partial\beta}+\frac{\partial e(\xB)}{\partial\beta}\right)=
2+\frac{2}{\sqrt{q+r}}$, which tends to $2$ as $q+r\to\infty$.

The proof relies on a formulation of the Potts model with invisible colours
in terms of a variant of the random-cluster model,
which we named the biased random-cluster model.
The difference between this new model and the original
random-cluster model is that it weights singleton connected components
differently from non-singleton connected components.
Such a disparity allows one to increase the entropy
by increasing the number of invisible colours,
while keeping the number of ground states (i.e., the number of visible colours) unchanged.
The random-cluster representation allows for a clear formulation
of order and disorder:
order is associated with the presence of bonds while
disorder with the absence of bonds.
This leads to a simple notion of contours describing
the interface between order and disorder.
Hence, the random-cluster representation
lends itself to a Pirogov-Sinai analysis,
which is used to prove the existence of a first-order phase transition.

We remark that the above analysis extends to higher dimensions.

\appendix

\section{Appendix}

\subsection{Derivation of the Biased Random-Cluster Representation}
\label{apx:random-cluster}


To derive the relation~(\ref{eq:potts-rc:disordered}), we start from~(\ref{eq:potts-rc:finite}) and write
\begin{align}
	Z_\beta(\Lambda_n) &=
		\xe^{\beta\abs{B(\Lambda_n)}}\cdot Z^{\RC}_{p_\beta,q,r}(\Lambda_n) \\
	&=
		\sum_{X\subseteq B(\Lambda_n)}
			(\xe^\beta-1)^{\abs{X}}
			(q+r)^{\kappa_0(S(\Lambda_n),X)}
			q^{\kappa_1(S(\Lambda_n),X)} \\
	&=
		(q+r)^{-\abs{S(\Lambda_{n+1}\setminus\Lambda_n)}}
		\hspace{-2em}
		\sum_{\substack{Y\subseteq B(\Lambda_{n+1}) \\
			Y\cap\left(B(\Lambda_{n+1})\setminus B(\Lambda_n)\right) =\varnothing
		}}
		\hspace{-2em}
			(\xe^\beta-1)^{\abs{Y}}
			(q+r)^{\kappa_0(S(\Lambda_{n+1}),Y)}
			q^{\kappa_1(S(\Lambda_{n+1}),Y)} \\
	&=
		(q+r)^{-\abs{S(\Lambda_{n+1}\setminus\Lambda_n)}}
		\cdot\xe^{\beta\abs{B(\Lambda_{n+1})}}\cdot
		Z^{\RC.\disorder}_{p_\beta,q,r}(\Lambda_{n+1}) \;.
\end{align}


To obtain the relation~(\ref{eq:potts-rc:ordered}), we need to take
the homogeneous boundary condition for the $(q,r)$-Potts model
into account.
Denoting the set of $(q,r)$-Potts configurations on $\Lambda_n$
by $\Omega_{\Lambda_n}$, 
we start from the definition~(\ref{eq:potts:partition:boundary})
and write
\begin{align}
	Z_\beta^{\omega^k}(\Lambda_n) &=
		 \sum_{\sigma\in\Omega_{\Lambda_n}}\exp\left\lbrace 
		 	\;
			\beta\hspace{-1em}\sum_{\{i,j\}\in B(\Lambda_n)}\hspace{-1em} \delta(\sigma_i=\sigma_j\leq q)
		    \hspace{2em}+\hspace{2em}\beta\hspace{-2em}\sum_{\substack{
			\{i,j\}\in B(\Lambda_{n+1})\\
			i\in S(\Lambda_n),\;
			j \notin S(\Lambda_n)}
			}\hspace{-2em} \delta(\sigma_i=\omega^k_j= k)
			\;
		\right\rbrace 
	 \\
	&=
	   \sum_{\sigma\in\Omega_{\Lambda_n}}\;\prod_{\{i,j\}\in B(\Lambda_n)} \hspace{-0.7em}\left( 
	   1+\delta(\sigma_i=\sigma_j\leq q)(\xe^\beta -1)
	\right) 
	   \prod_{\substack{\{i,j\}\in B(\Lambda_{n+1})\\
	   i\in S(\Lambda_n),\;
	   j\notin S(\Lambda_n)}
	   }\hspace{-1em}\left( 
	   1+\delta(\sigma_i=k)(\xe^\beta -1)
	 \right) \;.
\end{align}
Denoting
\begin{align}
	\partial\Lambda_n &\IsDef \left\{
			\{i,j\}\in B(\Lambda_{n+1}): i\in S(\Lambda_n) \text{ and } j\notin S(\Lambda_n)
		\right\} \;,
\end{align}
we can expand the products to obtain
\begin{align}
	Z_\beta^{\omega^k}(\Lambda_n)
	&=\sum_{\sigma\in\Omega_{\Lambda_n}}
		\sum_{X_1\subseteq B(\Lambda_n)}\sum_{X_2\subseteq\partial\Lambda_n}
		(\xe^\beta -1)^{\abs{X_1}+\abs{X_2}} \; \delta(\sigma\in \Xi_k(X_1,X_2)) \;,
\end{align}
where
\begin{align}
	&\Xi_k(X_1,X_2) \nonumber\\
	&\IsDef \left\{
		\sigma\in\Omega_{\Lambda_n}: 
		\sigma_i=\sigma_j\leq q \text{ for all $\{i,j\}\in X_1$ and }
		\sigma_i=k 
			\text{ for all $i\in S(\Lambda_n)\cap S(X_2)$}
	\right\} \;.
\end{align}
To impose the ordered boundary condition,
we multiply and divide by
$(\xe^\beta-1)^{\abs{B(\Lambda_{n+1}\setminus\Lambda_n)}}$
to emulate the presence of the bonds in $B(\Lambda_{n+1}\setminus\Lambda_n)$.
This gives
\begin{align}
	Z_\beta^{\omega^k}(\Lambda_n) &=
		(\xe^\beta-1)^{-\abs{B(\Lambda_{n+1}\setminus\Lambda_n)}}
		\sum_{\sigma\in\Omega_{\Lambda_n}}
		\sum_{
			\substack{
				X\subseteq B(\Lambda_{n+1})\\
				X\supseteq B(\Lambda_{n+1}\setminus\Lambda_n)
			}
		}
		(\xe^\beta -1)^{\abs{X}} \; \delta(\tilde{\sigma}\in \Theta_k(X)) \;,
\end{align}
where $\tilde{\sigma}$ is the extension of $\sigma$ to
a configuration in $\Omega_{\Lambda_{n+1}}$ with
$\tilde{\sigma}_i=k$ for $i\in S(\Lambda_{n+1}\setminus\Lambda_n)$
and
\begin{align}
	&\Theta_k(X) \nonumber\\
	&\IsDef
	\left\{
		\tilde{\sigma}\in\Omega_{\Lambda_{n+1}}: 
		\tilde{\sigma}_i=\tilde{\sigma}_j\leq q \text{ for all $\{i,j\}\in X$ and }
		\tilde{\sigma}_i=k \text{ for all $i\in S(\Lambda_{n+1}\setminus\Lambda_n)$}
	\right\} \;.
\end{align}
Changing the order of the sums gives
\begin{align}
	Z_\beta^{\omega^k}(\Lambda_n) &=
		(\xe^\beta-1)^{-\abs{B(\Lambda_{n+1}\setminus\Lambda_n)}}
		\hspace{-1em}
		\sum_{
			\substack{
				X\subseteq B(\Lambda_{n+1})\\
				X\supseteq B(\Lambda_{n+1}\setminus\Lambda_n)
			}
		}
		\hspace{-1em}
		(\xe^\beta -1)^{\abs{X}} \;
		\abs{\Theta_k(X)} \;.
\end{align}
Note that for $X$
satisfying $B(\Lambda_{n+1}\setminus\Lambda_n)\subseteq X\subseteq B(\Lambda_n)$,
the size of $\Theta_k(X)$ is
\begin{align}
	q^{-1}\cdot(q+r)^{\kappa_0(S(\Lambda_{n+1}),X)}q^{\kappa_1(S(\Lambda_{n+1}),X)}
\end{align}
we obtain
\begin{align}
	Z_\beta^{\omega^k}(\Lambda_n) &=
		q^{-1}(\xe^\beta-1)^{-\abs{B(\Lambda_{n+1}\setminus\Lambda_n)}}
		\hspace{-1em}
		\sum_{
			\substack{
				X\subseteq B(\Lambda_{n+1})\\
				X\supseteq B(\Lambda_{n+1}\setminus\Lambda_n)
			}
		}
		\hspace{-1em}
		(\xe^\beta -1)^{\abs{X}}\cdot
		(q+r)^{\kappa_0(S(\Lambda_{n+1}),X)}q^{\kappa_1(S(\Lambda_{n+1}),X)} \\
	&=
		q^{-1}\cdot(\xe^\beta-1)^{-\abs{B(\Lambda_{n+1}\setminus\Lambda_n)}}
		\cdot\xe^{\beta\abs{B(\Lambda_{n+1})}}\cdot
		Z^{\RC.\order}_{p_\beta,q,r}(\Lambda_{n+1}) \;.
\end{align}


\subsection{Derivation of the Contour Representation}
\label{apx:contour}
We show that for $\Lambda=\Lambda_{n+1}$,
the definitions~(\ref{eq:rc-partition-function:disordered}) and~(\ref{eq:rc-disordered-def:old})
(resp., (\ref{eq:rc-partition-function:ordered}) and~(\ref{eq:rc-ordered-def:old})) agree.

The weight of a configuration $X\subseteq B(\Lambda_{n+1})$ is
\begin{align}
	p_\beta^{\abs{X}}(1-p_\beta)^{\abs{B(\Lambda_{n+1})\setminus X}}
		(q+r)^{\kappa_0(S(\Lambda_{n+1}),X)}q^{\kappa_1(S(\Lambda_{n+1}),X)} \;.
\end{align}
For a configuration $X$ in $\mathcal{X}^\disorder_{\Lambda_{n+1}}$ or
$\mathcal{X}^\order_{\Lambda_{n+1}}$, let $\partial_X$ be the corresponding
contour family in $\Delta_{\Lambda_{n+1}}^\disorder$ or $\Delta_{\Lambda_{n+1}}^\order$.
More precisely, $\partial_X\IsDef\xpartial X$ if $\mathcal{X}^\disorder_{\Lambda_{n+1}}$ and
$\partial_X\IsDef \partial\left(X\cup B(\Lambda_{n+1})^\complement\right)$ if $\mathcal{X}^\order_{\Lambda_{n+1}}$.
\begin{claim}
	For $X\in\mathcal{X}^\disorder_{\Lambda_{n+1}}$ we have the relation
	\begin{align}
	\label{eq:relation:bond-site:d}
		2\abs{B(\Lambda_{n+1})\setminus X} &=
			4\kappa_0(S(\Lambda_{n+1}),X) + \sum_{\gamma\in\partial_X}\abs{\gamma} - \abs{\xpartial B(\Lambda_{n+1})} \;,
	\end{align}
	and for $X\in\mathcal{X}^\order_{\Lambda_{n+1}}$ we have
	\begin{align}
	\label{eq:relation:bond-site:o}
		2\abs{B(\Lambda_{n+1})\setminus X} &=
			4\kappa_0(S(\Lambda_{n+1}),X) + \sum_{\gamma\in\partial_X}\abs{\gamma} \;.
	\end{align}
\end{claim}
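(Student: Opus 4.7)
The plan is to prove both identities by double-counting site-bond incidences. The starting point is that the contour family $\partial_X$ partitions the boundary $\xpartial X_\star$ of the full (possibly infinite) extension $X_\star$ of $X$, so $\sum_{\gamma\in\partial_X}\abs{\gamma}=\abs{\xpartial X_\star}$. The two cases are then handled separately.

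In the disorder case, $X_\star = X$ itself, which is a finite subset of $B(\Lambda_n)$. I would first derive the basic identity
\begin{align*}
\abs{\xpartial X} = 4\abs{S(X)} - 2\abs{X}
\end{align*}
by observing that each $i\in S(X)$ has four bonds in $\xL$, of which $d_X(i)$ lie in $X$; summing $(4-d_X(i))$ over $i\in S(X)$ gives the total count on the right, while on the left each pair $(i,b)\in\xpartial X$ is visited once. I would then rewrite $\abs{S(X)} = \abs{S(\Lambda_{n+1})} - \kappa_0(S(\Lambda_{n+1}),X)$, using that $X\subseteq B(\Lambda_n)$ forces every site in $S(\Lambda_{n+1})\setminus S(X)$ to be isolated in $(S(\Lambda_{n+1}),X)$. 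Finally, I would invoke the boundary identity
\begin{align*}
4\abs{S(\Lambda_{n+1})} = 2\abs{B(\Lambda_{n+1})} + \abs{\xpartial B(\Lambda_{n+1})},
\end{align*}
itself a double-count of lattice-degrees of sites in $S(\Lambda_{n+1})$ split into bonds of $B(\Lambda_{n+1})$ (contributing $2$ each) versus bonds outside (contributing $1$ each, by the no-holes assumption on $\Lambda_{n+1}$). Assembling the three gives~(\ref{eq:relation:bond-site:d}).

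In the order case, I would work with the cofinite extension $\tilde X \IsDef X \cup (\xB\setminus B(\Lambda_{n+1}))$ and set $Y \IsDef B(\Lambda_{n+1})\setminus X = \xB\setminus\tilde X$. Interchanging the order of summation,
\begin{align*}
\abs{\xpartial\tilde X} &= \sum_{v\in S(\tilde X)} d_Y(v) \\
&= \sum_{v\in S(\Lambda_{n+1})} d_Y(v) - \sum_{v\in S(\Lambda_{n+1})\setminus S(\tilde X)} d_Y(v).
\end{align*}
Since $Y\subseteq B(\Lambda_{n+1})$, the first sum equals $2\abs{Y}$, and each $v\notin S(\tilde X)$ has all four bonds in $Y$ and hence contributes $4$ to the second sum. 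Identity~(\ref{eq:relation:bond-site:o}) then reduces to the claim $\abs{S(\Lambda_{n+1})\setminus S(\tilde X)} = \kappa_0(S(\Lambda_{n+1}),X)$.

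The one substantive thing to verify is this last identification, which is the main obstacle. The ordered-boundary inclusion $X\supseteq B(\Lambda_{n+1}\setminus\Lambda_n)$ together with the presence of lattice bonds outside $B(\Lambda_{n+1})$ ensures that every site in $S(\Lambda_{n+1})\setminus S(\Lambda_n)$ lies in $S(\tilde X)$ and is non-isolated in $(S(\Lambda_{n+1}),X)$. For an interior site $i\in S(\Lambda_n)$, all four incident bonds lie in $B(\Lambda_{n+1})$, so $i\notin S(\tilde X)$ iff all four of these bonds lie in $Y$, which is exactly the condition for $i$ to be isolated in $(S(\Lambda_{n+1}),X)$. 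Once this bookkeeping is secured, both identities follow from elementary degree sums.
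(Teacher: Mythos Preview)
Your proof is correct and rests on the same underlying idea as the paper's: double-counting site--bond incidence pairs. The organization differs slightly. The paper gives a single unified decomposition of the set $\{(i,b): b\in B(\Lambda_{n+1})\setminus X,\ i\sim b\}$ according to whether $i\in S(X)$, and then identifies the pieces directly with $\sum_\gamma\abs{\gamma}$, $4\kappa_0$, and (in the disorder case) $\abs{\xpartial B(\Lambda_{n+1})}$; this handles both boundary conditions in one shot. You instead treat the two cases separately: in the disorder case you start from the degree identity $\abs{\xpartial X}=4\abs{S(X)}-2\abs{X}$ and feed in the auxiliary relation $4\abs{S(\Lambda_{n+1})}=2\abs{B(\Lambda_{n+1})}+\abs{\xpartial B(\Lambda_{n+1})}$, while in the order case you sum $d_Y$ over $S(\Lambda_{n+1})$ and subtract the contribution of isolated sites. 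The paper's route is a bit more economical (no auxiliary identity needed), but your argument is equally valid, and your explicit verification that $S(\Lambda_{n+1})\setminus S(\tilde X)$ coincides with the isolated sites under the ordered boundary condition is exactly the point the paper leaves implicit.
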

\begin{proof}
	We first decompose the set
	$\left\{(i,b): b\in B(\Lambda_{n+1})\setminus X \text{ and } i\sim b \right\}$ as
	\begin{gather}
		\left\{(i,b): b\in B(\Lambda_{n+1})\setminus X \text{ and } i\sim b \text{ and } i\in S(X) \right\} \\
		\cup \nonumber \\
		\label{eq:appx:set:2}
		\;\;\left\{(i,b): b\in B(\Lambda_{n+1})\setminus X \text{ and } i\sim b \text{ and } i\notin S(X) \right\} \;,
	\end{gather}
	and furthermore note that the latter set can be expressed as
	\begin{gather}
		\left\{(i,b): i\in S(\Lambda_{n+1})\setminus S(X) \text{ and } i\sim b\right\} \\
		\setminus \nonumber \\
		\left\{(i,b): i\in S(\Lambda_{n+1})\setminus S(X) \text{ and } i\sim b \text{ and }
			b\notin B(\Lambda_{n+1})\right\} \;.
	\end{gather}
	We have
	\begin{align}
		\abs{\left\{(i,b): b\in B(\Lambda_{n+1})\setminus X \text{ and } i\sim b \right\}} &=
			2\abs{B(\Lambda_{n+1})\setminus X} \;, \\
		\abs{\left\{(i,b): b\in B(\Lambda_{n+1})\setminus X \text{ and } i\sim b \text{ and } i\in S(X) \right\}} &=
			\sum_{\gamma\in\partial_X}\abs{\gamma} \;, \\
		\abs{\left\{(i,b): i\in S(\Lambda_{n+1})\setminus S(X) \text{ and } i\sim b\right\}} &=
			4\kappa_0(S(\Lambda_{n+1}),X) \;,
	\end{align}
	and the cardinality of
	$\left\{(i,b): i\in S(\Lambda_{n+1})\setminus S(X) \text{ and } i\sim b \text{ and }
			b\notin B(\Lambda_{n+1})\right\}$ equals
	\begin{align}
			\begin{cases}
				\abs{\xpartial B(\Lambda_{n+1})}\;, \quad & \text{if $X\in\mathcal{X}_{\Lambda_{n+1}}^\disorder$,} \\
				0\;, & \text{if $X\in\mathcal{X}_{\Lambda_{n+1}}^\order$.}
			\end{cases}
	\end{align}
\end{proof}

Using the relations~(\ref{eq:relation:bond-site:d}) and~(\ref{eq:relation:bond-site:o})
the weight of $X$ takes the form
\begin{align}
	(q+r)^{\frac{1}{4}\abs{\xpartial B(\Lambda_{n+1})}}
	\xe^{-e(\xB)\cdot\abs{X}}\xe^{-e(\varnothing)\cdot\abs{B(\Lambda_{n+1})\setminus X}}
		(q+r)^{-\frac{1}{4}\sum_{\gamma\in\partial_X}\abs{\gamma}} q^{\kappa_1(S(\Lambda_{n+1}),X)}
\end{align}
if $X\in\mathcal{X}_{\Lambda_{n+1}}^\disorder$ and
\begin{align}
	\xe^{-e(\xB)\cdot\abs{X}}\xe^{-e(\varnothing)\cdot\abs{B(\Lambda_{n+1})\setminus X}}
		(q+r)^{-\frac{1}{4}\sum_{\gamma\in\partial_X}\abs{\gamma}} q^{\kappa_1(S(\Lambda_{n+1}),X)}
\end{align}
if $X\in\mathcal{X}_{\Lambda_{n+1}}^\order$.
If $X\in\mathcal{X}_{\Lambda_{n+1}}^\disorder$, every non-singleton connected component
in $(S(\Lambda_{n+1}),X)$ contains all the sites of a unique disorder contour in $\partial_X$, so that
the number of disorder contours in $\partial_X$ is the same as $\kappa_1(S(\Lambda_{n+1}),X)$,
and we have
\begin{align}
	(q+r)^{-\frac{1}{4}\sum_{\gamma\in\partial_X}\abs{\gamma}} q^{\kappa_1(S(\Lambda_{n+1}),X)} &=
		\prod_{\gamma\in\partial_X}\rho(\gamma) \;.
\end{align}
On the other hand, if $X\in\mathcal{X}_{\Lambda_{n+1}}^\order$, the outermost
non-singleton connected component of $(S(\Lambda_{n+1}),X)$ has no
associated disorder contour in $\partial_X$, thus
\begin{align}
	(q+r)^{-\frac{1}{4}\sum_{\gamma\in\partial_X}\abs{\gamma}} q^{\kappa_1(S(\Lambda_{n+1}),X)} &=
		q\prod_{\gamma\in\partial_X}\rho(\gamma) \;.
\end{align}

In conclusion, summing over all configurations, we obtain
\begin{align}
	Z^{\RC.\disorder}(\Lambda_{n+1}) &=
		(q+r)^{\frac{\abs{\xpartial B(\Lambda_{n+1})}}{4}}
		\sum_{X\in\mathcal{X}_{\Lambda_{n+1}}^\disorder}
		\xe^{
			-\abs{X}\cdot e(\xB)
			-\abs{B(\Lambda_{n+1})\setminus X}\cdot e(\varnothing)
		}
		\prod_{\gamma\in\partial}\rho(\gamma) \;, \\
	Z^{\RC.\order}(\Lambda_{n+1}) &=
		q\sum_{X\in\mathcal{X}_{\Lambda_{n+1}}^\order}
		\xe^{
			-\abs{X}\cdot e(\xB)
			-\abs{B(\Lambda_{n+1})\setminus X}\cdot e(\varnothing)
		}
		\prod_{\gamma\in\partial}\rho(\gamma) \;.
\end{align}

We remark that the definitions~(\ref{eq:rc-partition-function:disordered})
and~(\ref{eq:rc-partition-function:ordered}) may also be extended
to general finite volumes of the lattice in a compatible fashion.
Namely, if for a volume $\Lambda$, we define 
\begin{align}
	\mathcal{X}_\Lambda^\disorder &\IsDef \left\{
		X\subseteq B(\Lambda): \xpartial X\in\Delta_\Lambda^\disorder
	\right\} \;, \\
	\mathcal{X}_\Lambda^\order &\IsDef \left\{
		X\subseteq B(\Lambda): \xpartial\left(X\cup B(\Lambda)^\complement\right)\in\Delta_\Lambda^\order
	\right\} \;,
\end{align}
the compatibility of the definitions can be verified similarly.

\subsection{A Few Properties of the Biased Random-Cluster Model}
\label{apx:rc:properties}
Much information about the standard Potts model can be detected
by studying the corresponding random-cluster model (see~\cite{GeoHagMae00,Gri06,Gri10}).
Many of the properties of the standard random-cluster model
can be extended to the biased random-cluster model.
These properties, in turn, can be used, in a similar fashion,
to obtain information about the Potts model with invisible colours.
In this appendix, we briefly sketch some of these properties
that we exploit in the proof of the main theorem.
The proofs are straightforward modifications of the
standard case, which can be found in~\cite{GeoHagMae00,Gri06,Gri10}.

Let $\xG=(S,B)$ be a finite graph.
The configurations of the biased random-cluster model
on~$\xG$ can be ordered according to the inclusion ordering.
A configuration $X\subseteq B$ is considered to be
\emph{smaller than or equal to}
a configuration $Y\subseteq B$, if and only if
every bond present in~$X$ is also present in~$Y$.
An event $\mathcal{E}\subseteq 2^B$ is \emph{increasing}
if for every two configurations $X$ and $Y$
such that $X\in\mathcal{E}$ and $Y\supseteq X$,
we have $Y\in\mathcal{E}$.
We say that a probability distribution
$\nu$ is \emph{positively correlated},
if $\nu(\mathcal{E}_1\cap \mathcal{E}_2) \geq \nu(\mathcal{E}_1)\nu(\mathcal{E}_2)$.
The inclusion ordering on the configuration space~$2^B$ induces an ordering
on the space of probability distributions on $2^B$.
If $\nu_1$ and $\nu_2$ are probability distributions on $2^B$,
we write $\nu_1\preceq\nu_2$ if $\nu_1(\mathcal{E})\leq\nu_2(\mathcal{E})$
for every increasing event $\mathcal{E}\subseteq 2^B$.
In this case, we say that $\nu_1$ is
\emph{stochastically dominated} by $\nu_2$.

For every $0<p<1$, $q\geq 1$ and $r\geq 0$,
the $r$-biased random-cluster distribution $\phi_{p,q,r}$
on~$\xG$ is positively correlated.
This follows from the Fortuin-Kasteleyn-Ginibre theorem
(Theorem~4.11 of~\cite{GeoHagMae00}; see Corollary~6.7).
It follows that,
if $\mathcal{E}$ is an increasing (resp., decreasing)
event with $\phi_{p,q,r}(\mathcal{E})>0$,
then
the conditional distribution
$\phi_{p,q,r}(\cdot\,|\,\mathcal{E})$
stochastically dominates
(resp., is dominated by) $\phi_{p,q,r}$.
Furthermore, if $0<p_1\leq p_2<1$,
it follows from Holley's theorem
(Theorem~4.8 of~\cite{GeoHagMae00})
that $\phi_{p_1,q,r}\preceq\phi_{p_2,q,r}$
(see Corollary~6.7 in~\cite{GeoHagMae00}).

Let $\Lambda$ be a finite volume in the lattice
and $\phi_\Lambda$ the biased random-cluster
distribution on $\Lambda$ (as a graph, without boundary condition).
Let us denote by $\phi^\order_\Lambda$ and $\phi^\disorder_\Lambda$
the biased random-cluster distributions on $\Lambda$
with ordered and disordered boundary conditions, respectively.
By an application of the positive correlation property
of $\phi_\Lambda$ we have
\begin{align}
	\phi^\disorder_\Lambda \preceq \phi_\Lambda \preceq \phi^\order_\Lambda \;.
\end{align}
Moreover, by a further application of
the Fortuin-Kasteleyn-Ginibre theorem,
the distributions
$\phi^\disorder_\Lambda$ and $\phi^\order_\Lambda$
are also positively correlated.
This implies that
if~$\Lambda_1$ is a sub-volume of~$\Lambda_2$, we have
\begin{align}
	\phi^\disorder_{\Lambda_1} \preceq \phi^\disorder_{\Lambda_2}
	\qquad\text{ and }\qquad
	\phi^\order_{\Lambda_1} \succeq \phi^\order_{\Lambda_2} \;.
\end{align}
As in Lemma~6.8 of~\cite{GeoHagMae00},
this implies that the weak limits
\begin{align}
	\phi^\disorder \IsDef\lim_{\Lambda\uparrow\xL}\phi^\disorder_\Lambda
	\qquad\text{ and }\qquad
	\phi^\order \IsDef\lim_{\Lambda\uparrow\xL}\phi^\order_\Lambda
\end{align}
exist, where the limit $\Lambda\uparrow\xL$ can be taken along
the net of all finite volumes in $\xL$ with the inclusion ordering.

To emphasize the dependence on parameter $p$,
let us write $\phi^\order_{\Lambda,p}$ and $\phi^\disorder_{\Lambda,p}$
for the biased random-cluster distributions with parameter $p$.
Then, by an application of Holley's theorem,
if $0<p_1\leq p_2<1$, we have
\begin{align}
	\phi^\disorder_{\Lambda,p_1} \preceq \phi^\disorder_{\Lambda,p_2}
	\qquad\text{ and }\qquad
	\phi^\order_{\Lambda,p_1} \preceq \phi^\order_{\Lambda,p_2} \;.
\end{align}
Let $i\operc\infty$ denote the event that 
there exists an infinite path of bonds passing through site~$i$
(``order'' \emph{percolates} from site $i$ to infinity).
The latter stochastic inequalities imply that the probabilities
$\phi^\order_p(i\operc\infty)$ and $\phi^\disorder_p(i\operc\infty)$
are increasing in $p$.
This monotonicity assures the existence of \emph{critical probabilities}
$0\leq p^\order_\critical,p^\disorder_\critical\leq 1$ such that
for every $p<p^\order_\critical$
we have $\phi^\order_p(i\operc\infty) = 0$
while for every $p>p^\order_\critical$
we have $\phi^\order_p(i\operc\infty) > 0$,
and similarly for $p^\disorder_\critical$.
The critical probabilities are given by
\begin{align}
	p^\order_\critical &\IsDef
		\sup\left\{p: \phi^\order_p(i\operc\infty) = 0 \right\} \;, \\
	p^\disorder_\critical &\IsDef
		\sup\left\{p: \phi^\disorder_p(i\operc\infty) = 0 \right\} \;.
\end{align}

It turns out that the two critical probabilities are actually the same,
hence we define $p_\critical\IsDef p^\order_\critical = p^\disorder_\critical$.
This follows from the fact that the probability measures
$\phi^\order_p$ and $\phi^\disorder_p$ may differ for
at most countably many values of $p$.
The latter can be proved in a very similar manner
as done in Theorem~8.17 of~\cite{Gri10}
for the standard random-cluster measures.

By means of the coupling, many properties of the $(q,r)$-Potts measures
can be derived from the corresponding $r$-biased random-cluster measures.
For instance, one can show that the thermodynamic limits $\mu^k$ and
$\mu^\free$ do not depend on the sequence $\left\{\Lambda_n\right\}_n$ of volumes
along which the limits are taken.
In fact, the limits
\begin{align}
	\mu^k=\lim_{\Lambda\uparrow\xL}\mu^k_\Lambda
	\qquad\text{ and }\qquad
	\mu^\free=\lim_{\Lambda\uparrow\xL}\mu^\free_\Lambda
\end{align}
can be taken along the net of all finite volumes in $\xL$.
In particular, this implies the translation-invariance of $\mu^k$ and $\mu^\free$.
The proofs are similar to those of the standard case (Proposition~6.9 of~\cite{GeoHagMae00}).

Uniqueness and multiplicity of the $(q,r)$-Potts measures are related to
the percolation of ``order'' in the $r$-biased random-cluster model.
More specifically,
if $\phi^\order(\text{order percolate})=0$, then the $(q,r)$-Potts model
admits a unique Gibbs measure (as in Theorem~6.10 in~\cite{GeoHagMae00}).
On the other hand, if
\begin{align}
	\phi^\order\left(
		\begin{array}{c}
			\text{$\exists$ unique infinite sea of order} \\
			\text{with finite islands of disorder}
		\end{array}
	\right)=1 \;,
\end{align}
then the measures $\mu^1,\mu^2,\ldots,\mu^q$ are distinct
and satisfy
\begin{align}
	\mu^k\left(
		\begin{array}{c}
			\text{$\exists$ unique infinite uni-colour sea,} \\
			\text{which has colour $k$}
		\end{array}
	\right)=1 \;.
\end{align}
(Recall that a ``sea'' of order in a random-cluster configuration is simply
a connected component of bonds.  A ``uni-colour sea'' in a Potts configuration
refers to a maximal connected subgraph of the lattice
induced by sites having the same colour.) 

The latter claim is a consequence of the existence of a coupling
between $\phi^\order$ and $\mu^k$ (for visible~$k$),
which can be constructed as follows:
\begin{enumerate}[ i)]
	\item We first sample a bond configuration $X$ according to $\phi^\order$.
	\item For every site $i$ that $i\operc\infty$ in $(\xS,X)$, we colour $i$ with colour $k$.
	\item For every finite non-singleton connected component of $(\xS,X)$, we choose a random visible colour uniformly
		among the $q$ possibilities and colour all the sites in the component with this colour.
	\item For every isolated site $i$ in $(\xS,X)$, we choose a random colour uniformly among the $q+r$ possible colours.
\end{enumerate}
The fact that the marginal of this construction on spin configurations is $\mu^k$
is parallel to Theorem~4.91 in~\cite{Gri06} and has a similar proof.
An analogous coupling exists between $\phi^\disorder$ and $\mu^\free$.

\bibliographystyle{plain}
\bibliography{files/bibliography}

\end{document}